\newtheorem{theorem}{Theorem}[section]
\newtheorem{lemma}[theorem]{Lemma}
\newtheorem{corollary}[theorem]{Corollary}
\newtheorem{Definition}[theorem]{Definition}
\newtheorem{Example}[theorem]{Example}
\newtheorem{Remark}[theorem]{Remark}
\newenvironment{example}{\begin{Example}\begin{em}}{\end{em}\end{Example}}
\newenvironment{remark}{\begin{Remark}\begin{em}}{\end{em}\end{Remark}}
\newproof{proof}{Proof}
\newcommand{\red}[1]{#1\xspace}
\newcommand{\markRed}{}
\def\eqref#1{(\ref{#1})}
\def\defeq{\stackrel{\mathrm{def}}{=}}
\def\tuple#1{\langle#1\rangle}
\newcommand{\mand}{\sqcap}
\newcommand{\mor}{\sqcup}
\newcommand{\V}{\forall}
\newcommand{\E}{\exists}
\newcommand{\mL}{\mathcal{L}}
\newcommand{\mLP}{\mathcal{L}_\Phi}
\newcommand{\mLPp}{\mathcal{L}_\Phi^0}
\newcommand{\mA}{\mathcal{A}}
\newcommand{\mI}{\mathcal{I}}
\newcommand{\mIp}{{\mathcal{I}'\!}}
\newcommand{\mIdp}{{\mathcal{I}''\!}}
\newcommand{\ALCreg}{$\mathcal{ALC}_{reg}$\xspace}
\newcommand{\CN}{\mathit{CN}}
\newcommand{\RN}{\mathit{RN}}
\newcommand{\IN}{\mathit{IN}}
\newcommand{\Self}{\mathtt{Self}}
\newcommand{\myend}{\mbox{}\hfill{\small$\Box$}}
\newcommand{\fand}{\varotimes}
\newcommand{\fto}{\Rightarrow}
\newcommand{\fequiv}{\Leftrightarrow}
\newcommand{\cnv}[1]{{#1}^{-1}}
\newcommand{\mF}{\mathcal{F}}
\newcommand{\elements}{\mathit{elements}}
\newcommand{\degree}{\mathit{degree}}
\newcommand{\subblocks}{\mathit{subblocks}}
\newcommand{\blocks}{\mathit{blocks}}
\newcommand{\parent}{\mathit{parent}}
\newcommand{\repr}{\mathit{repr}}
\newcommand{\Null}{\mathit{null}}
\newcommand{\findBlock}{\textit{find\_block}}
\newcommand{\FLG}{FLG\xspace}
\newcommand{\SV}{\Sigma_V}
\newcommand{\SE}{\Sigma_E}
\newcommand{\support}{\mathit{support}}
\newcommand{\bB}{\mathbb{B}}
\newcommand{\MinFInt}{\mbox{$\mathsf{ApproximateMinimization}$}\xspace}
\newcommand{\fALC}{\mbox{$f\!\mathcal{ALC}$}\xspace}
\newcommand{\fALCP}{\mbox{$f\!\mathcal{ALC}_\Phi$}\xspace}
\newcommand{\fALCreg}{\mbox{$f\!\mathcal{ALC}_{reg}$}\xspace}
\newcommand{\fALCregD}{\mbox{$f\!\mathcal{ALC}_{reg}^\triangle$}\xspace}
\newcommand{\fALCregU}{\mbox{$f\!\mathcal{ALC}_{reg}^U$}\xspace}
\newcommand{\mLPD}{\mathcal{L}_\Phi^\triangle}
\newcommand{\mLPU}{\mathcal{L}_\Phi^U}
\newcommand{\FfA}{FfA\xspace}
\newcommand{\FfAs}{FfAs\xspace}
\journal{arXiv}
\begin{document}
\sloppy
	
\begin{frontmatter}		
		
\title{Approximate minimization of interpretations in fuzzy description logics under the G\"odel semantics}

\author{Linh Anh Nguyen}
\ead{nguyen@mimuw.edu.pl}
\ead{nalinh@ntt.edu.vn}				
\address{Institute of Informatics, University of Warsaw, Banacha 2, 02-097 Warsaw, Poland}
\address{Faculty of Information Technology, Nguyen Tat Thanh University, Ho Chi Minh City, Vietnam}

\begin{abstract}
The problem of minimizing fuzzy interpretations in fuzzy description logics (FDLs) is important both theoretically and practically. For instance, fuzzy or weighted social networks can be modeled as fuzzy interpretations, where individuals represent actors and roles capture interactions. Minimizing such interpretations yields more compact representations, which can significantly improve the efficiency of reasoning and analysis tasks in knowledge-based systems.
We present the {\em first} algorithm that minimizes a finite fuzzy interpretation while preserving fuzzy concept assertions in FDLs without the Baaz projection operator and the universal role, under the G\"odel semantics. The considered class of FDLs ranges from the sublogic of $\fALC$ without the union operator and universal restriction to the FDL that extends $\fALCreg$ with inverse roles and nominals. 
Our algorithm is given in an extended form that supports approximate preservation: it minimizes a finite fuzzy interpretation $\mI$ while preserving fuzzy concept assertions up to a degree~$\gamma \in (0,1]$. 
Its time complexity is \mbox{$O((m\log{l} + n)\log{n})$}, where $n$ is the size of the domain of~$\mI$, $m$ is the number of nonzero instances of atomic roles in~$\mI$, and $l$ is the number of distinct fuzzy values used in such instances plus~2. 
\red{Methodologically, our approach fundamentally differs from existing ones, as it avoids quotient constructions traditionally employed for minimizing fuzzy interpretations and fuzzy automata.}
\end{abstract}

\begin{keyword}
minimization \sep approximation \sep fuzzy description logics \sep fuzzy bisimulation \sep the G\"odel semantics
\end{keyword}

\end{frontmatter}


\section{Introduction}
\label{section:intro}

Minimization is a fundamental problem across many areas of computer science (see, e.g., \cite{KamedaW70,Hopcroft71,DBLP:journals/tcs/CardonC82,PaigeT87,Jiang1993,IlieY03,DBLP:journals/socnet/Ziberna07,Doreian2009,BSDL-INS,BIANCHINI2024114621}). In automata theory, minimizing finite automata is essential for optimizing computational resources and simplifying verification tasks. Similarly, in concurrency theory and formal methods, the minimization of labeled transition systems and Kripke models supports model checking and system equivalence analysis. 
The general goal 
is to find a simpler structure that preserves the essential semantic or behavioral properties of the original model.

Description logics (DLs) are a family of logic-based formalisms designed to represent and reason about structured knowledge~\cite{dlbook}, particularly in the context of ontologies. They form the logical foundation of the Web Ontology Language (OWL) and support key reasoning tasks such as concept subsumption, instance checking, and consistency verification. To model uncertainty and vagueness in real-world domains, various fuzzy extensions of DLs have been proposed~\cite{Straccia98,BobilloCEGPS2015,BorgwardtP17b,ai/BorgwardtDP15}. Fuzzy description logics (FDLs) allow concepts and roles to be interpreted in terms of degrees of truth.

The problem of minimizing fuzzy interpretations in FDLs is important both theoretically and practically. For instance, fuzzy or weighted social networks can be modeled as fuzzy interpretations, where individuals represent actors and roles capture interactions. Minimizing such interpretations yields more compact representations, which can significantly improve the efficiency of reasoning and analysis tasks in knowledge-based systems.
However, while the problem of minimizing or reducing finite fuzzy automata has been widely studied \cite{DBLP:journals/fss/YangL20,BASAK2002223,Belohlvek2009OnAM,DEMENDIVILGRAU2024109108,Halamish2015,LITFS2015,LI20071423,ShamsizadehZG24,SCB.18,SCI.14,StanimirovicMC.22}, the problem of minimizing finite fuzzy interpretations in FDLs has received relatively little attention.\footnote{An extended discussion on the related literature is given in Section~\ref{section: related work}.} Only the works \cite{minimization-by-fBS,DBLP:journals/fss/Nguyen24} have addressed this problem directly, under restrictive assumptions. The following subsection discuss them using a concrete example, which provides motivation for the current work. 

\subsection{Motivating example}
\label{sec: motivation}

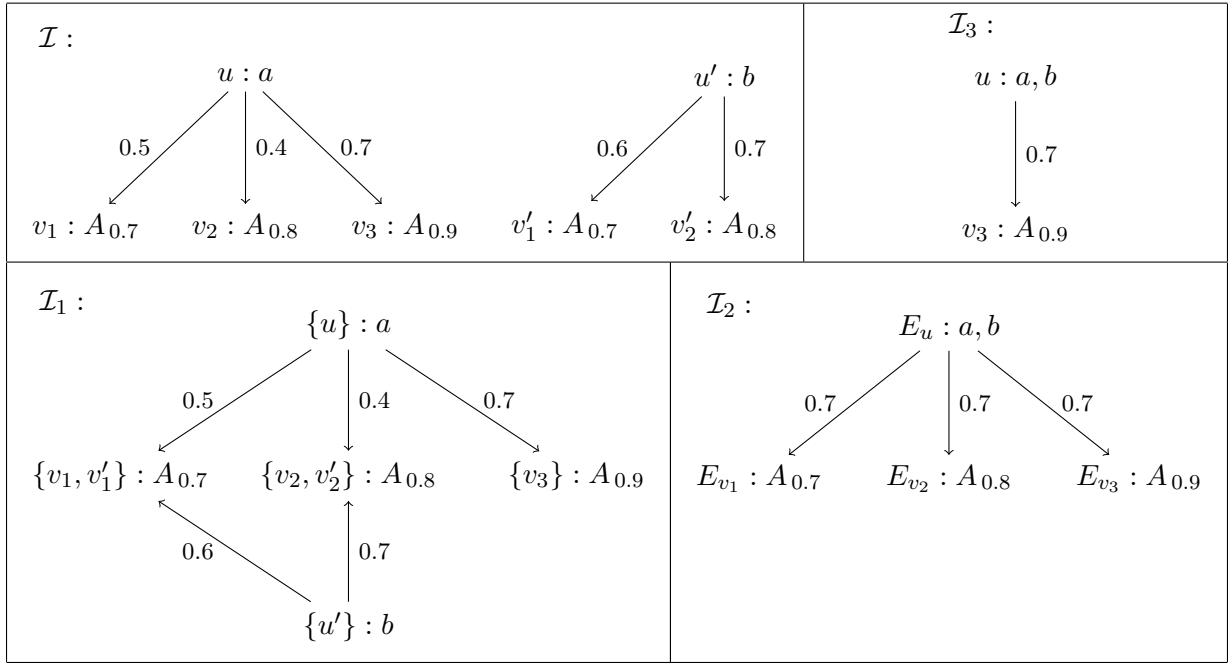
\begin{figure}[t]
\begin{center}
\begin{tabular}{|c|c|c|}
\hline
\multicolumn{2}{|c|}{
    \begin{tikzpicture}
	\node (po) {};
	\node (u) [node distance=0.7cm, below of=po] {$u: a$};
	\node (ub) [node distance=2.0cm, below of=u] {$v_2:A_{\,0.8}$};
	\node (v) [node distance=2.1cm, left of=ub] {$v_1:A_{\,0.7}$};
	\node (w) [node distance=2.1cm, right of=ub] {$v_3:A_{\,0.9}$};
    \node (I) [node distance=2.5cm, above of=v] {$\mI:\quad\quad$};
	\draw[->] (u) to node [left, xshift=-3pt]{\footnotesize{0.5}} (v);	
	\draw[->] (u) to node [right]{\footnotesize{0.4}} (ub);
	\draw[->] (u) to node [right, xshift=3pt]{\footnotesize{0.7}} (w);
	\node (vp) [node distance=2.1cm, right of=w] {$v'_1:A_{\,0.7}$};
	\node (upb) [node distance=2.1cm, right of=vp] {$v'_2:A_{\,0.8}$};
	\node (up) [node distance=2.0cm, above of=upb] {$u': b$};
	\draw[->] (up) to node [left, xshift=-3pt]{\footnotesize{0.6}} (vp);
	\draw[->] (up) to node [right]{\footnotesize{0.7}} (upb);
    \end{tikzpicture}
} 
& 
    \begin{tikzpicture}
	\node (po) {$\mI_3:\qquad\quad$};
	\node (u) [node distance=0.7cm, below of=po] {$u: a, b$};
	\node (ub) [node distance=2.0cm, below of=u] {$v_3:A_{\,0.9}$};
	\draw[->] (u) to node [right]{\footnotesize{0.7}} (ub);
    \end{tikzpicture}
\\
\hline
    \begin{tikzpicture}
	\node (po) {};
	\node (u) [node distance=0.7cm, below of=po] {$\{u\}: a$};
	\node (ub) [node distance=2.0cm, below of=u] {$\{v_2,v'_2\}:A_{\,0.8}$};
	\node (v) [node distance=3.0cm, left of=ub] {$\{v_1,v'_1\}:A_{\,0.7}$};
	\node (w) [node distance=3.0cm, right of=ub] {$\{v_3\}:A_{\,0.9}$};
    \node (I) [node distance=2.3cm, above of=v] {$\mI_1:\qquad\quad\quad$};
	\draw[->] (u) to node [left, xshift=-4pt]{\footnotesize{0.5}} (v);	
	\draw[->] (u) to node [right]{\footnotesize{0.4}} (ub);
	\draw[->] (u) to node [right, xshift=4pt]{\footnotesize{0.7}} (w);
	\node (up) [node distance=2.0cm, below of=ub] {$\{u'\}: b$};
	\draw[->] (up) to node [left, xshift=-4pt]{\footnotesize{0.6}} (v);
	\draw[->] (up) to node [right]{\footnotesize{0.7}} (ub);
    \end{tikzpicture}
&
\multicolumn{2}{|c|}{
    \begin{tikzpicture}
	\node (po) {};
	\node (u) [node distance=0.7cm, below of=po] {$E_u: a, b$};
	\node (ub) [node distance=2.0cm, below of=u] {$E_{v_2}:A_{\,0.8}$};
	\node (v) [node distance=2.5cm, left of=ub] {$E_{v_1}:A_{\,0.7}$};
	\node (w) [node distance=2.5cm, right of=ub] {$E_{v_3}:A_{\,0.9}$};
    \node (I) [node distance=2.3cm, above of=v] {$\mI_2:\quad\quad$};
	\draw
	(u) edge[->,left] node[xshift=-3pt]{\footnotesize{0.7}} (v)	
	(u) edge[->,right] node{\footnotesize{0.7}} (ub)
	(u) edge[->,right] node[xshift=3pt]{\footnotesize{0.7}} (w);
	\node (up) [node distance=2.16cm, below of=ub] {};
    \end{tikzpicture}
}    
\\ 
\hline
\end{tabular}
\caption{Fuzzy interpretations used in Section~\ref{sec: motivation}.\label{fig: JHDKJ}}
\end{center}
\end{figure}

Let the signature consist of a concept name $A$, a role name $r$ and individual names $a$ and $b$. Consider the fuzzy interpretation $\mI$ illustrated in Figure~\ref{fig: JHDKJ} and specified below (for notation and formal definitions, we refer the reader to Section~\ref{section: prel}):
\begin{itemize}
\item $\Delta^\mI = \{u,u',v_1,v_2,v_3,v'_1,v'_2\}$,\ $a^\mI = u$,\ $b^\mI = u'$, 
\item $A^\mI = \{v_1\!:\!0.7,\ v_2\!:\!0.8,\ v_3\!:\!0.9,\ v'_1\!:\!0.7,\ v'_2\!:\!0.8\}$, 
\item $r^\mI = \{\tuple{u,v_1}\!:\!0.5,\ \tuple{u,v_2}\!:\!0.4,\ \tuple{u,v_3}\!:\!0.7,\ \tuple{u',v'_1}\!:\!0.6,\ \tuple{u',v'_2}\!:\!0.7 \}$.
\end{itemize}
It extends the one from~\cite[Example~5]{minimization-by-fBS} by dealing with the additional individual name~$b$.

Let \fALCreg denote the fuzzy extension of the DL \ALCreg, defined as $\mLP$ with $\Phi = \emptyset$ in~\cite{minimization-by-fBS} and Section~\ref{section: prel}. Let \fALCregD denote the FDL that extends \fALCreg with the Baaz projection operator~$\triangle$; it is the same as \fALCP with $\Phi = \{\triangle, \circ, \mor_r, *, ?\}$ defined in~\cite{DBLP:journals/fss/Nguyen24}. 
Furthermore, let \fALCregU denote the FDL that extends \fALCreg with the universal role, defined as $\mLP$ with $\Phi = \{U\}$ in~\cite{minimization-by-fBS}. 
Denote $\Phi_0 = \emptyset$, $\Phi_1 = \{\triangle, \circ, \mor_r, *, ?\}$, $\Phi'_1 = \{\triangle\}$ and $\Phi_2 = \{U\}$.

The greatest crisp $\Phi_1$-auto-bisimulation of~$\mI$, defined as in~\cite{DBLP:journals/fss/Nguyen24}, is the smallest crisp equivalence relation on $\Delta^\mI$ that contains the pairs $\tuple{v_1, v'_1}$ and $\tuple{v_2, v'_2}$. It is also the greatest crisp $\Phi'_1$-auto-bisimulation of~$\mI$ (as the role constructors $\circ, \mor_r, *, ?$ are ``safe'' for bisimulation). 
The quotient fuzzy interpretation of $\mI$ w.r.t.\ this equivalence relation, defined as in~\cite{DBLP:journals/fss/Nguyen24}, is the fuzzy interpretation $\mI_1$ illustrated in Figure~\ref{fig: JHDKJ}. According to~\cite[Theorem~3.6]{DBLP:journals/fss/Nguyen24}, $\mI_1$ is a minimal fuzzy interpretation that validates the same set of fuzzy concept assertions (respectively, fuzzy TBox axioms) in \fALCregD as~$\mI$ over any linear and complete residuated lattice. In other words, $\mI_1$ is a minimal reduction of $\mI$ that preserves fuzzy concept assertions (respectively, fuzzy TBox axioms) in \fALCregD over any linear and complete residuated lattice.

In the case $\Phi \in \{\Phi_0, \Phi_2\}$, the greatest fuzzy $\Phi$-auto-bisimulation of~$\mI$ under the G\"odel semantics, denoted by $E$ and defined as in~\cite{minimization-by-fBS}, is the reflexive-symmetric closure of $\{\tuple{u,u'}\!:\!1$, $\tuple{v_1,v'_1}\!:\!1$, $\tuple{v_1,v'_2}\!:\!0.7$, $\tuple{v_2,v'_1}\!:\!0.7$, $\tuple{v_2,v'_2}\!:\!1$, $\tuple{v_3,v'_1}\!:\!0.7$, $\tuple{v_3,v'_2}\!:\!0.8\}$. The quotient fuzzy interpretation of $\mI$ w.r.t.\ this fuzzy equivalence relation under the G\"odel semantics, defined as in~\cite{minimization-by-fBS}, is the fuzzy interpretation $\mI_2$ illustrated in Figure~\ref{fig: JHDKJ} (cf.~\cite[Example~5]{minimization-by-fBS}). According to~\cite[Theorem~5]{minimization-by-fBS}, $\mI_2$ is a minimal fuzzy interpretation that validates the same set of fuzzy concept assertions in \fALCregU (respectively, fuzzy TBox axioms in \fALCregU, as well as fuzzy TBox axioms in \fALCreg) as~$\mI$ under the G\"odel semantics. In other words, $\mI_2$ is a minimal reduction of $\mI$ that preserves fuzzy concept assertions in \fALCregU (respectively, fuzzy TBox axioms in \fALCregU, as well as fuzzy TBox axioms in \fALCreg) under the G\"odel semantics.

Notice that $|\Delta^\mI| = 7$. 
Concentrating on the preservation of fuzzy concept assertions, note that $\mI_1$ is a minimal reduction of $\mI$ that preserves fuzzy concept assertions in \fALCregD (over any linear and complete residuated lattice), with $|\Delta^{\mI_1}| = 5$, and $\mI_2$ is a minimal reduction of $\mI$ that preserves fuzzy concept assertions in \fALCregU under the G\"odel semantics, with $|\Delta^{\mI_2}| = 4$. 
The former is the quotient fuzzy interpretation of $\mI$ w.r.t.\ the greatest crisp $\Phi'_1$-auto-bisimulation of~$\mI$, while the latter is the quotient fuzzy interpretation of $\mI$ w.r.t.\ the greatest fuzzy $\Phi_0$-auto-bisimulation of~$\mI$. Note that $\Phi'_1 = \{\triangle\}$ and a crisp $\{\triangle\}$-auto-bisimulation is a ``basic crisp auto-bisimulation'' (without additional conditions) \cite{DBLP:journals/tfs/NguyenN23,DBLP:journals/fss/Nguyen24}, while $\Phi_0 = \emptyset$ and a fuzzy $\emptyset$-auto-bisimulation is a ``basic fuzzy auto-bisimulation'' (without additional conditions) \cite{FSS2020,minimization-by-fBS}. 

Both the works \cite{minimization-by-fBS} and \cite{DBLP:journals/fss/Nguyen24} directly exploit the quotient constructions to reduce a given finite fuzzy interpretation, but they only reach a minimal reduction that preserves fuzzy concept assertions in a FDL that uses the Baaz projection operator~$\triangle$ or the universal role~$U$, respectively. 
The problem of minimizing a finite fuzzy interpretation while preserving fuzzy concept assertions in FDLs that lack these two constructors under the G\"odel semantics -- e.g., in \fALCreg -- was posed as a challenging open problem in~\cite{minimization-by-fBS}, and remained unresolved until now. The present work addresses and resolves this problem. As \fALCreg is a sublogic of \fALCregD and \fALCregU, minimizing a finite fuzzy interpretation while preserving fuzzy concept assertions only in \fALCreg may result in a smaller fuzzy interpretation than in \fALCregD and \fALCregU. In fact, as shown later in Example~\ref{example: JHRHS}, the fuzzy interpretation $\mI_3$ illustrated in Figure~\ref{fig: JHDKJ}, with $|\Delta^{\mI_3}| = 2$, is a minimal reduction of $\mI$ that preserves fuzzy concept assertions in \fALCreg under the G\"odel semantics. 

\subsection{Our contributions}

In this work, we present the {\em first} algorithm that minimizes a finite fuzzy interpretation while preserving fuzzy concept assertions in FDLs without the Baaz projection operator and the universal role, under the G\"odel semantics. The considered class of FDLs ranges from the sublogic of $\fALC$ without the union operator and universal restriction to the FDL that extends $\fALCreg$ with inverse roles and nominals. Our algorithm is provided in an extended form that supports approximate preservation: it minimizes a finite fuzzy interpretation $\mI$ while preserving fuzzy concept assertions up to a degree~$\gamma \in (0,1]$. The algorithm runs in \mbox{$O((m\log{l} + n)\log{n})$} time, where $n$ is the size of the domain of~$\mI$, $m$ is the number of nonzero instances of atomic roles in~$\mI$, and $l$ is the number of distinct fuzzy values used in such instances plus~2.

Methodologically, our algorithm exploits the compact fuzzy partition corresponding to the greatest fuzzy auto-bisimulation of~$\mI$ and constructs the reduced interpretation by selecting representatives of the fuzzy blocks of that partition. This strategy fundamentally differs from existing approaches, which rely on quotient constructions for minimizing fuzzy interpretations in FDLs and fuzzy automata.

\subsection{Structure of the work}

The remainder of this work is structured as follows. Section~\ref{section: prel} provides preliminaries on fuzzy sets, fuzzy relations, compact fuzzy partitions, FDLs, fuzzy bisimulations, and fuzzy labeled graphs. In Section~\ref{section: CCFP}, as preparation for the main result, we present an auxiliary algorithm for constructing the compact fuzzy partition corresponding to the greatest fuzzy $\Phi$-auto-bisimulation of a given finite fuzzy interpretation~$\mI$, where $\Phi$ specifies the considered FDL. Section~\ref{section: main} presents our main algorithm (as mentioned in the previous subsection), proves its correctness, and analyzes its complexity. Section~\ref{section: experiments} reports experimental results. Section~\ref{section: related work} discusses related work. Finally, Section~\ref{section: conc} provides a discussion and conclusions. In addition, Appendix~\ref{appendix A} presents further examples.


\section{Preliminaries}
\label{section: prel}

In this section, we recall basic notions and notation needed for this work, based on \cite{FSS2020,minimization-by-fBS,DBLP:journals/isci/Nguyen23,NGUYEN2025109194}. 

\subsection{Fuzzy sets and relations}

In this work, we consider FDLs under the G\"odel semantics, using the G\"odel t-norm and residuum defined as follows:
\[ a \fand b \defeq \min\{a, b\},\qquad\qquad 
   (a \fto b) \defeq \left\{\!\!\!
   		\begin{array}{ll}
   		1 & \!\textrm{if } a \leq b, \\ 
   		b & \!\textrm{otherwise},
   		\end{array}
        \right.
\]
for $a, b \in [0,1]$. 
The G\"odel t-norm biresiduum is defined as follows:
\[
    (a \fequiv b) \defeq \min\{(a \fto b), (b \fto a)\}, 
\]
and we have 
\[
	(a \fequiv b) = \left\{\!\!\!
       		\begin{array}{ll}
       		1 & \!\textrm{if } a = b, \\ 
       		\min\{a,b\} & \!\textrm{otherwise},
       		\end{array}
            \right.
\]
for $a, b \in [0,1]$. 
Given $D \subseteq [0,1]$, we define $\bigotimes\!D \defeq \inf D$. 

Let $A$, $B$, and $C$ be non-empty sets. A {\em fuzzy subset} of $A$, also called a {\em fuzzy set}, is a function \mbox{$\varphi : A \to [0,1]$}. By $\mF(A)$ we denote the family of all fuzzy subsets of~$A$. Given $\varphi, \psi \in \mF(A)$, we write $\varphi \leq \psi$ to mean that $\varphi(a) \leq \psi(a)$ for all $a \in A$. 
For $\varphi \in \mF(A)$, define the {\em support} of~$\varphi$ by $\support(\varphi) = \{a \in A \mid \varphi(a) > 0\}$. For $\{a_1, \ldots, a_k\} \subseteq A$ and $\{d_1, \ldots, d_k\} \subset (0,1]$, we write $\{a_1\!:\!d_1, \ldots, a_k\!:\!d_k\}$ to denote the fuzzy set $\varphi \in \mF(A)$ such that $\support(\varphi) = \{a_1, \ldots, a_k\}$ and $\varphi(a_i) = d_i$ for $1 \leq i \leq k$.

A \emph{fuzzy relation} between $A$ and $B$ is a fuzzy subset of $A \times B$. The \emph{inverse} of a fuzzy relation $\varphi \in \mF(A \times B)$ is the fuzzy relation $\cnv{\varphi} \in \mF(B \times A)$ defined by $\cnv{\varphi}(b,a) = \varphi(a,b)$ for $a \in A$ and $b \in B$. The \emph{composition} of fuzzy relations $\varphi \in \mF(A \times B)$ and $\psi \in \mF(B \times C)$ is the fuzzy relation $\varphi \circ \psi \in \mF(A \times C)$ given by
\[
(\varphi \circ \psi)(a,c) = \sup \{\varphi(a,b) \fand \psi(b,c) \mid b \in B\}
\]
for all $a \in A$ and $c \in C$. 
A fuzzy relation $\varphi \in \mF(A \times A)$ is called a {\em fuzzy relation on}~$A$. 

The {\em identity relation} $id_A$ on $A$ is defined by
\[
id_A(a,b) = 
\begin{cases}
1 & \text{if } a = b, \\
0 & \text{otherwise}.
\end{cases}
\]
A fuzzy relation $\varphi$ on $A$ is called
\begin{itemize}
\item \emph{reflexive} if $id_A \leq \varphi$,
\item \emph{symmetric} if $\cnv{\varphi} = \varphi$,
\item \emph{transitive} if $\varphi \circ \varphi \leq \varphi$.
\end{itemize}
A \emph{fuzzy equivalence} on $A$ is a fuzzy relation $\varphi$ on $A$ that satisfies the three above properties. 


\subsection{Compact fuzzy partitions}

Let $A$ denote a finite set and $\varphi$ a fuzzy equivalence on $A$. The {\em compact fuzzy partition corresponding to $\varphi$} (under the G\"odel semantics) \cite{DBLP:journals/isci/Nguyen23,NGUYEN2025109194} is the data structure $S$ specified as follows:
\begin{itemize}
\item if $\varphi(a,b) = 1$ for all $a,b \in A$, then $S$ has attributes $S.\elements = A$ and $S.\degree = 1$, and we call it a {\em crisp block} and denote it by $A_1$;
\item else:
    \begin{itemize}
	\item denote $d = \inf \{\varphi(a,b) \mid a,b \in A\}$;
	\item let $\sim$ denote the equivalence relation on $A$ defined by: $a \sim b$ iff $\varphi(a,b) > d$;
	\item let $\{A_1,\ldots,A_k\}$ be the partition of $A$ that corresponds to~$\sim$;
	\item for $1 \leq i \leq k$, denote by $\varphi_i$ the restriction of $\varphi$ to $A_i \times A_i$ and let $S_i$ recursively be the compact fuzzy partition corresponding to~$\varphi_i$;
	\item $S$ has attributes $S.\subblocks = \{S_1,\ldots,S_k\}$ and $S.\degree = d$, and we call it a {\em fuzzy block} and denote it by $\{S_1,\ldots,S_k\}_d$.
	\end{itemize}
\end{itemize}

\begin{table}[ht]
\footnotesize
\[
\begin{array}{|c||c|c|c|c|c|c|c|}
\hline
\varphi & a_1 & a_2 & a_3 & a_4 & a_5 & a_6 & a_7 \\
\hline\hline
a_1 & 1 & 0 & 0 & 0 & 0 & 0 & 0 \\
\hline
a_2 & 0 & 1 & 0.4 & 0.2 & 0.2 & 0.2 & 0.2 \\
\hline
a_3 & 0 & 0.4 & 1 & 0.2 & 0.2 & 0.2 & 0.2 \\
\hline
a_4 & 0 & 0.2 & 0.2 & 1 & 1 & 0.8 & 0.5 \\
\hline
a_5 & 0 & 0.2 & 0.2 & 1 & 1 & 0.8 & 0.5 \\
\hline
a_6 & 0 & 0.2 & 0.2 & 0.8 & 0.8 & 1 & 0.5 \\
\hline
a_7 & 0 & 0.2 & 0.2 & 0.5 & 0.5 & 0.5 & 1 \\
\hline
\end{array}
\]
\caption{The fuzzy equivalence $\varphi$ used in Example~\ref{example: JHFJH}.\label{table: fz r}}
\end{table}

\begin{example}\label{example: JHFJH}
Let $A = \{a_1, a_2, \ldots, a_7\}$ and let $\varphi$ be the fuzzy equivalence on $A$ specified in Table~\ref{table: fz r}. 
The compact fuzzy partition corresponding to $\varphi$ is:
\[
\{
    \{a_1\}_1, 
    \{
        \{
            \{a_2\}_1, 
            \{a_3\}_1
        \}_{0.4},
        \{
            \{
                \{a_4, a_5\}_1,
                \{a_6\}_1
            \}_{0.8},
            \{a_7\}_1 
        \}_{0.5}
    \}_{0.2} 
\}_0.
\]

\vspace{-1.5em}
\myend
\end{example}

Given a fuzzy or crisp block $S$, we define inductively:
\begin{itemize}
\item $S.\elements()$ to be the set $S.\elements$ if $S$ is a crisp block, and the set $\bigcup \{S'.\elements() \mid S' \in S.\subblocks\}$ otherwise;
\item $S.blocks()$ to be $\{S\}$ if $S$ is a crisp block, and the set $\{S\} \,\cup\, \bigcup \{S'.\blocks() \mid S' \in S.\subblocks\}$ otherwise.
\end{itemize}
In words, $S.\elements()$ is the carrier set of $S$ and $S.blocks()$ is the set consisting of all descendant blocks of $S$ including itself.


\subsection{Fuzzy description logics under the G\"odel semantics}

Let $\Phi$ be a subset of $\{I, O\}$, where $I$ and $O$ represent the features of inverse roles and nominals, respectively. In this subsection, we define the syntax and semantics of the FDL $\mLP$, which extends the DL \ALCreg by incorporating fuzzy truth values as well as the features specified by~$\Phi$.

The language of $\mLP$ employs a finite set $\RN$ of {\em role names}, a finite set $\CN$ of {\em concept names}, and a finite and non-empty set $\IN$ of {\em individual names}.\footnote{The assumption $\IN \neq \emptyset$ is made in this work for the sake of simplicity.} We assume that these three sets are pairwise disjoint. A {\em basic role} w.r.t.~$\Phi$ is either a role name from $\RN$ or, if $I \in \Phi$, the inverse $\cnv{r}$ of a role name~$r$.

{\em Concepts} and {\em roles} of $\mLP$ are defined as follows:
\begin{itemize}
	\item every role name $r \in \RN$ is a role of $\mLP$, 
	\item if $R$ and $R'$ are roles of $\mLP$ and $C$ is a concept of $\mLP$, then $R \mor R'$, $R \circ R'$, $R^*$ and $C?$ are roles of $\mLP$,
	\item if $R$ is a role of $\mLP$ and $I \in \Phi$, then $\cnv{R}$ is a role of $\mLP$,
	
	
	\item every value $d \in [0,1]$ is a concept of $\mLP$,
	\item every concept name $A \in \CN$ is a concept of $\mLP$,
	\item if $R$ is a role of $\mLP$, and $C$ and $C'$ are concepts of $\mLP$, then $C \mor C'$, $C \mand C'$, $C \to C'$, $\E R.C$ and $\V R.C$ are concepts of $\mLP$, 
	\item if $a \in \IN$ and $O \in \Phi$, then $\{a\}$ is a concept of~$\mLP$.
\end{itemize}

Concept and role names are also called {\em atomic concepts} and {\em atomic roles}, respectively. 
We use letters such as $C$ to denote concepts, $R$ to denote roles, and $a$, $b$ to denote individual names. 

We denote by $\mLPp$ the largest sublanguage of $\mLP$ that excludes the concept constructors $C \mor C'$ and $\V R.C$, as well as the role constructors $R \mor R'$, $R \circ R'$, $R^*$, and $C?$. 

For a finite set of concepts $X = \{C_1,\ldots,C_k\}$, we denote 
\[ \textstyle\bigsqcap\!X = C_1 \mand \ldots \mand C_k \mand 1. \]

\begin{figure}
\footnotesize
\[
\begin{array}{|l|l|}
\hline
\begin{array}{l}
		(R \mor R')^\mI(x,y) = \max\{R^\mI(x,y), R'^\mI(x,y)\} \\[0.8ex]
		(R \circ R')^\mI(x,y) = \sup\{R^\mI(x,z) \fand R'^\mI(z,y) \mid z \in \Delta^\mI \} \\[0.8ex]
		(R^*)^\mI(x,y) = \sup \{\bigotimes\{R^\mI(x_i,x_{i+1}) \mid 0 \leq i < k\} \mid\\ 
		\qquad k \geq 0,\ x_0,\ldots,x_k \in \Delta^\mI,\ x_0 = x,\ x_k = y\} \\[0.8ex]
		(C?)^\mI(x,y) = 
            \begin{cases}
            C^\mI(x) & \textrm{if } x = y\\
            0 & \textrm{otherwise}
            \end{cases}  \\[3.0ex]
		(\cnv{R})^\mI(x,y) = R^\mI(y,x)
\end{array}
&
\begin{array}{l}
        \ \\[-0.5em]
		d^\mI(x) = d \\[0.8ex]
		(C \mor C')^\mI(x) = \max\{C^\mI(x), C'^\mI(x)\} \\[0.8ex]
		(C \mand C')^\mI(x) = \min\{C^\mI(x), C'^\mI(x)\} \\[0.8ex]
		(C \to C)^\mI(x) = (C^\mI(x) \fto C'^\mI(x)) \\[0.8ex]
		(\E R.C)^\mI(x) = \sup \{R^\mI(x,y) \fand C^\mI(y) \mid y \in \Delta^\mI\} \\[0.5ex]
		(\V R.C)^\mI(x) = \inf \{R^\mI(x,y) \fto C^\mI(y) \mid y \in \Delta^\mI\} \\[0.5em]
		\{a\}^\mI(x) = \begin{cases}
        1 & \textrm{if } x = a^\mI \\
        0 & \textrm{otherwise}
        \end{cases}
\end{array} \\
\hline
\end{array}
\]
\caption{The semantics of complex concepts and roles.\label{fig: HGDJH}}
\end{figure}

A {\em fuzzy interpretation} is a pair \mbox{$\mI = \langle \Delta^\mI, \cdot^\mI \rangle$} consisting of a~non-empty set $\Delta^\mI$, called the {\em domain}, and the {\em interpretation function} $\cdot^\mI$, which assigns to each individual name $a$ an element \mbox{$a^\mI \in \Delta^\mI$}, to each concept name $A$ a fuzzy subset $A^\mI$ of $\Delta^\mI$, and to each role name $r$ a fuzzy binary relation $r^\mI$ on $\Delta^\mI$. 
The function $\cdot^\mI$ is extended to complex concepts and roles as defined in Figure~\ref{fig: HGDJH}, where suprema and infima are taken in the complete lattice $[0, 1]$. 
A fuzzy interpretation $\mI$ is said to be {\em finite} if its domain $\Delta^\mI$ is finite.


\newcommand{\hasExpertiseIn}{\mathit{hasExpertiseIn}}
\newcommand{\collaboratesWith}{\mathit{collaboratesWith}}
\newcommand{\isRelatedTo}{\mathit{isRelatedTo}}

\newcommand{\Topic}{\mathit{Topic}}
\newcommand{\Researcher}{\mathit{Researcher}}
\newcommand{\FuzzyLogic}{\mathit{FuzzyLogic}}
\newcommand{\DescLogic}{\mathit{DescriptionLogic}}

\newcommand{\stefan}{\mathit{stefan}}
\newcommand{\mirek}{\mathit{mirek}}
\newcommand{\linh}{\mathit{linh}}
\newcommand{\fAutomata}{\mathit{fuzzy\!\_automata}}
\newcommand{\fDescLogic}{\mathit{FDL}}
\newcommand{\minimization}{\mathit{minimization}}
\newcommand{\bisimulation}{\mathit{bisimulation}}
\newcommand{\simulation}{\mathit{simulation}}

\begin{example}\label{example: JHKSM}
Let the signature consist of $\RN = \{\hasExpertiseIn$, $\collaboratesWith$, $\isRelatedTo\}$, $\CN = \{\Researcher$, $\Topic$, $\FuzzyLogic$, $\DescLogic\}$, and $\IN = \{\linh$, $\mirek$, $\stefan\}$. Consider the fuzzy interpretation $\mI$ with:
\begin{itemize}
\item $\Delta^\mI = \{\linh$, $\mirek$, $\stefan$, $\fDescLogic$, $\fAutomata$, $\simulation$, $\bisimulation$, $\minimization\}$. 
\item $\linh^\mI = \linh$, $\mirek^\mI = \mirek$, $\stefan^\mI = \stefan$,
\item $\Researcher^\mI = \{\linh\!:\!1, \mirek\!:\!1, \stefan\!:\!1\}$, 
\item $\Topic^\mI = \{\fDescLogic\!:\!1, \fAutomata\!:\!1, \simulation\!:\!1, \bisimulation\!:\!1, \minimization\!:\!1\}$, 
\item $\FuzzyLogic^\mI = \{\fDescLogic\!:\!0.8, \fAutomata\!:\!0.4\}$, 
\item $\DescLogic^\mI = \{\fDescLogic\!:\!0.8\}$, 
\item $\hasExpertiseIn^\mI = \{\tuple{\linh,\fDescLogic}\!:\!0.8$, $\tuple{\linh,\bisimulation}\!:\!0.9$, $\tuple{\mirek,\fAutomata}\!:\!0.9$,  $\tuple{\mirek,\bisimulation}\!:\!0.8$, $\tuple{\stefan,\fAutomata}\!:\!0.9$, $\tuple{\stefan,\bisimulation}\!:\!0.8\}$, 
\item $\collaboratesWith^\mI = \{\tuple{\linh,\stefan}\!:\!0.3$, $\tuple{\stefan,\linh}\!:\!0.3$, $\tuple{\mirek,\stefan}\!:\!0.5$, $\tuple{\stefan,\mirek}\!:\!0.6\}$, 
\item $\isRelatedTo^\mI$ equal to the reflexive-symmetric-transitive closure of $\{\tuple{\bisimulation,\fDescLogic}\!:\!0.6$, $\tuple{\bisimulation,\fAutomata}\!:\!0.6$, $\tuple{\bisimulation,\simulation}\!:\!0.8$, $\tuple{\bisimulation,\minimization}\!:\!0.5\}$, under the G\"odel semantics.
\end{itemize}
We have:
\begin{eqnarray*}
(\E \hasExpertiseIn.\E \isRelatedTo.\DescLogic)^\mI & = & \{\linh\!:\!0.8, \mirek\!:\!0.6, \stefan\!:\!0.6\}, \\
(\V \hasExpertiseIn.\E \isRelatedTo.\DescLogic)^\mI & = & \{\linh\!:\!0.6, \mirek\!:\!0.6, \stefan\!:\!0.6\}, \\
(\E \collaboratesWith^*.\E \hasExpertiseIn.\DescLogic)^\mI & = & \{\linh\!:\!0.8, \mirek\!:\!0.3, \stefan\!:\!0.3\}.
\end{eqnarray*}
\myend
\end{example}


A {\em fuzzy assertion} in $\mLP$ has the form $C(a) \bowtie d$, $R(a,b) \bowtie d$, $a \doteq b$, or $a \not\doteq b$, where $C$ is a concept of $\mLP$, $R$ is a role of $\mLP$, $\bowtie\ \in \{>, \geq, <, \leq\}$ and $d \in [0,1]$. 
A {\em fuzzy concept assertion} is a fuzzy assertion of the form $C(a) \bowtie d$. 
A~{\em fuzzy ABox} in $\mLP$ is a finite set of fuzzy assertions in $\mLP$. 

We say that a fuzzy interpretation $\mI$ {\em satisfies} a fuzzy assertion $\psi$, and write \mbox{$\mI \models \psi$}, if:
\begin{itemize}
	\item case $\psi = (C(a) \bowtie d)\,$: $C^\mI(a^\mI) \bowtie d$,
	\item case $\psi = (R(a,b) \bowtie d)\,$: $R^\mI(a^\mI,b^\mI) \bowtie d$,
	\item case $\psi = (a \doteq b)\,$: $a^\mI = b^\mI$, 
	\item case $\psi = (a \not\doteq b)\,$: $a^\mI \neq b^\mI$. 
\end{itemize}
In addition, we say that $\mI$ is a {\em model} of a fuzzy ABox $\mA$, and write $\mI \models \mA$, if it satisfies all assertions in~$\mA$.


\subsection{Fuzzy bisimulations}
\label{sec: fus-bis}

In this subsection, we recall the notion of fuzzy bisimulation defined in~\cite{FSS2020} for FDLs under the G{\"o}del semantics, but restrict it to the case $\Phi \subseteq \{I,O\}$. We then recall some results of~\cite{FSS2020}.

Let $\mI$ and $\mIp$ be fuzzy interpretations.
A fuzzy relation $Z \in \mF(\Delta^\mI \times \Delta^\mIp)$ is called a {\em fuzzy $\Phi$-bisimulation} between $\mI$ and $\mIp$ if it satisfies the following conditions, for every $x \in \Delta^\mI$, $x' \in \Delta^\mIp$, $A \in \CN$, $a \in \IN$, and every basic role $R$ w.r.t.~$\Phi$:
\begin{eqnarray}
	&& Z(x,x') \leq \left(A^\mI(x) \fequiv A^\mIp(x')\right); \label{eq: FB 2} \\
	&& \V y \in \Delta^\mI\, \E y' \in \Delta^\mIp\, \left(Z(x,x') \fand R^\mI(x,y) \leq Z(y,y') \fand R^\mIp(x',y')\right); \label{eq: FB 3} \\
    && \V y' \in \Delta^\mIp\, \E y \in \Delta^\mI\, \left(Z(x,x') \fand R^\mIp(x',y') \leq Z(y,y') \fand R^\mI(x,y)\right); \label{eq: FB 4}
\end{eqnarray}
if $O \in \Phi$, then
\begin{eqnarray}
	&& Z(x,x') \leq (x = a^\mI \Leftrightarrow x' = a^\mIp). \label{eq: FB 5}
\end{eqnarray}

A fuzzy $\Phi$-bisimulation between $\mI$ and itself is called a {\em fuzzy $\Phi$-auto-bisimulation} of~$\mI$.

The following results are restricted versions of the ones of~\cite{FSS2020} (cf.~\cite{minimization-by-fBS}).

\begin{theorem}\label{theorem: JHDHJ}
If $Z$ is a fuzzy $\Phi$-bisimulation of between finite fuzzy interpretations $\mI$ and $\mIp$, then for every $x \in \Delta^\mI$, $x' \in \Delta^\mIp$ and every concept $C$ of $\mLP$, 
\[ Z(x,x') \leq \left(C^\mI(x) \fequiv C^\mIp(x')\right). \]
\end{theorem}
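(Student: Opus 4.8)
The plan is to prove the statement by structural induction on the concept $C$ of $\mLP$, proving simultaneously the ``dual'' claim for roles: namely that for every role $R$ of $\mLP$ and all $x,y \in \Delta^\mI$, $x' \in \Delta^\mIp$, the bisimulation conditions \eqref{eq: FB 3}--\eqref{eq: FB 4} lift from basic roles to arbitrary complex roles, i.e. $\V y\, \E y'\, (Z(x,x') \fand R^\mI(x,y) \leq Z(y,y') \fand R^\mIp(x',y'))$ and symmetrically. Actually, since $\mLP$ restricted to $\Phi \subseteq \{I,O\}$ still contains the \ALCreg role constructors $\mor, \circ, {}^*, ?$, the induction must be a mutual one on concepts and roles, so I would first state the role-transfer lemma as the companion inductive hypothesis and carry both through together.

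First I would handle the concept base cases: $C = d$ a truth constant is trivial since $d^\mI(x) = d = d^\mIp(x')$ so the biresiduum is $1$; $C = A$ a concept name is exactly condition \eqref{eq: FB 2}; and $C = \{a\}$ (when $O \in \Phi$) follows from \eqref{eq: FB 5}, using that $(x = a^\mI \Leftrightarrow x' = a^\mIp)$ bounds $Z(x,x')$ and this in turn bounds $(\{a\}^\mI(x) \fequiv \{a\}^\mIp(x'))$ by case analysis on whether $x = a^\mI$. For the Boolean/residual connectives $\mor, \mand, \to$ I would use the standard fact that the G\"odel operations are ``non-expansive'' with respect to the biresiduum: e.g. $\min\{(a\fequiv a'),(b\fequiv b')\} \leq ((a \fand b)\fequiv(a'\fand b'))$ and similarly for $\fOr$ and $\fto$; combined with the inductive hypotheses $Z(x,x') \leq (C^\mI(x)\fequiv C^\mIp(x'))$ and likewise for $C'$, this yields the claim for $C \mor C'$, $C \mand C'$, $C \to C'$.

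The quantifier cases $\E R.C$ and $\V R.C$ are the crux. For $\E R.C$: fix $x, x'$; we must show $Z(x,x') \leq (\sup_y(R^\mI(x,y)\fand C^\mI(y)) \fequiv \sup_{y'}(R^\mIp(x',y')\fand C^\mIp(y')))$. Using the role-transfer hypothesis for $R$, for each $y$ choose $y'$ with $Z(x,x')\fand R^\mI(x,y) \leq Z(y,y') \fand R^\mIp(x',y')$; combine with the inductive hypothesis $Z(y,y') \leq (C^\mI(y)\fequiv C^\mIp(y'))$, and a short G\"odel computation shows $Z(x,x') \fand R^\mI(x,y)\fand C^\mI(y) \leq R^\mIp(x',y')\fand C^\mIp(y')$, hence taking sup over $y$ gives one direction; the symmetric condition \eqref{eq: FB 4} gives the other, and together with the general lattice lemma ``if $z\fand a \leq b$ and $z \fand b \leq a$ then $z \leq (a\fequiv b)$'' we conclude. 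The $\V R.C$ case is analogous but uses the residuum direction and the monotonicity/adjointness of $\fto$. For the role induction itself: $(R\mor R')$ is immediate from the max; $(R\circ R')$ follows by composing the two witness-choices (pick an intermediate $z$, then its partner $z'$, then chase the second leg); $C?$ reduces to the concept hypothesis for $C$ at $x = y$; $R^*$ requires an inner induction on path length $k$, iterating the single-step transfer along a path $x_0,\dots,x_k$ to build a matching path $x'_0,\dots,x'_k$; and $\cnv{R}$ (when $I \in \Phi$) just swaps the roles of \eqref{eq: FB 3} and \eqref{eq: FB 4}, noting that basic roles w.r.t. $\Phi$ already include inverses of role names.

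The main obstacle I expect is the $R^*$ case combined with getting the $\E$/$\V$ quantifier cases right under the G\"odel semantics where $\sup$ and $\inf$ over the infinite lattice $[0,1]$ need not be attained --- so the witness choices must be handled via $\varepsilon$-approximations or, cleaner, by proving the two inequalities ``$Z(x,x')\fand (\E R.C)^\mI(x) \leq (\E R.C)^\mIp(x')$'' and its mirror directly at the level of suprema/infima without extracting exact maximizers. The path-lifting for $R^*$ is where the bookkeeping is heaviest: one must argue that $\bigotimes\{R^\mI(x_i,x_{i+1})\}$ conjoined with $Z(x_0,x_0')$ is bounded by $\bigotimes\{R^\mIp(x_i',x_{i+1}')\}$ for a suitably chosen matching path, which is a straightforward but careful iteration of \eqref{eq: FB 3}. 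Everything else is routine once the two companion induction hypotheses (for concepts and for roles) are set up together; I would make sure to state explicitly at the start that the proof is by simultaneous induction on the structure of concepts and roles of $\mLP$.
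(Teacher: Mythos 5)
The paper does not actually prove Theorem~\ref{theorem: JHDHJ}: it is imported as a ``restricted version'' of a result of~\cite{FSS2020}, so there is no in-paper argument to compare against, but your simultaneous induction on concepts and roles is the standard (and, in the cited source, the actual) route, and the sketch is essentially correct. Two small remarks: since the theorem is stated for \emph{finite} interpretations, the suprema/infima in $\E R.C$ and $\V R.C$ are attained and the value set relevant to $R^*$ is finite, so the $\varepsilon$-approximation worry is moot; and the $\cnv{R}$ case for a \emph{complex} $R$ is cleanest if you first push inverses inward to role names (where \eqref{eq: FB 3}--\eqref{eq: FB 4} already cover them, as basic roles include $\cnv{r}$), since conditions \eqref{eq: FB 3} and \eqref{eq: FB 4} for $\cnv{R}$ concern predecessors and do not literally arise by ``swapping'' the two clauses for~$R$.
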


\begin{theorem} \label{theorem: fG H-M}
The greatest fuzzy $\Phi$-bisimulation between finite fuzzy interpretations $\mI$ and $\mIp$ exists and is equal to the fuzzy relation $Z \in \mF(\Delta^\mI \times \Delta^\mIp)$ defined by
\[ Z(x,x') = \inf\{C^\mI(x) \fequiv C^\mIp(x') \mid \textrm{$C$ is a concept of $\mLPp$}\}. \]
\end{theorem}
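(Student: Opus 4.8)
The plan is to prove two facts: (i)~every fuzzy $\Phi$-bisimulation between $\mI$ and $\mIp$ is pointwise below the relation $Z$ of the statement, and (ii)~$Z$ is itself a fuzzy $\Phi$-bisimulation between $\mI$ and $\mIp$. Together these show that $Z$ is the greatest such relation and, in particular, that the greatest one exists. Fact~(i) is immediate from Theorem~\ref{theorem: JHDHJ}: if $Z'$ is a fuzzy $\Phi$-bisimulation between $\mI$ and $\mIp$, then $Z'(x,x') \le (C^\mI(x) \fequiv C^\mIp(x'))$ for every concept $C$ of $\mLP$, hence for every concept of $\mLPp$, and taking the infimum over the latter yields $Z' \le Z$. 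So the work lies in verifying conditions~\eqref{eq: FB 2}--\eqref{eq: FB 5} for $Z$.

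Conditions~\eqref{eq: FB 2} and, when $O \in \Phi$,~\eqref{eq: FB 5} are immediate from the definition of $Z$ as an infimum: for~\eqref{eq: FB 2} instantiate $C$ with a concept name $A \in \CN$; for~\eqref{eq: FB 5} instantiate $C$ with a nominal concept $\{a\}$ and note that $(\{a\}^\mI(x) \fequiv \{a\}^\mIp(x'))$ equals $1$ precisely when $x = a^\mI \Leftrightarrow x' = a^\mIp$ and equals $0$ otherwise. The substance is conditions~\eqref{eq: FB 3} and~\eqref{eq: FB 4}. Since the defining formula for $Z$ is symmetric (the G\"odel biresiduum is), condition~\eqref{eq: FB 4} for $Z$ between $\mI$ and $\mIp$ coincides with condition~\eqref{eq: FB 3} for $\cnv{Z}$ between $\mIp$ and $\mI$, and $\cnv{Z}$ is exactly the relation that the statement assigns to the pair $(\mIp,\mI)$; hence it suffices to prove~\eqref{eq: FB 3} for an arbitrary ordered pair of finite fuzzy interpretations. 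This I would do by a Hennessy--Milner-type argument using the finiteness of $\Delta^\mIp$.

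For~\eqref{eq: FB 3}, fix $x \in \Delta^\mI$, $x' \in \Delta^\mIp$, a basic role $R$ w.r.t.~$\Phi$, and $y \in \Delta^\mI$, and set $v = Z(x,x') \fand R^\mI(x,y)$; if $v = 0$ any $y'$ works, so assume $v > 0$. Put $Y' = \{z' \in \Delta^\mIp \mid R^\mIp(x',z') \ge v\}$; it is enough to find $y' \in Y'$ with $Z(y,y') \ge v$, since then $Z(y,y') \fand R^\mIp(x',y') \ge v = Z(x,x') \fand R^\mI(x,y)$. Suppose no such $y'$ exists, i.e.\ $Z(y,z') < v$ for every $z' \in Y'$. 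Then for each $z' \in Y'$ there is a concept $C_{z'}$ of $\mLPp$ with $(C_{z'}^\mI(y) \fequiv C_{z'}^\mIp(z')) < v$, so $C_{z'}^\mI(y) \ne C_{z'}^\mIp(z')$ and $\min\{C_{z'}^\mI(y), C_{z'}^\mIp(z')\} < v$. I would then pass to the ``normalized'' concept $D_{z'} = (C_{z'} \to d_{z'}) \mand (d_{z'} \to C_{z'})$, where $d_{z'}$ is the value $C_{z'}^\mI(y)$ treated as a constant concept; $D_{z'}$ is again a concept of $\mLPp$, with $D_{z'}^\mI(y) = 1$ and $D_{z'}^\mIp(z') = \min\{C_{z'}^\mI(y), C_{z'}^\mIp(z')\} < v$. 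Now let $C = \textstyle\bigsqcap\{D_{z'} \mid z' \in Y'\}$ (a concept of $\mLPp$, with $C = 1$ if $Y' = \emptyset$). Since $C^\mI(y) = 1$ and $R^\mI(x,y) \ge v$, we get $(\E R.C)^\mI(x) \ge R^\mI(x,y) \fand C^\mI(y) = R^\mI(x,y) \ge v$. On the other hand, for every $z' \in \Delta^\mIp$ we have $R^\mIp(x',z') \fand C^\mIp(z') < v$: if $z' \notin Y'$ then $R^\mIp(x',z') < v$, and if $z' \in Y'$ then $C^\mIp(z') \le D_{z'}^\mIp(z') < v$. Because $\Delta^\mIp$ is finite, this gives $(\E R.C)^\mIp(x') < v$. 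Hence $(\E R.C)^\mI(x) \ne (\E R.C)^\mIp(x')$, so $((\E R.C)^\mI(x) \fequiv (\E R.C)^\mIp(x')) = (\E R.C)^\mIp(x') < v$; but $\E R.C$ is a concept of $\mLPp$, whence $Z(x,x') \le ((\E R.C)^\mI(x) \fequiv (\E R.C)^\mIp(x')) < v \le Z(x,x')$, a contradiction. Therefore the required $y'$ exists and~\eqref{eq: FB 3} holds.

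The main obstacle I anticipate is exactly this construction: merging the finitely many ``local'' distinguishing concepts $C_{z'}$ into one concept of $\mLPp$ that witnesses a violation of the infimum defining $Z(x,x')$, while handling the two regimes of the G\"odel biresiduum ($a = b$ versus $a \ne b$) correctly; the normalization step $D_{z'}$ together with the use of value concepts is what makes this go through, and the finiteness of $\Delta^\mIp$ is essential (finite conjunctions, and a finite supremum of values each $< v$ is $< v$). Everything else --- the two easy bisimulation conditions, the reduction of~\eqref{eq: FB 4} to~\eqref{eq: FB 3}, and fact~(i) --- is routine given Theorem~\ref{theorem: JHDHJ}.
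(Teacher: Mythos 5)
Your proof is correct. Note, however, that the paper does not actually prove this theorem: it is stated as a ``restricted version'' of results imported from the cited work \cite{FSS2020}, so there is no in-paper argument to compare against line by line. What you have written is a self-contained proof of the Hennessy--Milner type, which is essentially the route taken in that reference. Your decomposition is the natural one: part~(i) (every fuzzy $\Phi$-bisimulation lies below $Z$) is indeed an immediate corollary of Theorem~\ref{theorem: JHDHJ}, and the whole content is in showing that $Z$ itself satisfies \eqref{eq: FB 2}--\eqref{eq: FB 5}. The key steps all check out: the normalization $D_{z'} = (C_{z'} \to d_{z'}) \mand (d_{z'} \to C_{z'})$ correctly exploits the two regimes of the G\"odel biresiduum, giving $D_{z'}^\mI(y)=1$ and $D_{z'}^{\mIp}(z') = \min\{C_{z'}^\mI(y), C_{z'}^{\mIp}(z')\} < v$; value concepts, $\to$, $\mand$ and $\E R.C$ (including $\E\cnv{r}.C$ when $I\in\Phi$) are all available in $\mLPp$, so the distinguishing concept $\E R.\bigsqcap\{D_{z'}\mid z'\in Y'\}$ stays inside the language over which the infimum is taken; and finiteness of $\Delta^{\mIp}$ is used exactly where it must be, both to form a finite conjunction and to conclude that a supremum of finitely many values each below $v$ is below $v$. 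The reduction of \eqref{eq: FB 4} to \eqref{eq: FB 3} via the symmetry of $\fequiv$ and passing to $\cnv{Z}$ between $(\mIp,\mI)$ is also sound. One point worth making explicit if you write this up in full: when $Z(y,z')<v$ you invoke the existence of a concept $C_{z'}$ with biresiduum value strictly below $v$; this follows from $Z(y,z')$ being an infimum (it need not be attained), which your argument correctly does not require.
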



\begin{corollary}
The greatest fuzzy $\Phi$-auto-bisimulation of any finite fuzzy interpretation $\mI$ exists and is a fuzzy equivalence. 
\end{corollary}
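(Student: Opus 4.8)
The plan is to derive the corollary directly from Theorem~\ref{theorem: fG H-M} by specializing it to $\mIp = \mI$. First I would observe that setting $\mIp = \mI$ in that theorem yields at once that the greatest fuzzy $\Phi$-auto-bisimulation of $\mI$ exists and coincides with the fuzzy relation $Z \in \mF(\Delta^\mI \times \Delta^\mI)$ given by $Z(x,y) = \inf\{C^\mI(x) \fequiv C^\mI(y) \mid \textrm{$C$ is a concept of $\mLPp$}\}$. So the whole content of the corollary reduces to checking that this particular $Z$ is reflexive, symmetric, and transitive in the sense recalled for fuzzy relations on a set.

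For reflexivity, I would note that $C^\mI(x) \fequiv C^\mI(x) = 1$ for every concept $C$, hence $Z(x,x) = 1$, i.e.\ $id_{\Delta^\mI} \leq Z$. For symmetry, I would use that the G\"odel biresiduum is symmetric, $a \fequiv b = b \fequiv a$ for all $a,b \in [0,1]$, so the defining infimum for $Z(x,y)$ is literally the same as the one for $Z(y,x)$; thus $\cnv{Z}(x,y) = Z(y,x) = Z(x,y)$, and therefore $\cnv{Z} = Z$.

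The only step with any real content is transitivity, $Z \circ Z \leq Z$. Here I would first establish the ``transitivity'' of the biresiduum under the G\"odel t-norm, namely $(a \fequiv b) \fand (b \fequiv c) \leq (a \fequiv c)$ for all $a,b,c \in [0,1]$ --- a short case analysis on the relative order of $a$, $b$, $c$, or a one-line appeal to the standard residuated-lattice identity $(a \fequiv b) \fand (b \fequiv c) \leq (a \fequiv c)$. Then for any $x,y,z \in \Delta^\mI$ and any concept $C$ of $\mLPp$, the inequalities $Z(x,y) \leq (C^\mI(x) \fequiv C^\mI(y))$ and $Z(y,z) \leq (C^\mI(y) \fequiv C^\mI(z))$ give $Z(x,y) \fand Z(y,z) \leq (C^\mI(x) \fequiv C^\mI(z))$; taking the infimum over all such $C$ yields $Z(x,y) \fand Z(y,z) \leq Z(x,z)$, and then taking the supremum over $y \in \Delta^\mI$ yields $(Z \circ Z)(x,z) \leq Z(x,z)$. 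This completes the verification that $Z$ is a fuzzy equivalence.

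An alternative route, which I would mention but not pursue, is to bypass the explicit formula and instead check directly that $id_{\Delta^\mI}$ is a fuzzy $\Phi$-auto-bisimulation of $\mI$, that the inverse of a fuzzy $\Phi$-auto-bisimulation is again one, and that the composition of two fuzzy $\Phi$-auto-bisimulations is again one; greatestness of $Z$ then forces $id_{\Delta^\mI} \leq Z$, $\cnv{Z} \leq Z$ (hence $\cnv{Z} = Z$ by applying the same to $\cnv{Z}$), and $Z \circ Z \leq Z$. This is conceptually cleaner but relies on closure lemmas for fuzzy $\Phi$-bisimulations from~\cite{FSS2020} rather than on what is actually quoted in the excerpt, so I would keep the first, self-contained argument as the main one. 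In either case there is no genuine obstacle; the only mild care needed is in the infimum/supremum manipulations and in the elementary identity $(a \fequiv b) \fand (b \fequiv c) \leq (a \fequiv c)$.
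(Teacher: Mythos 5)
Your proposal is correct and follows the route the paper intends: the paper states this corollary without an explicit proof, presenting it as an immediate consequence of Theorem~\ref{theorem: fG H-M} (specialized to $\mIp = \mI$) together with the restricted results imported from~\cite{FSS2020}. Your self-contained verification that the explicit infimum formula defines a fuzzy equivalence --- reflexivity and symmetry from the corresponding properties of the G\"odel biresiduum, and transitivity from $(a \fequiv b) \fand (b \fequiv c) \leq (a \fequiv c)$ followed by the infimum/supremum manipulations --- is sound and fills in exactly the details the paper leaves implicit.
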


We say that fuzzy interpretations $\mI$ and $\mI'$ are {\em $\Phi$-bisimilar} if there exists a fuzzy $\Phi$-bisimulation $Z$ between them such that $Z(a^\mI,a^\mIp)=1$ for all $a \in \IN$. 

A fuzzy ABox $\mA$ is said to be {\em invariant under $\Phi$-bisimilarity} between finite fuzzy interpretations if, for every finite fuzzy interpretations $\mI$ and $\mI'$ that are $\Phi$-bisimilar, $\mI \models \mA$ iff $\mI' \models \mA$. 

\begin{theorem}\label{theorem: IFDMS}
Let $\mA$ be a fuzzy ABox in $\mLP$. If $O \in \Phi$ or $\mA$ consists of only fuzzy concept assertions, then $\mA$ is invariant under $\Phi$-bisimilarity between finite fuzzy interpretations.
\end{theorem}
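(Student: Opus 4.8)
The plan is to reduce everything to Theorem~\ref{theorem: JHDHJ} by a case analysis on the form of the assertions in $\mA$. First I would record that $\Phi$-bisimilarity is symmetric: if $Z$ is a fuzzy $\Phi$-bisimulation between finite fuzzy interpretations $\mI$ and $\mIp$ with $Z(a^\mI,a^\mIp)=1$ for all $a \in \IN$, then its inverse $\cnv{Z}$ is a fuzzy $\Phi$-bisimulation between $\mIp$ and $\mI$ with $\cnv{Z}(a^\mIp,a^\mI)=1$ for all $a \in \IN$, because the G\"odel biresiduum $\fequiv$ is symmetric, conditions \eqref{eq: FB 3} and \eqref{eq: FB 4} swap roles under inversion, and condition \eqref{eq: FB 5} is symmetric. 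Hence it suffices to prove only $\mI \models \mA \Rightarrow \mIp \models \mA$, and since satisfaction of $\mA$ amounts to satisfaction of each of its assertions, it suffices to treat a single fuzzy assertion $\psi$.

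The recurring tool will be the following observation. Whenever $C$ is a concept of $\mLP$, Theorem~\ref{theorem: JHDHJ} gives $1 = Z(a^\mI,a^\mIp) \leq \left(C^\mI(a^\mI) \fequiv C^\mIp(a^\mIp)\right)$, and by the explicit description of $\fequiv$ under the G\"odel semantics the value $1$ forces $C^\mI(a^\mI) = C^\mIp(a^\mIp)$. Applied to $\psi = (C(a) \bowtie d)$ this immediately yields $C^\mI(a^\mI) \bowtie d$ iff $C^\mIp(a^\mIp) \bowtie d$, i.e.\ $\mI \models \psi$ iff $\mIp \models \psi$; this already disposes of the case where $\mA$ contains only fuzzy concept assertions, with no use of $O \in \Phi$.

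For the remaining three forms of assertions I would invoke $O \in \Phi$ to encode them as fuzzy concept assertions. For $\psi = (R(a,b) \bowtie d)$, note that $\E R.\{b\}$ is a concept of $\mLP$ and, by the semantics in Figure~\ref{fig: HGDJH}, $(\E R.\{b\})^\mI(a^\mI) = R^\mI(a^\mI,b^\mI)$ — only $y = b^\mI$ contributes to the defining supremum, since $\{b\}^\mI$ vanishes elsewhere — and analogously for $\mIp$; the recurring tool applied to the concept $\E R.\{b\}$ then gives $R^\mI(a^\mI,b^\mI) = R^\mIp(a^\mIp,b^\mIp)$, hence $\mI \models \psi$ iff $\mIp \models \psi$. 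For $\psi = (a \doteq b)$ or $\psi = (a \not\doteq b)$, I would use instead the concept $\{b\}$: since $\{b\}^\mI(a^\mI)$ equals $1$ if $a^\mI = b^\mI$ and $0$ otherwise, the recurring tool gives $\{b\}^\mI(a^\mI) = \{b\}^\mIp(a^\mIp)$, so $a^\mI = b^\mI$ iff $a^\mIp = b^\mIp$, which is exactly what is needed for both $\doteq$ and $\not\doteq$.

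The only point that needs genuine care — the main obstacle, such as it is — is the nominal case: one has to check that the encoding concepts $\E R.\{b\}$ and $\{b\}$ are legitimate concepts of $\mLP$ when $O \in \Phi$ (so that Theorem~\ref{theorem: JHDHJ} applies to them), that their interpretations really do recover $R^\mI(a^\mI,b^\mI)$ and the crisp equality of $a^\mI$ and $b^\mI$ respectively, and that the value $1$ of the G\"odel biresiduum collapses to exact equality of membership degrees rather than to a mere comparison. None of these is deep, but each is essential to the argument.
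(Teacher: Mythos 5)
Your proof is correct. Note that the paper itself gives no proof of this theorem: it is stated, together with Theorems~\ref{theorem: JHDHJ} and~\ref{theorem: fG H-M}, as a restricted version of results from the cited prior work on fuzzy bisimulations, so there is no in-paper argument to compare against. Your reconstruction is sound and is the standard one: the symmetry reduction via $\cnv{Z}$ is justified exactly as you say (the biresiduum is symmetric, conditions \eqref{eq: FB 3} and \eqref{eq: FB 4} swap under inversion, and \eqref{eq: FB 5} is symmetric); the observation that $Z(a^\mI,a^\mIp)=1$ together with Theorem~\ref{theorem: JHDHJ} forces $C^\mI(a^\mI) = C^\mIp(a^\mIp)$ (since under the G\"odel semantics $(x \fequiv y) = 1$ iff $x = y$) settles concept assertions without $O$; and the encodings $\E R.\{b\}$ and $\{b\}$ are legitimate concepts of $\mLP$ when $O \in \Phi$, with $(\E R.\{b\})^\mI(a^\mI) = R^\mI(a^\mI,b^\mI)$ because only $y = b^\mI$ contributes a nonzero term to the supremum. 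The only point worth flagging is that both interpretations must be finite for Theorem~\ref{theorem: JHDHJ} to apply, which is guaranteed by the definition of invariance used here.
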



\subsection{Fuzzy labeled graphs}

A {\em fuzzy labeled graph} (\FLG) \cite{DBLP:journals/isci/Nguyen23} is a structure $G = \tuple{V, E, L, \SV, \SE}$, where 
$V$ is a set of vertices, 
$\SV$ is a set of vertex labels, 
$L: V \to \mF(\SV)$ is a function that labels vertices, 
$\SE$ is a set of edge labels, 
and \mbox{$E \in \mF(V \times \SE \times V)$} is a fuzzy set of labeled edges. 
It is {\em finite} if $\SV$, $\SE$ and $V$ are finite. 

A {\em fuzzy auto-bisimulation} of an \FLG $G = \tuple{V, E, L, \SV, \SE}$ is a fuzzy relation $Z \in \mF(V \times V)$ that satisfies the following conditions for every $p \in \SV$, $r \in \SE$, and $x,x',y,y' \in V$:
\begin{eqnarray}
&& Z(x,x') \leq (L(x)(p) \fequiv L(x')(p)) \label{eq: FB1} \\
&& \E y'\! \in\!V (Z(x,x') \!\fand\! E(x,r,y) \leq Z(y,y') \!\fand\! E(x',r,y')) \label{eq: FB2} \\
&& \E y \in\!V (Z(x,x') \!\fand\! E(x',r,y') \leq Z(y,y') \!\fand\! E(x,r,y)). \label{eq: FB3}
\end{eqnarray}

It is known that the greatest fuzzy auto-bisimulation of any finite \FLG exists and is a fuzzy equivalence \cite[Corollary~5.3]{FBSML}. In \cite{DBLP:journals/isci/Nguyen23}, we provided an algorithm for computing the compact fuzzy partition corresponding to the greatest fuzzy auto-bisimulation of a finite \FLG~$G$. It runs in time $O((m\log{l} + n)\log{n})$, where $n = |V|$, $m = |\support(E)|$, and $l = |\{E(e) : e \in \support(E)\} \cup \{0,1\}|$. 

\section{Computing compact fuzzy partitions}
\label{section: CCFP}

As preparation for the next section, we present below Algorithm~\ref{alg1}, which, given a finite fuzzy interpretation~$\mI$ together with $\Phi \subseteq \{I, O\}$, constructs the compact fuzzy partition corresponding to the greatest fuzzy $\Phi$-auto-bisimulation of~$\mI$.

Following~\cite[Algorithm~2]{DBLP:journals/fss/Nguyen24}, we define the {\em fuzzy graph corresponding to $\mI$ w.r.t.\ $\Phi \subseteq \{I,O\}$} to be $G = \tuple{V, E, L, \SV, \SE}$ specified as follows:
\begin{itemize}
\item if $O \in \Phi$, then $\SV = \CN \cup \IN$, else $\SV = \CN$;
\item if $I \in \Phi$, then $\SE = \RN \cup \{\cnv{r} \mid r \in \RN\}$, else $\SE = \RN$;
\item $V = \Delta^\mI$;
\item $E(x,R,y) = R^\mI(x,y)$, for $R \in \SE$ and $x,y \in V$;
\item $L(x)(A) = A^\mI(x)$, for $A \in \CN$ and $x \in V$;
\item if $O \in \Phi$, then $L(x)(a)$ = (if $x \neq a^\mI$ then 0 else 1), for $a \in \IN$ and $x \in V$.
\end{itemize}

\begin{algorithm}
\caption{ComputeCompactFuzzyPartition\label{alg1}}
\Input{a finite fuzzy interpretation $\mI$ and $\Phi \subseteq \{I,O\}$.}
\Output{the compact fuzzy partition corresponding to the greatest fuzzy $\Phi$-auto-bisimulation of~$\mI$.}

\BlankLine

construct the fuzzy graph $G$ corresponding to $\mI$ w.r.t.~$\Phi$\label{step: alg1 1}\;

execute the {\em ComputeFuzzyPartitionEfficiently} algorithm from \cite{DBLP:journals/isci/Nguyen23} to compute the compact fuzzy partition $\bB$ corresponding to the greatest fuzzy auto-bisimulation of~$G$\label{step: alg1 2}\;

\Return{$\bB$};
\end{algorithm}

\begin{lemma}\label{lemma: JHJLS}
Algorithm~\ref{alg1} is correct and has the time complexity \mbox{$O((m\log{l} + n)\log{n})$}, where $n = |\Delta^\mI|$, $m = |\{\tuple{x,r,y} \mid r \in \RN$, $x,y \in \Delta^\mI$, $r^\mI(x,y) > 0\}|$ and $l = |\{r^\mI(x,y) \mid r \in \RN$, $x,y \in \Delta^\mI\}| + 2$, under the assumption that the sizes of $\RN$, $\CN$ and $\IN$ are constants.
\end{lemma}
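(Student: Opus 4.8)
The plan is to establish the two claims separately: correctness, then complexity.

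\medskip

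\textbf{Correctness.} The key is to show that the greatest fuzzy $\Phi$-auto-bisimulation of $\mI$ coincides with the greatest fuzzy auto-bisimulation of the fuzzy labeled graph $G$ constructed in Step~\ref{step: alg1 1}. Since Step~\ref{step: alg1 2} correctly computes the compact fuzzy partition corresponding to the greatest fuzzy auto-bisimulation of $G$ (by the cited result from~\cite{DBLP:journals/isci/Nguyen23}), this coincidence immediately gives correctness of Algorithm~\ref{alg1}. To prove the coincidence I would argue that a fuzzy relation $Z \in \mF(\Delta^\mI \times \Delta^\mI) = \mF(V \times V)$ is a fuzzy $\Phi$-auto-bisimulation of $\mI$ if and only if it is a fuzzy auto-bisimulation of $G$. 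This is a matter of matching the defining conditions: condition~\eqref{eq: FB1} for $G$ with vertex labels $A \in \CN$ is exactly~\eqref{eq: FB 2}; condition~\eqref{eq: FB1} for vertex labels $a \in \IN$ (present only when $O \in \Phi$) is equivalent to~\eqref{eq: FB 5}, using that $L(x)(a) \fequiv L(x')(a)$ equals $1$ when $x,x'$ agree on being $a^\mI$ and is $0$ otherwise, which matches the biimplication $(x = a^\mI \Leftrightarrow x' = a^\mIp)$ read as a truth value; and conditions~\eqref{eq: FB2}, \eqref{eq: FB3} for $G$, ranging over $r \in \SE$, are exactly~\eqref{eq: FB 3}, \eqref{eq: FB 4}, ranging over basic roles $R$ w.r.t.~$\Phi$, because $\SE = \RN$ when $I \notin \Phi$ and $\SE = \RN \cup \{\cnv{r} \mid r \in \RN\}$ when $I \in \Phi$, and in the latter case $E(x,\cnv{r},y) = \cnv{r}^\mI(x,y) = r^\mI(y,x)$ encodes the inverse role correctly. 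Since the two notions have the same (nonempty, directed) set of fuzzy relations satisfying them, their greatest elements agree.

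\medskip

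\textbf{Complexity.} By the quoted complexity of {\em ComputeFuzzyPartitionEfficiently}, Step~\ref{step: alg1 2} runs in time $O((m'\log{l'} + n')\log{n'})$ where $n' = |V|$, $m' = |\support(E)|$ and $l' = |\{E(e) : e \in \support(E)\} \cup \{0,1\}|$. It remains to relate these parameters of $G$ to the parameters $n, m, l$ of $\mI$ stated in the lemma, and to bound the cost of Step~\ref{step: alg1 1}. We have $n' = |\Delta^\mI| = n$. For the edges: when $I \notin \Phi$, $\support(E) = \{\tuple{x,r,y} \mid r \in \RN,\ r^\mI(x,y) > 0\}$, so $m' = m$; when $I \in \Phi$, each nonzero instance $r^\mI(x,y)$ additionally contributes the edge $\tuple{y,\cnv{r},x}$ with the same value, so $m' \le 2m$ and the multiset of edge values is unchanged, giving $m' = O(m)$ and $l' \le l$ (in fact the set of edge values is the same). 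Hence $l' \le l = |\{r^\mI(x,y) \mid r \in \RN,\ x,y \in \Delta^\mI\}| + 2$, and $\log l' = O(\log l)$. For Step~\ref{step: alg1 1}: constructing $G$ requires iterating over the (at most constantly many) role names and their inverses, and over the at most constantly many concept names and individual names, reading off $r^\mI$, $A^\mI$ and the $a^\mI$; under the assumption that $|\RN|, |\CN|, |\IN|$ are constants, this takes $O(n + m)$ time, which is dominated by the bound on Step~\ref{step: alg1 2}. Combining, the total running time is $O((m\log{l} + n)\log{n})$, as claimed.

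\medskip

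\textbf{Main obstacle.} None of the steps is genuinely hard; the work is bookkeeping. The one point that needs care is the $I \in \Phi$ case in both parts: in the correctness argument one must check that ranging over \emph{basic roles} $R$ (which include inverses $\cnv{r}$) in~\eqref{eq: FB 3}--\eqref{eq: FB 4} is faithfully mirrored by ranging over $\SE = \RN \cup \{\cnv{r} \mid r \in \RN\}$ in~\eqref{eq: FB2}--\eqref{eq: FB3} via the identification $E(x,\cnv{r},y) = r^\mI(y,x)$; and in the complexity argument one must observe that doubling the edge set (at most) does not change the asymptotic bound and does not enlarge the set of distinct edge values, so $l$ and the $\log l$ factor are unaffected. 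Once these observations are in place, correctness follows from Lemma-level matching of definitions plus the cited theorem, and the complexity follows by substitution into the quoted bound.
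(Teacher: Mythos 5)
Your proposal is correct and takes essentially the same route as the paper: the paper also reduces correctness to the (there unargued, "straightforward to check") equivalence between fuzzy $\Phi$-auto-bisimulations of $\mI$ and fuzzy auto-bisimulations of $G$, and obtains the complexity by charging $O(m+n)$ to the graph construction and invoking the cited bound for \emph{ComputeFuzzyPartitionEfficiently}. You merely fill in details the paper leaves implicit (the condition-by-condition matching and the relations $n'=n$, $m'\le 2m$, $l'\le l$), which is welcome but not a different argument.
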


\begin{proof}
It is straightforward to check that $Z$ is a fuzzy $\Phi$-auto-bisimulation of~$\mI$ iff $Z$ is a fuzzy auto-bisimulation of~$G$. Hence, the greatest fuzzy auto-bisimulation of~$G$ is the greatest fuzzy $\Phi$-auto-bisimulation of~$\mI$ and $\bB$ is in fact the compact fuzzy partition corresponding to the greatest fuzzy $\Phi$-auto-bisimulation of~$\mI$. 

Statement~\ref{step: alg1 1} runs in $O(m+n)$ time. By~\cite[Theorem~A.2]{DBLP:journals/isci/Nguyen23}, statement~\ref{step: alg1 2} runs in \mbox{$O((m\log{l} + n)\log{n})$} time. Consequently, the time complexity of Algorithm~\ref{alg1} is of this latter order. 
\myend
\end{proof}

\section{Minimizing fuzzy interpretations}
\label{section: main}

In this section, we present Algorithm~\ref{alg2}, which, given a finite fuzzy interpretation $\mI$ together with $\Phi \subseteq \{I,O\}$ and $\gamma \in (0,1]$, constructs a {\em minimal reduction of $\mI$} (w.r.t.\ the size of the domain) {\em that preserves fuzzy concept assertions of $\mL$ up to degree~$\gamma$}, where $\mL$ is any description language between $\mLPp$ and $\mLP$ (i.e., $\mLPp \subseteq \mL \subseteq \mLP$). Such a minimal reduction is defined to be a finite fuzzy interpretation $\mIp$ such that
\begin{itemize}
\item for every concept $C$ of $\mL$ and every $a \in \IN$, $\gamma \leq (C^\mI(a^\mI) \fequiv C^\mIp(a^\mIp))$;
\item if $\mIdp$ is a finite fuzzy interpretation such that, for every concept $C$ of $\mL$ and every $a \in \IN$, $\gamma \leq (C^\mI(a^\mI) \fequiv C^\mIdp(a^\mIdp))$, then $|\Delta^\mIdp| \geq |\Delta^\mIp|$. 
\end{itemize}
In the case $\gamma = 1$, $\mIp$ is also called a {\em minimal reduction of $\mI$ that preserves fuzzy concept assertions of~$\mL$}. 

Given a compact fuzzy partition $\bB$ of a set $X$ together with $B \in \bB.\blocks()$, we define $B.\parent() = \Null$ if $B = \bB$, and define $B.\parent()$ to be the block $B' \in \bB.\blocks()$ such that $B \in B'.\subblocks$ otherwise. Given $x \in \bB.\elements()$ and $d \in (0,1]$, we define $\bB.\findBlock(x,d)$ to be the block $B \in \bB.\blocks()$ with the smallest $B.\degree$ such that $B.\degree \geq d$ and $x \in B.\elements()$. 

Algorithm~\ref{alg2} first executes Algorithm~\ref{alg1} to construct the compact fuzzy partition $\bB$ corresponding to the greatest fuzzy $\Phi$-auto-bisimulation of~$\mI$. It then constructs a reduction $\mIp$ of $\mI$ by using $\bB$, starting with setting a new attribute $B.\repr$ to $\Null$, for each block $B \in \bB.\blocks()$, where $B.\repr$ is intended to specify the ``representative'' of $B$ (which belongs to $B.\elements()$ and $\Delta^\mI$). Roughly speaking, if $B.\degree = d$ and $B.\repr \neq \Null$, then $B.\repr$ can be used to represent any element of $B$ ``at the level~$d$''. The domain of $\mIp$ consists of all $B.\repr$ such that $B \in \bB.\blocks()$ and $B.\repr \neq \Null$. Whenever $B.\repr$ is set to an element $v$ different from $\Null$, if $B' = B.\parent() \neq \Null$ and $B'.\repr = \Null$, then $B'.\repr$ is also set to~$v$. For each $x \in \Delta^\mIp$, the ``contents'' of $x$ in $\mIp$ are the same as in $\mI$, which means that $A^\mIp(x) = A^\mI(x)$ for all $A \in \CN$. 

Algorithm~\ref{alg2} uses a priority queue $Q$, which is initialized and extended as follows: when $x$ is added (or has just been added) to $\Delta^\mIp$, if there exists $y \in \Delta^\mI$ and a basic role $R$ w.r.t.~$\Phi$ such that $R^\mI(x,y) > 0$, then the triple $\tuple{x,R,y}$ is inserted to $Q$, where $R^\mI(x,y)$ specifies its priority. 

As an initialization, for each $a \in \IN$ and for $B = \bB.\findBlock(a^\mI, \gamma)$, if $B.\repr = \Null$, then $a^\mI$ is added to $\Delta^\mIp$, $a^\mIp$ is set to $a^\mI$ and $B.\repr$ is set to $a^\mI$, else $a^\mIp$ is set to $B.\repr$. After that, as the main outer loop, values $d$ from the following set are considered in decreasing order:
\[ D = \{\gamma\} \cup \{r^\mI(x,y) \mid x,y \in \Delta^\mI, r \in \RN \textrm{ and } r^\mI(x,y) \leq \gamma\} \setminus \{0\}. \]
As the inner loop, each triple $\tuple{x,R,y}$ from the priority $Q$ with $R^\mI(x,y) \geq d$ is extracted and processed as follows:
\begin{itemize}
\item with $B = \bB.\findBlock(y,d)$, if $B.\repr = \Null$, then $y$ is added to $\Delta^\mIp$ and $B.\repr$ is set to $y$;
\item with $y' = B.\repr$, if $R^\mIp(x,y') = 0$, then $R^\mIp(x,y')$ is set to $d$.
\end{itemize} 

We have implemented Algorithm~\ref{alg2} in Python and made the source code publicly available~\cite{min2025-prog}. The following three examples illustrate the execution of this algorithm. One can use that implementation to verify their results and to inspect additional details. 

\begin{algorithm}
\caption{\MinFInt\label{alg2}}
\Input{a finite fuzzy interpretation $\mI$, $\Phi \subseteq \{I,O\}$ and $\gamma \in (0,1]$.}
\Output{a minimal reduction of $\mI$ that preserves fuzzy concept assertions of $\mL$ up to degree $\gamma$, where $\mL$ is any description language between $\mLPp$ and $\mLP$.}

\BlankLine

execute Algorithm~\ref{alg1} to construct the compact fuzzy partition $\bB$ corresponding to the greatest fuzzy $\Phi$-auto-bisimulation of~$\mI$\label{step: alg2 1}\;

\lForEach{$B \in \bB.\blocks()$}{$B.\repr := \Null$\label{step: alg2 2}}

\BlankLine

initialize $\mIp$ by setting $\Delta^\mIp = \emptyset$\label{step: alg2 3}\;

\ForEach{$a \in \IN$}{
    $B := \bB.\findBlock(a^\mI, \gamma)$\label{step: alg2 6}\;
    \uIf{$B.\repr = \Null$}{
        add $a^\mI$ to $\Delta^\mIp$ and set $a^\mIp := a^\mI$\label{step: alg2 8}\;
        \lForEach{concept name $A \in \CN$}{$A^\mIp(a^\mIp) := A^\mI(a^\mI)$}
        \lRepeat{$B = \Null$ or $B.\repr \neq \Null$}{$B.\repr := a^\mI$, $B := B.\parent()$\label{step: alg2 10}}
    }
    \lElse{$a^\mIp := B.\repr$\label{step: alg2 11}}
}

\BlankLine

initialize a priority queue $Q$ with all triples $\tuple{x,R,y}$ such that $x \in \Delta^\mIp$, $y \in \Delta^\mI$ and $R$ is a basic role w.r.t.~$\Phi$ with $R^\mI(x,y) > 0$, where $R^\mI(x,y)$ specifies the priority\label{step: alg2 12m}\;

let $D = \{\gamma\} \cup \{r^\mI(x,y) \mid x,y \in \Delta^\mI, r \in \RN$ and $r^\mI(x,y) < \gamma\} \setminus \{0\}$\label{step: alg2 12}\;
\ForEach{$d \in D$ in decreasing order\label{step: alg2 13}}{
    \While{$Q$ is not empty and its top element $\tuple{x,R,y}$ satisfies $R^\mI(x,y) \geq d$\label{step: alg2 14}}{
        extract $\tuple{x,R,y}$ from $Q$\label{step: alg2 15}\;
        $B := \bB.\findBlock(y,d)$, $B' := B$\label{step: alg2 16}\;
        \If{$B.\repr = \Null$\label{step: alg2 17}}{
            add $y$ to $\Delta^\mIp$\label{step: alg2 18}\;
            \lForEach{concept name $A \in \CN$}{$A^\mIp(y) := A^\mI(y)$}
            \lRepeat{$B' = \Null$ or $B'.\repr \neq \Null$}{$B'.\repr := y$, $B' := B'.\parent()$\label{step: alg2 20}}
            insert into $Q$ all triples $\tuple{y,S,z}$ such that $z \in \Delta^\mI$ and $S$ is a basic role w.r.t.~$\Phi$ with $S^\mI(y,z) > 0$\label{step: alg2 insertions into Q}\;
        }
        $y' := B.\repr$\label{step: alg2 21}\;
        \If{$R^\mIp(x,y') = 0$ (i.e., was not set)}{
            \lIf{$R$ is a role name}{$R^\mIp(x,y') := d$\label{step: alg2 23}}
            \lElse{$r^\mIp(y', x) := d$, where $r$ is the role name such that $R = \cnv{r}$\label{step: alg2 24}}
        }
    }
}

\BlankLine

\Return $\mIp$\;
\end{algorithm}

\begin{example}\label{example: JHRHS}
Let $\mI$ be the fuzzy interpretation from the motivating example given in Section~\ref{sec: motivation} and let $\Phi = \emptyset$. The specification of $\mI$ is stored in the file {\em in1.txt} of~\cite{min2025-prog}. The greatest fuzzy $\Phi$-auto-bisimulation of~$\mI$ is the fuzzy equivalence relation $E$ mentioned in that subsection. The compact fuzzy partition corresponding to $E$ is $\bB = \{\{u,u'\}_1$, $\{\{v_1,v'_1\}_1$, $\{\{v_2,v'_2\}_1$, $\{v_3\}_1\}_{0.8}\}_{0.7}\}_0$. Consider the execution of Algorithm~\ref{alg2} for $\mI$ using $\Phi$ and $\gamma = 1$. The resulting fuzzy interpretation, denoted as $\mI_3$ in Figure~\ref{fig: JHDKJ}, is illustrated there. Details are provided below.

After executing statements \ref{step: alg2 1}-\ref{step: alg2 12}, we have:
\begin{itemize}
\item $\Delta^\mIp = \{u\}$, $a^\mIp = b^\mIp = u$;
\item for $B \in \bB.\blocks()$, $B.\repr = u$ if $B = \{u,u'\}_1$ or $B = \bB$, and $B.\repr = \Null$ otherwise;
\item $Q$ contains $\tuple{u,r,v_1}$, $\tuple{u,r,v_2}$ and $\tuple{u,r,v_3}$;
\item $D = \{1, 0.7, 0.6, 0.5, 0.4\}$. 
\end{itemize}

The first iteration of the ``foreach'' loop at statement~\ref{step: alg2 13} is executed with $d = 1$, in which the inner ``while'' loop terminates immediately.

The second iteration of the ``foreach'' loop at statement~\ref{step: alg2 13} is executed with $d = 0.7$. During this iteration, the inner ``while'' loop executes only one iteration, with $\tuple{x,R,y} = \tuple{u,r,v_3}$, in which:
\begin{itemize}
\item for $B = \bB.\findBlock(v_3, 0.7) = \{\{v_1,v'_1\}_1, \{\{v_2,v'_2\}_1, \{v_3\}_1\}_{0.8}\}_{0.7}$, since $B.\repr = \Null$, $v_3$ is added to $\Delta^\mIp$ and $B.\repr$ is set to $v_3$;
\item $r^\mIp(u,v_3)$ is set to $0.7$. 
\end{itemize} 

The third iteration of the ``foreach'' loop at statement~\ref{step: alg2 13} is executed with $d = 0.6$, in which the inner ``while'' loop terminates immediately.

The fourth iteration of the ``foreach'' loop at statement~\ref{step: alg2 13} is executed with $d = 0.5$. During this iteration, the inner ``while'' loop executes only one iteration, with $\tuple{x,R,y} = \tuple{u,r,v_1}$, in which $\mIp$ remains unchanged because, for $B = \bB.\findBlock(v_1, 0.5) = \{\{v_1,v'_1\}_1$, $\{\{v_2,v'_2\}_1$, $\{v_3\}_1\}_{0.8}\}_{0.7}$, we have $B.\repr = v_3$ and $r^\mIp(u,v_3) = 0.7$. 

The fifth iteration of the ``foreach'' loop at statement~\ref{step: alg2 13} is executed with $d = 0.4$. During this iteration, the inner ``while'' loop executes only one iteration, with $\tuple{x,R,y} = \tuple{u,r,v_2}$, in which $\mIp$ remains unchanged. The reasons are similar to those given in the above paragraph.

The algorithm terminates and returns $\mIp$ with $|\Delta^\mIp| = 2$.
\myend
\end{example}

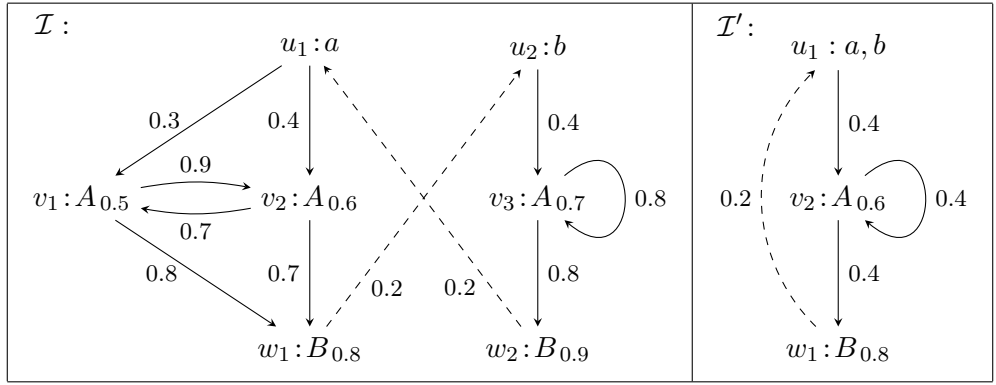
\begin{figure}[t]
\begin{center}
\begin{tabular}{|c|c|}
\hline
\begin{tikzpicture}[->,>=stealth,auto,black]
	\node (I) {$\mI:\qquad$};
	\node (bI) [node distance=0.3cm, below of=I] {};
	\node (u1) [node distance=3.0cm, right of=bI] {$u_1\!:\!a$};
	\node (u2) [node distance=3.0cm, right of=u1] {$u_2\!:\!b$};
	\node (v1) [node distance=2.0cm, below of=bI] {$v_1\!:\!A_{\,0.5}$};
	\node (v2) [node distance=2.0cm, below of=u1] {$v_2\!:\!A_{\,0.6}$};
	\node (v3) [node distance=2.0cm, below of=u2] {$v_3\!:\!A_{\,0.7}$};
	\node (w1) [node distance=2.0cm, below of=v2] {$w_1\!:\!B_{\,0.8}$};
	\node (w2) [node distance=2.0cm, below of=v3] {$w_2\!:\!B_{\,0.9}$};
	\draw (u1) to node [left,xshift=-3pt]{\footnotesize{0.3}} (v1);	
	\draw (u1) to node [left]{\footnotesize{0.4}} (v2);	
	\draw (u2) to node [right]{\footnotesize{0.4}} (v3);	
	\draw (v1) to node [left,xshift=-3pt]{\footnotesize{0.8}} (w1);	
	\draw (v2) to node [left]{\footnotesize{0.7}} (w1);	
	\draw (v3) to node [right]{\footnotesize{0.8}} (w2);
    \draw (v1) edge[bend left=10] node[above]{\footnotesize{0.9}} (v2);
    \draw (v2) edge[bend left=10] node[below]{\footnotesize{0.7}} (v1);
    \draw (v3) edge[out=40,in=-40,looseness=6] node[right]{\footnotesize{0.8}} (v3);
	\draw[dashed] (w1) to node[right,pos=0.15,xshift=2pt]{\footnotesize{0.2}} (u2);	
	\draw[dashed] (w2) to node[left,pos=0.15,xshift=-2pt]{\footnotesize{0.2}} (u1);
\end{tikzpicture}
&
\begin{tikzpicture}[->,>=stealth,auto,black]
	\node (Ip) {$\mIp:\qquad\qquad\qquad\quad$};
	\node (u1) [node distance=0.3cm, below of=Ip] {$u_1:a,b$};
	\node (v2) [node distance=2.0cm, below of=u1] {$v_2\!:\!A_{\,0.6}$};
	\node (w1) [node distance=2.0cm, below of=v2] {$w_1\!:\!B_{\,0.8}$};
	\draw (u1) to node [right]{\footnotesize{0.4}} (v2);	
	\draw (v2) to node [right]{\footnotesize{0.4}} (w1);	
    \draw (v2) edge[out=40,in=-40,looseness=6] node[right]{\footnotesize{0.4}} (v2);
    \draw[dashed] (w1) edge[bend left=45] node[left]{\footnotesize{0.2}} (u1);
\end{tikzpicture}
\\
\hline
\end{tabular}
\caption{Fuzzy interpretation used in Example~\ref{example: HGRJS}.\label{fig: HGRJS}}
\end{center}
\end{figure}

\begin{example}\label{example: HGRJS}
Let $\RN=\{r,s\}$, $\CN = \{A,B\}$, $\IN = \{a,b\}$ and $\Phi = \emptyset$. Consider the fuzzy interpretation $\mI$ illustrated in Figure~\ref{fig: HGRJS} and specified below:
\begin{itemize}
\item $\Delta^\mI = \{u_1,u_2,v_1,v_2,v_3,w_1,w_2\}$,\ $a^\mI = u_1$,\ $b^\mI = u_2$, 
\item $A^\mI = \{v_1\!:\!0.5,\ v_2\!:\!0.6,\ v_3\!:\!0.7\}$,\  $B^\mI = \{w_1\!:\!0.8,\ w_2\!:\!0.9\}$,
\item \mbox{$r^\mI = \{\tuple{u_1,v_1}\!:\!0.3$}, \mbox{$\tuple{u_1,v_2}\!:\!0.4$}, \mbox{$\tuple{u_2,v_3}\!:\!0.4$}, \mbox{$\tuple{v_1,v_2}\!:\!0.9$}, \mbox{$\tuple{v_2,v_1}\!:\!0.7$}, \mbox{$\tuple{v_3,v_3}\!:\!0.8$}, \mbox{$\tuple{v_1,w_1}\!:\!0.8$}, \mbox{$\tuple{v_2,w_1}\!:\!0.7$}, \mbox{$\tuple{v_3,w_2}\!:\!0.8\}$}, 
\item $s^\mI = \{\tuple{w_1,u_2}\!:\!0.2,\ \tuple{w_2,u_1}\!:\!0.2\}$.
\end{itemize}
The specification of this fuzzy interpretation is stored in the file {\em in2.txt} of~\cite{min2025-prog}. 
The compact fuzzy partition $\bB$ corresponding to the greatest fuzzy $\Phi$-auto-bisimulation of~$\mI$ is 
\[ \bB = 
\{
  \{u_1, u_2\}_1, 
  \{
    \{v_1\}_1, 
    \{v_2\}_1, 
    \{v_3\}_1
  \}_{0.5},
  \{
    \{w_1\}_1, 
    \{w_2\}_1
  \}_{0.8}
\}_0.
\]
We denote
$B_u = \{u_1, u_2\}_1$, 
$B_v = \{
    \{v_1\}_1, 
    \{v_2\}_1, 
    \{v_3\}_1
  \}_{0.5}$ and
$B_w = \{
    \{w_1\}_1, 
    \{w_2\}_1
  \}_{0.8}$.
Thus, $\bB = \{B_u, B_v, B_w\}_0$. 
Consider the execution of Algorithm~\ref{alg2} for $\mI$ using $\Phi$ and $\gamma = 1$. 
The resulting fuzzy interpretation $\mI'$ is illustrated in Figure~\ref{fig: HGRJS}. 
Details are provided below.

After executing statements \ref{step: alg2 1}-\ref{step: alg2 12}, we have:
\begin{itemize}
\item $\Delta^\mIp = \{u_1\}$, $a^\mIp = b^\mIp = u_1$;
\item for $B \in \bB.\blocks()$, $B.\repr = u_1$ if $B = B_u$ or $B = \bB$, and $B.\repr = \Null$ otherwise;
\item $Q$ contains $\tuple{u_1,r,v_1}$ and $\tuple{u_1,r,v_2}$;
\item $D = \{1, 0.9, 0.8, 0.7, 0.4, 0.3, 0.2\}$. 
\end{itemize}

In the first four iterations of the ``foreach'' loop at statement~\ref{step: alg2 13}, which are executed with $d \in \{1, 0.9, 0.8, 0.7\}$, the inner ``while'' loop terminates immediately.

Consider the fifth iteration of the ``foreach'' loop at statement~\ref{step: alg2 13}, with $d = 0.4$:
\begin{itemize}
\item The first iteration of the inner ``while'' loop is executed with $\tuple{x,R,y} = \tuple{u_1, r, v_2}$, in which:
    \begin{itemize}
    \item since $\bB.\findBlock(v_2, 0.4) = B_v$ and $B_v.\repr = \Null$, $v_2$ is added to $\Delta^\mIp$ and $B_v.\repr$ is set to $v_2$;
    \item the triples $\tuple{v_2,r,v_1}$ and $\tuple{v_2,r,w_1}$ are inserted into $Q$ and $r^\mIp(u_1,v_2)$ is set to $0.4$.
    \end{itemize}
\item The second iteration of the inner ``while'' loop is executed with $\tuple{x,R,y} = \tuple{v_2, r, v_1}$. In this iteration, since $\bB.\findBlock(v_1, 0.4) = B_v$ and $B_v.\repr = v_2$, $r^\mIp(v_2,v_2)$ is set to $0.4$.
\item The third iteration of the inner ``while'' loop is executed with $\tuple{x,R,y} = \tuple{v_2, r, w_1}$, in which:
    \begin{itemize}
    \item since $\bB.\findBlock(w_1, 0.4) = B_w$ and $B_w.\repr = \Null$, $w_1$ is added to $\Delta^\mIp$ and $B_w.\repr$ is set to $w_1$;
    \item the triple $\tuple{w_1,s,u_2}$ is inserted into $Q$ and $r^\mIp(v_2,w_1)$ is set to $0.4$.
    \end{itemize}
\end{itemize}

Consider the sixth iteration of the ``foreach'' loop at statement~\ref{step: alg2 13}, with $d = 0.3$. During this iteration, the inner ``while'' loop executes only one iteration, with $\tuple{x,R,y} = \tuple{u_1,r,v_1}$, in which we have $\bB.\findBlock(v_1, 0.3) = B_v$, with $B_v.\repr = v_2$, and $\mIp$ remains unchanged. 

Consider the seventh iteration of the ``foreach'' loop at statement~\ref{step: alg2 13}, with $d = 0.2$. During this iteration, the inner ``while'' loop executes only one iteration, with $\tuple{x,R,y} = \tuple{w_1,s,u_2}$, in which, since $\bB.\findBlock(u_2, 0.2) = B_u$ and $B_u.\repr = u_1$, $s^\mIp(w_1,u_1)$ is set to $0.2$.

After the seventh iteration, the ``foreach'' loop at statement~\ref{step: alg2 13} terminates and the algorithm returns $\mIp$ with $|\Delta^\mIp| = 3$.
\myend
\end{example}    

\begin{figure}[t]
\begin{center}
\begin{tabular}{|c|c|}
\hline
\begin{tikzpicture}[->,>=stealth,auto,black]
	\node (I) {$\mI:$};
	\node (bI) [node distance=0.3cm, below of=I] {};
	\node (u1) [node distance=1cm, right of=bI] {$u_1\!:\!a$};
	\node (v1) [node distance=1.7cm, below of=u1] {$v_1$};
	\node (w1) [node distance=1.7cm, below of=v1] {$w_1\!:\!A_{\,1}\ \,$};
	\node (u2) [node distance=2cm, right of=u1] {$u_2\!:\!b$};
	\node (v2) [node distance=2cm, right of=v1] {$v_2$};
	\node (w2) [node distance=2cm, right of=w1] {$w_2\!:\!A_{\,0.8}$};
	\draw (u1) to node [left]{\footnotesize{1}} (v1);	
	\draw (v1) to node [left]{\footnotesize{1}} (w1);	
	\draw (u2) to node [right]{\footnotesize{0.9}} (v2);	
	\draw (v2) to node [right]{\footnotesize{0.9}} (w2);	
\end{tikzpicture}
&
\begin{tikzpicture}[->,>=stealth,auto,black]
	\node (I) {$\mIp:$};
	\node (bI) [node distance=0.3cm, below of=I] {};
	\node (u1) [node distance=1cm, right of=bI] {$u_1:a,b$};
	\node (v1) [node distance=1.7cm, below of=u1] {$v_1$};
	\node (w1) [node distance=1.7cm, below of=v1] {$w_1\!:\!A_{\,1}\ \,$};
	\draw (u1) to node [left]{\footnotesize{0.8}} (v1);	
	\draw (v1) to node [left]{\footnotesize{0.8}} (w1);	
\end{tikzpicture}
\\
\hline
\end{tabular}
\caption{Fuzzy interpretation used in Example~\ref{example: YNSJA}.\label{fig: YNSJA}}
\end{center}
\end{figure}
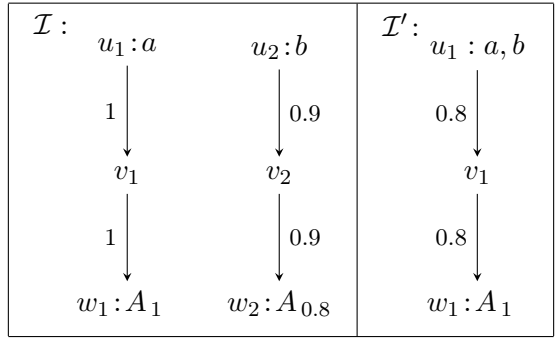

The following example shows that there are cases in which executing Algorithm~\ref{alg2} results in a domain reduction only when $\gamma < 1$.

\begin{example}\label{example: YNSJA}
Let $\RN=\{r\}$, $\CN = \{A\}$, $\IN = \{a,b\}$ and $\Phi = \emptyset$. Consider the fuzzy interpretation $\mI$ illustrated in Figure~\ref{fig: YNSJA} and specified below:
\begin{itemize}
\item $\Delta^\mI = \{u_1,u_2,v_1,v_2,w_1,w_2\}$,\ $a^\mI = u_1$,\ $b^\mI = u_2$,\ $A^\mI = \{w_1\!:\!1,w_2\!:\!0.8\}$,
\item $r^\mI = \{\tuple{u_1,v_1}\!:\!1,\ \tuple{v_1,w_1}\!:\!1,\ \tuple{u_2,v_2}\!:\!0.9,\ \tuple{v_2,w_2}\!:\!0.9\}$.
\end{itemize}
The specification of this fuzzy interpretation is stored in the file {\em in3.txt} of~\cite{min2025-prog}. 
The compact fuzzy partition $\bB$ corresponding to the greatest fuzzy $\Phi$-auto-bisimulation of~$\mI$ is $\{\{\{u_1\}_1,\{u_2\}_1\}_{0.8}$, $\{\{v_1\}_1,\{v_2\}_1\}_{0.8}$, $\{\{w_1\}_1,\{w_2\}_1\}_{0.8}\}_0$. Denote $B_u = \{\{u_1\}_1,\{u_2\}_1\}_{0.8}$, $B_v = \{\{v_1\}_1,\{v_2\}_1\}_{0.8}$ and $B_w = \{\{w_1\}_1,\{w_2\}_1\}_{0.8}$. Thus, $\bB = \{B_u, B_v, B_w\}_0$. 

Executing Algorithm~\ref{alg2} on $\mI$ with $\Phi$ and any $\gamma \in (0.8, 1]$ yields no domain reduction. In particular, when \mbox{$\gamma = 1$} the algorithm returns the same fuzzy interpretation~$\mI$. Consider the execution of Algorithm~\ref{alg2} for $\mI$ using $\Phi$ and $\gamma = 0.8$. The resulting fuzzy interpretation $\mI'$ is illustrated in Figure~\ref{fig: YNSJA}. Details are provided below.

After executing statements \ref{step: alg2 1}-\ref{step: alg2 12}, we have:
\begin{itemize}
\item $\Delta^\mIp = \{u_1\}$, $a^\mIp = b^\mIp = u_1$;
\item for $B \in \bB.\blocks()$, $B.\repr = u$ if $B = B_u$ or $B = \bB$, and $B.\repr = \Null$ otherwise;
\item $D = \{0.8\}$ and $Q$ contains only $\tuple{u_1,r,v_1}$.
\end{itemize}

Consider the only iteration of the ``foreach'' loop at statement~\ref{step: alg2 13}, with $d = 0.8$. 
The first iteration of the inner ``while'' loop is executed with $\tuple{x,R,y} = \tuple{u_1, r, v_1}$, in which:
    \begin{itemize}
    \item since $\bB.\findBlock(v_1, 0.8) = B_v$ and $B_v.\repr = \Null$, $v_1$ is added to $\Delta^\mIp$ and $B_v.\repr$ is set to~$v_1$;
    \item the triple $\tuple{v_1,r,w_1}$ is inserted into $Q$ and $r^\mIp(u_1,v_1)$ is set to $0.8$.
    \end{itemize}
The second iteration of the inner ``while'' loop is executed with $\tuple{x,R,y} = \tuple{v_1, r, w_1}$, in which:
    \begin{itemize}
    \item since $\bB.\findBlock(w_1, 0.8) = B_w$ and $B_w.\repr = \Null$, $w_1$ is added to $\Delta^\mIp$ and $B_w.\repr$ is set to $w_1$;
    \item $r^\mIp(v_1,w_1)$ is set to $0.8$.
    \end{itemize}
The fuzzy interpretation $\mIp$ returned by Algorithm~\ref{alg2} has $|\Delta^\mIp| = 3$.
\myend
\end{example}

{\markRed
Further examples, which involve inverse roles and/or nominals (i.e., the cases where $\emptyset \subset \Phi \subseteq \{I, O\}$), are presented in~\ref{appendix A}. All of the above examples (together with the latter) are intended to aid in understanding Algorithm~\ref{alg2}. Below, we provide an additional high-level intuition for this algorithm.

Let $\mI$ be an input finite fuzzy interpretation, $Z_0$ the greatest fuzzy $\Phi$-auto-bisimulation of~$\mI$, and $\bB$ the compact fuzzy partition corresponding to~$Z_0$.  
Recall that, for $x,x' \in \Delta^\mI$ and $d \in [0,1]$, $Z_0(x,x') \geq d$ iff there exists $B \in \bB.\blocks()$ such that $B.\degree \geq d$ and $x,x' \in B.\elements()$. Consequently, if $B = \bB.\findBlock(x,d)$, then $B.\elements()$ is the subset of $\Delta^\mI$ consisting of all $x'$ such that $Z_0(x,x') \geq d$. 

Algorithm~\ref{alg2} constructs a reduced fuzzy interpretation $\mIp$ by selecting certain individuals from $\Delta^\mI$ to include in $\Delta^\mIp$, and by setting $A^\mIp(x) = A^\mI(x)$ for all $A \in \CN$ and all $x \in \Delta^\mIp$. To ensure that $\gamma \le (C^\mI(a^\mI) \fequiv C^\mIp(a^\mIp))$ holds for every concept $C$ of $\mLP$ and every $a \in \IN$, it suffices -- by Theorem~\ref{theorem: JHDHJ} -- to guarantee the existence of a fuzzy $\Phi$-bisimulation $Z$ between $\mI$ and $\mIp$ such that $\gamma \le Z(a^\mI, a^\mIp)$. Such a fuzzy relation $Z$ will be explicitly defined in the proof of correctness of Algorithm~\ref{alg2} (Lemma~\ref{lemma: ISJNA}), based on $Z_0$. Below we provide the intuition behind the existence of such a fuzzy $\Phi$-bisimulation.

In the first phase (statements~\ref{step: alg2 1}--\ref{step: alg2 11}), Algorithm~\ref{alg2} computes $\bB$ and sets $\Delta^\mIp$ to a maximal subset of $\{a^\mI \mid a \in \IN\}$ such that, whenever $a^\mI, b^\mI \in \Delta^\mIp$ and $a^\mI \neq b^\mI$, we have $Z_0(a^\mI,b^\mI) < \gamma$. Thus, if $Z_0(a^\mI,b^\mI) \ge \gamma$, then at most one of $a^\mI$ and $b^\mI$ belongs to $\Delta^\mIp$. 
One of the goals of the next phase (statements \ref{step: alg2 12m}--\ref{step: alg2 24}) is to ensure the following: for every fuzzy value $d \in (0,\gamma]$ used in $\mI$, for every $x \in \Delta^\mIp$ and $y \in \Delta^\mI$, and for every basic role $R$ w.r.t.\ $\Phi$, if $R^\mI(x,y) \geq d$, then there exists $y' \in \Delta^\mIp$ such that $Z_0(y,y') \geq d$ and $R^\mIp(x,y') \geq d$. This corresponds to the condition~\eqref{eq: FB 3} (with $x' = x$). Given $d$, $x$, $y$ and $R$ as mentioned, with $B = \bB.\findBlock(y,d)$, $B.\repr$ (the representative of $B$) is set to $y$ if it is still $\Null$, and then with $y' = B.\repr$, $R^\mIp(x,y')$ is set to $d$ if it is still 0 (i.e., has not been set earlier). With such a chosen $y'$, we have $Z_0(y,y') \geq d$. 

The main loop of the algorithm processes the possible fuzzy values $d$ in decreasing order, which brings several advantages. First, if the mentioned $R^\mIp(x,y')$ was set earlier, let's say, to $d'$, then $d' \geq d$ and we have $R^\mIp(x,y') \geq d$ as expected. Second, if $R^\mIp(x,y')$ is set to $d$ in the step described above, then this not only ensures the condition~\eqref{eq: FB 3} (with $x' = x$), but also increases the likelihood of satisfying the condition~\eqref{eq: FB 4} (with $x'$ and $x$ replaced by $x$ and some other individual, respectively), due to the decreasing order of $d$. 
Third, and most importantly, setting $B.\repr = y$ for $B = \bB.\findBlock(y,d)$ allows $y$ to serve as a representative not only of $B$ but, by propagation (statements~\ref{step: alg2 10} and~\ref{step: alg2 20}), also of all individuals that are ``equivalent'' to $y$ at any level below $d$, provided they did not already have representatives. Processing the fuzzy values $d$ in decreasing order increases the number of elements of $\Delta^\mI$ that those added to $\Delta^\mIp$ can represent. This is how the algorithm minimizes $\Delta^\mIp$ while ensuring that $\mIp$ satisfies the required properties.
}

The following lemma provides some detailed properties of Algorithm~\ref{alg2}.

\begin{lemma}\label{lemma: JHJWS}
Consider Algorithm~\ref{alg2}. It holds that 
\begin{enumerate}[(a)]
\item\label{JHJWS1} for every $B \in \bB.\blocks()$, if $B.\repr \neq \Null$, then $B.\repr \in B.\elements()$;
\item\label{JHJWS2} $\Delta^\mIp = \{B.\repr \mid B \in \bB.\blocks(), B.\repr \neq \Null\} \subseteq \Delta^\mI$;
\item\label{JHJWS3} for every $B' \in \bB.\blocks()$ and every $B \in B'.\blocks()$, if $B'.\repr = \Null$, then $B.\repr = \Null$.
\end{enumerate}
\end{lemma}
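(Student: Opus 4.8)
The plan is to prove the three statements essentially by simultaneous induction on the execution trace of Algorithm~\ref{alg2}, tracking how the attribute $B.\repr$ gets modified. The key observation is that $B.\repr$ is written only in two places: the \textbf{repeat} loop at statement~\ref{step: alg2 10} (during the \textbf{foreach} over $\IN$) and the \textbf{repeat} loop at statement~\ref{step: alg2 20} (inside the inner \textbf{while} loop). In both cases, the loop starts from a block $B$ obtained via $\bB.\findBlock(z,d)$ for some $z \in \Delta^\mI$ and some threshold ($\gamma$ or $d$), sets $B.\repr := z$, and then walks upward via $B := B.\parent()$, setting each ancestor's $\repr$ to the same $z$, stopping as soon as it reaches $\Null$ (the root has no parent) or a block whose $\repr$ is already non-$\Null$. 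Also note that once $B.\repr$ is set to a non-$\Null$ value it is never changed again (the \textbf{if} guards at statements~\ref{step: alg2 8}/\ref{step: alg2 17} and the termination condition of the \textbf{repeat} loops ensure we never overwrite). So $B.\repr$ transitions from $\Null$ to some element exactly once.

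For part~(\ref{JHJWS1}), I would show the invariant ``for every $B$, if $B.\repr \neq \Null$ then $B.\repr \in B.\elements()$'' is preserved by each write. When $B.\repr$ is set to $z$ at statement~\ref{step: alg2 10} or~\ref{step: alg2 20}, I need $z \in B.\elements()$. For the base block (the one returned by $\findBlock$) this is immediate from the definition of $\findBlock$, which requires $z \in B.\elements()$. For an ancestor $B''$ reached later in the \textbf{repeat} walk, the current block $B'$ before the step satisfies $B' \in B''.\subblocks$ (that is the definition of $B'' = B'.\parent()$), hence $B'.\elements() \subseteq B''.\elements()$ by the recursive definition of $\elements()$, and inductively $z \in B'.\elements()$, so $z \in B''.\elements()$. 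For part~(\ref{JHJWS2}), the inclusion $\Delta^\mIp \subseteq \Delta^\mI$ and the set equality follow by inspecting the only statements that add elements to $\Delta^\mIp$: statement~\ref{step: alg2 8} adds $a^\mI \in \Delta^\mI$ while simultaneously (in the \textbf{repeat} loop) setting $\repr$ of the corresponding $\findBlock$ block to $a^\mI$, and statement~\ref{step: alg2 18} adds $y \in \Delta^\mI$ while statement~\ref{step: alg2 20} sets $B.\repr := y$. Conversely, every time some $B.\repr$ is set to a value $v$, that $v$ was added to $\Delta^\mIp$ in the same step (or, for ancestors, in the triggering step). Combined with the fact that $\repr$ values never change, this gives the two-way containment; one also has to note distinct blocks can share a $\repr$ value but the set on the right is still exactly $\Delta^\mIp$.

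For part~(\ref{JHJWS3}), I would argue the contrapositive-flavored invariant: whenever $B'.\repr = \Null$, all descendants of $B'$ (strictly, all $B \in B'.\blocks()$) also have $\repr = \Null$. This is where the upward-walking structure of the \textbf{repeat} loops is essential. The point is that a descendant $B$ can only get a non-$\Null$ $\repr$ via a \textbf{repeat} walk that started at some $\findBlock$ block $B_0$ with $B \in$ the walk's path (so $B$ is an ancestor of or equal to $B_0$, hence on the chain from $B_0$ up to the root). If that walk assigned $\repr$ to $B$, it must have passed through every block between $B_0$ and $B$ on the parent-chain, in particular through $B'$ (which lies between $B$ and $B_0$ since $B \in B'.\blocks()$ means $B'$ is an ancestor of $B$... wait, $B \in B'.\blocks()$ means $B$ is a descendant of $B'$, so $B'$ is an ancestor of $B$, which is \emph{above} $B$, whereas the walk goes from $B_0$ \emph{up} through $B$ and then would continue to $B'$). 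So if the walk reached $B$ and assigned it, the walk's termination condition (``$B' = \Null$ or $B'.\repr \neq \Null$'') means it continued upward past $B$; it then either assigned $B'.\repr := z \neq \Null$, contradicting $B'.\repr = \Null$, or it stopped at $B'$ only because $B'.\repr$ was already non-$\Null$ — again contradicting $B'.\repr = \Null$. (Here I use that $\repr$ is monotone: once non-$\Null$, forever non-$\Null$, so ``$B'.\repr = \Null$ now'' implies it was $\Null$ throughout.)

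The main obstacle I anticipate is bookkeeping the interaction between the two different \textbf{repeat}-loop sites and the fact that blocks may be visited across several iterations of the outer \textbf{foreach} ($d \in D$) loop: I need a clean statement that $\repr$ is monotone in time (never reset to $\Null$, never overwritten once set) and that every non-$\Null$ assignment to a block $B$ is part of a contiguous upward walk from some $\findBlock$-block through $B$ that does not stop strictly below $B$. Making that walk-structure precise — essentially ``the set of blocks with non-$\Null$ $\repr$ is, at all times, upward-closed under $\parent$ within each maximal chain, i.e.\ closed under taking ancestors'' — is the real content; parts (a), (b), (c) then fall out. I would phrase (\ref{JHJWS3}) directly as ``$\{B : B.\repr \neq \Null\}$ is closed under $\parent()$'' and prove \emph{that} as the core invariant, from which (\ref{JHJWS3}) is the contrapositive restricted along a descendant chain.
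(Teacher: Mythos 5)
Your proposal is correct and follows essentially the same route as the paper's proof: identifying the two \textbf{repeat}-loop write sites, proving~(a) by induction along the upward walk starting from the $\findBlock$ block, reading~(b) off the statements that extend $\Delta^\mIp$, and deriving~(c) from the fact that the walks make the set of blocks with non-$\Null$ $\repr$ closed under $\parent()$. The paper merely states these steps far more tersely (e.g., ``(b) clearly holds''), so your version is the same argument with the details filled in.
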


\begin{proof}
Consider the assertion~\eqref{JHJWS1} and suppose that $B \in \bB.\blocks()$ and $B.\repr \neq \Null$. If $B.\repr$ is set at the first iteration of the ``repeat'' loop at statement~\ref{step: alg2 10}, then $B = \bB.\findBlock(a^\mI, \gamma)$ (from statement~\ref{step: alg2 6}) and hence $B.\repr = a^\mI \in B.\elements()$. Similarly, if $B.\repr$ is set at the first iteration of the ``repeat'' loop at statement~\ref{step: alg2 20}, then $B = \bB.\findBlock(y, d)$ (from statement~\ref{step: alg2 16}) and hence $B.\repr = y \in B.\elements()$. If $B.\repr$ is set in a later (i.e., not the first) iteration of either of those two loops, then the assertion $B.\repr \in B.\elements()$ follows from the corresponding assertion for the previous iteration. 

The assertion~\eqref{JHJWS2} clearly holds. 
The assertion~\eqref{JHJWS3} holds due to statements~\ref{step: alg2 10} and~\ref{step: alg2 20} of Algorithm~\ref{alg2}.
\myend
\end{proof}

The following lemma states that Algorithm~\ref{alg2} returns a reduction of the input fuzzy interpretation $\mI$ that preserves fuzzy concept assertions of $\mLP$ up to degree~$\gamma$.

\begin{lemma}\label{lemma: ISJNA}
Let $\mI$ be a finite fuzzy interpretation and let $\Phi \subseteq \{I,O\}$ and $\gamma \in (0,1]$. Let $\mIp$ be the fuzzy interpretation returned by Algorithm~\ref{alg2} when executed for $\mI$ using $\Phi$ and $\gamma$. Then, for every concept $C$ of $\mLP$ and every $a \in \IN$, $\gamma \leq (C^\mI(a^\mI) \fequiv C^\mIp(a^\mIp))$.
\end{lemma}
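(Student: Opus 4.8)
The natural strategy is to exhibit a fuzzy $\Phi$-bisimulation $Z$ between $\mI$ and $\mIp$ with $Z(a^\mI, a^\mIp) \geq \gamma$ for every $a \in \IN$, and then invoke Theorem~\ref{theorem: JHDHJ} to conclude $\gamma \leq Z(a^\mI,a^\mIp) \leq (C^\mI(a^\mI) \fequiv C^\mIp(a^\mIp))$ for every concept $C$ of $\mLP$. Since $\Delta^\mIp \subseteq \Delta^\mI$ by Lemma~\ref{lemma: JHJWS}\eqref{JHJWS2}, the candidate relation is essentially ``being identified at a sufficiently high level in $\bB$'': for $x \in \Delta^\mI$ and $x' \in \Delta^\mIp$, set
\[ Z(x,x') = \min\{\gamma,\ \sup\{d \mid x,x' \in B.\elements() \text{ for } B = \bB.\findBlock(x',d)\}\}, \]
or, more usefully, $Z(x,x') = \min\{\gamma, E(x,x')\}$ where $E$ is the greatest fuzzy $\Phi$-auto-bisimulation of $\mI$ (whose compact fuzzy partition is $\bB$), possibly further capped so that $Z(x,x')$ is only positive when $x'$ is the representative chosen for the block containing $x$ at the relevant level. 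The cleanest formulation is probably: $Z(x,x') = \min\{\gamma, E(x,x')\}$ if $x' = (\bB.\findBlock(x, E(x,x'))).\repr$ holds along the way, and a case analysis will show this coincides with $\min\{\gamma,E(x,x')\}$ whenever the right-hand side is positive, because every block of $\bB$ with a positive-enough degree that contains some element of $\Delta^\mIp$ has had its $\repr$ set.

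The key steps, in order, are: (1) establish the structural invariant that the algorithm maintains --- namely, that after termination, for every $B \in \bB.\blocks()$ with $B.\degree \geq d^\ast$ for the appropriate threshold, if $B.\elements() \cap \Delta^\mIp \neq \emptyset$ then $B.\repr \neq \Null$ and $B.\repr \in B.\elements() \cap \Delta^\mIp$; combine this with Lemma~\ref{lemma: JHJWS} to pin down $\Delta^\mIp$ exactly as a set of block representatives. (2) Show the ``forth'' condition \eqref{eq: FB 3}: given $x \in \Delta^\mI$, $x' \in \Delta^\mIp$ with $Z(x,x') > 0$ and a basic role $R$ with $R^\mI(x,y) > 0$, we need $y' \in \Delta^\mIp$ with $Z(y,y') \fand R^\mIp(x',y') \geq Z(x,x') \fand R^\mI(x,y)$; here one uses that $E$ is a fuzzy auto-bisimulation of $\mI$ to transfer the $R$-edge from $x$ to $x'$ inside $\mI$ (getting some $y_0$ with $E(y,y_0)$ and $R^\mI(x',y_0)$ large), and then uses the fact that the triple $\tuple{x',R,y_0}$ was enqueued when $x'$ entered $\Delta^\mIp$ (statements~\ref{step: alg2 12m} and~\ref{step: alg2 insertions into Q}) and processed at level $d = \min\{\gamma, R^\mI(x',y_0)\}$, causing $R^\mIp(x', y')$ to be set to that value for $y' = (\bB.\findBlock(y_0,d)).\repr$; then $Z(y,y') \geq \min\{\gamma, E(y,y_0), E(y_0,y')\}$ is large enough. (3) Show the ``back'' condition \eqref{eq: FB 4}: every nonzero $R^\mIp(x',y')$ was created at some level $d$ from a processed triple $\tuple{x',R,y_0}$ with $R^\mI(x',y_0) \geq d$ and $y' = (\bB.\findBlock(y_0,d)).\repr$, and one pulls this edge back from $x' \in \Delta^\mI$ into $\mI$ via $E$. (4) Verify the atomic-concept condition \eqref{eq: FB 2}: since $A^\mIp(x') = A^\mI(x')$ for $x' \in \Delta^\mIp$ (the contents are copied verbatim in statements near \ref{step: alg2 8} and \ref{step: alg2 18}), $Z(x,x') \leq E(x,x') \leq (A^\mI(x) \fequiv A^\mI(x')) = (A^\mI(x) \fequiv A^\mIp(x'))$. (5) If $O \in \Phi$, verify \eqref{eq: FB 5} using that $a^\mIp$ is set to the representative of $\bB.\findBlock(a^\mI,\gamma)$ and that $E$ respects nominal labels. (6) Finally, check $Z(a^\mI,a^\mIp) \geq \gamma$: in the initialization loop (statements~\ref{step: alg2 6}--\ref{step: alg2 11}), $a^\mIp$ is either $a^\mI$ itself (so $Z = \gamma$) or the $\repr$ of $B = \bB.\findBlock(a^\mI,\gamma)$, which lies in $B.\elements()$ with $B.\degree \geq \gamma$, hence $E(a^\mI, a^\mIp) \geq \gamma$.

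The main obstacle will be step (2), the ``forth'' condition, and specifically the bookkeeping that guarantees the relevant triple is actually in $Q$ and is processed \emph{at a high enough level} $d$. One has to argue carefully that when $x'$ was added to $\Delta^\mIp$, \emph{all} triples $\tuple{x',R,z}$ with $R^\mI(x',z) > 0$ were inserted into $Q$ (statement~\ref{step: alg2 12m} for individuals added during initialization, statement~\ref{step: alg2 insertions into Q} for those added later), and that such a triple with priority $R^\mI(x',z) \geq d$ for the needed $d \leq \min\{\gamma, R^\mI(x,y)\}$ was extracted before the loop over $D$ finished --- which holds because $d$, being either $\gamma$ or a value $r^\mI(\cdot,\cdot) < \gamma$, lies in $D$, and the ``while'' guard at statement~\ref{step: alg2 14} extracts every triple of priority $\geq d$. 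A subtlety is that the $\repr$ of the target block may have been set by an \emph{earlier} triple (possibly at a strictly higher level), so $R^\mIp(x',y')$ might already be nonzero or might get set to a larger value than expected; one must check the inequality $R^\mIp(x',y') \geq \min\{\gamma, R^\mI(x,y)\}$ still holds in all these cases, using monotonicity of $\bB.\findBlock$ in $d$ and the fact that a larger stored value only helps. Another delicate point is the interaction with inverse roles when $I \in \Phi$: a triple $\tuple{x',\cnv r,y_0}$ causes $r^\mIp(y', x')$ to be set (statement~\ref{step: alg2 24}), so one must match the semantics $(\cnv r)^\mIp(x',y') = r^\mIp(y',x')$ correctly when verifying \eqref{eq: FB 3}--\eqref{eq: FB 4} for $R = \cnv r$. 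Once the invariant in step (1) is stated precisely, the remaining verifications are routine computations with the G\"odel operations $\min$ and $\fto$.
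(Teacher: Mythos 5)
Your overall strategy is the paper's: exhibit a fuzzy $\Phi$-bisimulation $Z$ between $\mI$ and $\mIp$ with $Z(a^\mI,a^\mIp)\geq\gamma$ and invoke Theorem~\ref{theorem: JHDHJ}. However, your candidate relation is wrong, and the error sits exactly at the point you yourself flag as the main obstacle. You propose $Z(x,x')=\min\{\gamma,E(x,x')\}$ (for $x'\in\Delta^\mIp$), argue that any extra capping ``coincides with $\min\{\gamma,E(x,x')\}$ whenever the right-hand side is positive,'' and in step~(2) assert that the triple $\tuple{x',R,y_0}$ is ``processed at level $d=\min\{\gamma,R^\mI(x',y_0)\}$.'' That assertion is false: if $x'$ enters $\Delta^\mIp$ at statement~\ref{step: alg2 18} during the outer iteration for some level $d_{x'}<\min\{\gamma,R^\mI(x',y_0)\}$, its outgoing triples are only inserted into $Q$ at that point and the one with priority $R^\mI(x',y_0)\geq d_{x'}$ is extracted in the \emph{same} inner loop, so $R^\mIp(x',\cdot)$ is set to $d_{x'}$, not to $\min\{\gamma,R^\mI(x',y_0)\}$. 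Consequently every outgoing edge of $x'$ in $\mIp$ has degree at most $d_{x'}$, and the forth condition~\eqref{eq: FB 3} fails for your $Z$. A concrete counterexample is already in the paper: in Example~\ref{example: HGRJS}, $v_2$ is added at level $0.4$ and all edges $r^\mIp(v_2,\cdot)$ equal $0.4$, yet $r^\mI(v_2,v_1)=0.7$ and $E(v_2,v_2)=1$; with $\gamma=1$ your $Z(v_2,v_2)=1$, so~\eqref{eq: FB 3} would demand some $y'$ with $Z(v_1,y')\fand r^\mIp(v_2,y')\geq 0.7$, which is impossible. Your worry that the stored value might be ``larger than expected'' is the harmless direction; the fatal direction, a smaller value caused by late insertion into $Q$, is not addressed, and your proposed ``further capping'' (restricting which pairs are positive) does not repair it.

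The fix is the paper's construction: when $x'$ is added to $\Delta^\mIp$ at level $d_{x'}$ (with $d_{x'}=\gamma$ for the representatives created in the initialization loop), define $Z(v,x')=\min(d_{x'},E(v,x'))$ for all $v\in\Delta^\mI$, i.e.\ cap the value at the \emph{entry level} of $x'$, not merely at $\gamma$. Then $d_0=Z(x_0,x')\fand R^\mI(x_0,y_0)\leq d_{x'}$, the relevant triple is guaranteed to be processed at some level $d\geq d_0$, and $R^\mIp(x',y')\geq d\geq d_0$ follows; the back conditions and conditions \eqref{eq: FB 2} and \eqref{eq: FB 5} go through as you sketch, and $Z(a^\mI,a^\mIp)=\gamma$ still holds since named representatives enter at level $\gamma$. (This finer capping is also what the minimality argument in Lemma~\ref{lemma: JRJSL} later relies on, via $Z(v,v)=d_v$.) The remaining steps of your outline --- copying of concept memberships, the nominal condition, the inverse-role bookkeeping at statement~\ref{step: alg2 24}, and the initialization giving $Z(a^\mI,a^\mIp)\geq\gamma$ --- are consistent with the paper's proof.
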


\begin{proof}
%
Let $Z_0$ be the greatest fuzzy $\Phi$-auto-bisimulation of $\mI$ and let $Z \in \mF(\Delta^\mI \times \Delta^\mIp)$ be the fuzzy relation constructed during the execution of Algorithm~\ref{alg2} as follows. Initially, $Z$ is initialized to the empty fuzzy relation, and non-zero pairs are subsequently added to $Z$ in the following manner. After executing statement~\ref{step: alg2 8}, define $Z(v,a^\mIp) = \min(\gamma, Z_0(v,a^\mIp))$ for all $v \in \Delta^\mI$. Similarly, after executing statement~\ref{step: alg2 18}, set $Z(v,y) = \min(d, Z_0(v,y))$ for all $v \in \Delta^\mI$. 
We first prove that $Z$ is a fuzzy $\Phi$-bisimulation between~$\mI$ and~$\mIp$. 

Observe that the condition~\eqref{eq: FB 2} holds due to $Z_0$ and the setting of $A^\mIp$, for $A \in \CN$. 

Consider the condition~\eqref{eq: FB 3} in the following form
\[ \V y_0 \in \Delta^\mI\ \E y' \in \Delta^\mIp\, \left(Z(x_0,x) \fand R^\mI(x_0,y_0) \leq Z(y_0,y') \fand R^\mIp(x,y')\right), \]
with $x_0 \in \Delta^\mI$, $x \in \Delta^\mIp$ and $R$ being a basic role w.r.t.~$\Phi$. Let $y_0 \in \Delta^\mI$, \mbox{$d_0 = Z(x_0,x) \fand R^\mI(x_0,y_0)$} and suppose $d_0 > 0$. We need to show that there exists $y' \in \Delta^\mIp$ such that 
\begin{equation}\label{eq: JRNZW}
d_0 \leq Z(y_0,y') \fand R^\mIp(x,y').
\end{equation}
Since $Z_0(x_0,x) \geq Z(x_0,x) \geq d_0$, we have \mbox{$d_0 \leq Z_0(x_0,x) \fand R^\mI(x_0,y_0)$}. By~\eqref{eq: FB 3} with $\mIp$, $Z$, $x$, $x'$ and $y$ replaced by $\mI$, $Z_0$, $x_0$, $x$ and $y_0$, respectively, it follows that there exists $y \in \Delta^\mI$ such that 
\begin{equation}\label{eq: HFKEX}
d_0 \leq Z_0(y_0,y) \fand R^\mI(x,y).
\end{equation}
Consider the iteration of the ``while'' loop at statement~\ref{step: alg2 14} for $\tuple{x,R,y}$. Let $d$, $B$ and $y'$ be the involved variables in that iteration. Since $Z(x_0,x) \geq d_0$ and $R^\mI(x,y) \geq d_0$, we must have $d \geq d_0$. 
In addition, $B = \bB.\findBlock(y,d)$ and $y' = B.\repr$. After that iteration, we have $R^\mIp(x, y') \geq d \geq d_0$. In addition, by Lemma~\ref{lemma: JHJWS}\eqref{JHJWS1}, $B.\repr \in B.\elements()$ and hence $Z_0(y,y') \geq d \geq d_0$. By~\eqref{eq: HFKEX}, $Z_0(y_0,y) \geq d_0$. Hence, $Z_0(y_0,y') \geq d_0$. As $y'$ has been added to $\Delta^\mIp$ no later than during the mentioned iteration of the ``while'' loop, which is involved with $d \geq d_0$, it follows that $Z(y_0,y') \geq d_0$. This completes the proof of~\eqref{eq: JRNZW}.

We prove~\eqref{eq: FB 4} in the following forms
\begin{eqnarray}
&& \E y_0 \in \Delta^\mI\, \left(Z(x_0,x) \fand R^\mIp(x,y') \leq Z(y_0,y') \fand R^\mI(x_0,y_0)\right), \label{eq: IUENA 1}\\
&& \E x_0 \in \Delta^\mI\, \left(Z(y_0,y') \fand R^\mIp(x,y') \leq Z(x_0,x) \fand R^\mI(x_0,y_0)\right), \label{eq: IUENA 2}
\end{eqnarray}
where $x_0$ in~\eqref{eq: IUENA 1} and $y_0$ in \eqref{eq: IUENA 2} are elements of $\Delta^\mI$, $x,y' \in \Delta^\mIp$, $R$ is a basic role w.r.t.~$\Phi$, $I \in \Phi$ in the case of \eqref{eq: IUENA 2}, and $R^\mIp(x,y')$ is assigned a value by statement~\ref{step: alg2 23} or~\ref{step: alg2 24}.

Consider~\eqref{eq: IUENA 1}. 
Let $d_0 = Z(x_0,x) \fand R^\mIp(x,y')$ and suppose $d_0 > 0$. We need to show that there exists $y_0 \in \Delta^\mI$ such that
\begin{equation}\label{eq: JDKKA}
d_0 \leq Z(y_0,y') \fand R^\mI(x_0,y_0).
\end{equation}
Since $Z(x_0,x) \geq d_0$, we have $Z_0(x_0,x) \geq d_0$. 
Let $d = R^\mIp(x,y')$. Thus, $d \geq d_0$. Consider the iteration of the ``while'' loop in which $R^\mIp(x,y')$ is set to $d$ by statement~\ref{step: alg2 23} or~\ref{step: alg2 24}. Let $y \in \Delta^\mI$ be the involved element in that iteration. We have $y' = B.\repr$, with $B = \bB.\findBlock(y,d)$, and $R^\mI(x,y) \geq d$. 
Hence, $Z_0(x_0,x) \fand R^\mI(x,y) \geq d_0$. By~\eqref{eq: FB 4} with $\mIp$, $Z$, $x$ and $x'$ replaced by $\mI$, $Z_0$, $x_0$ and $x$, respectively, it follows that there exists $y_0 \in \Delta^\mI$ such that 
\begin{equation}\label{eq: JHFJS}
d_0 \leq Z_0(y_0,y) \fand R^\mI(x_0, y_0).
\end{equation}
Thus, $d_0 \leq Z_0(y_0,y)$. In addition, since $y' = \bB.\findBlock(y,d).\repr$, we have $Z_0(y,y') \geq d \geq d_0$. Therefore, $Z_0(y_0,y') \geq d_0$. 
Observe that $y'$ must have been added to $\Delta^\mIp$ no later than during the mentioned iteration of the ``while'' loop, which is involved with $d \geq d_0$. As a consequence, we have $Z(y_0,y') \geq d_0$. This together with~\eqref{eq: JHFJS} implies~\eqref{eq: JDKKA}.

Consider~\eqref{eq: IUENA 2} under the assumption $I \in \Phi$. 
Let $d_0 = Z(y_0,y') \fand R^\mIp(x,y')$ and suppose $d_0 > 0$. We need to show that there exists $x_0 \in \Delta^\mI$ such that
\begin{equation}\label{eq: JDKKA 2}
d_0 \leq Z(x_0,x) \fand R^\mI(x_0,y_0).
\end{equation}
Since $Z(y_0,y') \geq d_0$, we have $Z_0(y_0,y') \geq d_0$. 
Let $d = R^\mIp(x,y')$. Thus, $d \geq d_0$. Consider the iteration of the ``while'' loop in which $R^\mIp(x,y')$ is set to $d$ by statement~\ref{step: alg2 23} or~\ref{step: alg2 24}. Let $y \in \Delta^\mI$ be the involved element in that iteration. We have $y' = B.\repr$, with $B = \bB.\findBlock(y,d)$, and $R^\mI(x,y) \geq d$. 
By Lemma~\ref{lemma: JHJWS}\eqref{JHJWS1}, $B.\repr \in B.\elements()$ and hence $Z_0(y,y') \geq d \geq d_0$. Since $Z_0(y_0,y') \geq d_0$, it follows that $Z_0(y_0,y) \geq d_0$, and consequently, $Z_0(y_0,y) \fand (\cnv{R})^\mI(y,x) \geq d_0$. By~\eqref{eq: FB 4} with $\mIp$, $Z$, $x$, $x'$, $y$, $y'$ and $R$ replaced by $\mI$, $Z_0$, $y_0$, $y$, $x_0$, $x$ and $\cnv{R}$, respectively, it follows that there exists $x_0 \in \Delta^\mI$ such that 
\begin{equation}\label{eq: JHFJS 2}
d_0 \leq Z_0(x_0,x) \fand (\cnv{R})^\mI(y_0, x_0).
\end{equation}
Thus, $Z_0(x_0,x) \geq d_0$. Observe that $x$ must have been added to $\Delta^\mIp$ no later than during the mentioned iteration of the ``while'' loop, which is involved with $d \geq d_0$. As a consequence, we have $Z(x_0,x) \geq d_0$. This together with~\eqref{eq: JHFJS 2} implies~\eqref{eq: JDKKA 2}. 

{\markRed
Consider the condition~\eqref{eq: FB 5} when $O \in \Phi$, with $a \in \IN$. If either ($x = a^\mI$ and $x' = a^\mIp$) or ($x \neq a^\mI$ and $x' \neq a^\mIp$) holds, then \eqref{eq: FB 5} clearly holds. There remain two other cases. 

Consider the case where $x = a^\mI$ and $x' \neq a^\mIp$. Suppose $x' = a^\mI$. Thus, $a^\mIp \neq a^\mI$, and we must have $a^\mIp = \bB.\findBlock(a^\mI,\gamma).\repr$, which implies $Z_0(a^\mI,a^\mIp) \geq \gamma > 0$. By~\eqref{eq: FB 5} with $\mIp$ and $Z$ replaced by $\mI$ and $Z_0$, respectively, it follows that $a^\mIp = a^\mI$, a contradiction. Hence, we must have $x' \neq a^\mI$. By~\eqref{eq: FB 5} with $\mIp$ and $Z$ replaced by $\mI$ and $Z_0$, respectively, it follows that $Z_0(x,x') = 0$. Therefore, $Z(x,x') = 0$ and \eqref{eq: FB 5} holds. 

Consider the case where $x \neq a^\mI$ and $x' = a^\mIp$. If $x' \neq a^\mI$, then $a^\mIp \neq a^\mI$, and as for the above case, this leads to a contradiction. Hence, $x' = a^\mI$. By~\eqref{eq: FB 5} with $\mIp$ and $Z$ replaced by $\mI$ and $Z_0$, respectively, it follows that $Z_0(x,x') = 0$. Therefore, $Z(x,x') = 0$ and \eqref{eq: FB 5} holds.
}

We have proved that $Z$ is a fuzzy $\Phi$-bisimulation between~$\mI$ and~$\mIp$. 

{\markRed
We now show that $Z(a^\mI, a^\mIp) = \gamma$ for all $a \in \IN$. If $a^\mIp = a^\mI$, then $Z_0(a^\mI, a^\mIp) = 1$ and hence $Z(a^\mI, a^\mIp) = \gamma$. 
Consider the case $a^\mIp \neq a^\mI$ and let $B = \bB.\findBlock(a^\mI, \gamma)$. We have $a^\mIp = B.\repr$. By Lemma~\ref{lemma: JHJWS}\eqref{JHJWS1}, $a^\mIp \in B.\elements()$. Hence, $Z_0(a^\mI, a^\mIp) \geq \gamma$, and consequently, $Z(a^\mI, a^\mIp) = \gamma$.

Therefore, by Theorem~\ref{theorem: JHDHJ}, for every concept $C$ of $\mLP$ and every $a \in \IN$, 
\[
    \gamma = Z(a^\mI, a^\mIp) \leq \left(C^\mI(a^\mI) \fequiv C^\mIp(a^\mIp)\right),
\]
which completes the proof. 
}
\myend
\end{proof}

To proceed, we need the following lemma, which is independent of Algorithm~\ref{alg2}.

\begin{lemma}\label{lemma: JHJHS}
Let $\Phi \subseteq \{I,O\}$, $\mI$ be a finite fuzzy interpretation and $Z_0$ the greatest fuzzy $\Phi$-auto-bisimulation of $\mI$. For every $x \in \Delta^\mI$ and $d \in (0,1]$, there exists a concept $C$ of $\mLPp$ such that $C^\mI(x) = 1$ and, for every $x' \in \Delta^\mI$ with $Z_0(x,x') < d$, $C^\mI(x') < d$.
\end{lemma}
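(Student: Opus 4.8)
The plan is to prove the contrapositive-flavored statement directly by exploiting the characterization of $Z_0$ from Theorem~\ref{theorem: fG H-M}: the greatest fuzzy $\Phi$-auto-bisimulation satisfies $Z_0(x,x') = \inf\{C^\mI(x) \fequiv C^\mIp(x') \mid C \text{ a concept of } \mLPp\}$ (with $\mIp = \mI$ here). So for a fixed $x \in \Delta^\mI$ and $d \in (0,1]$, and for each $x'$ with $Z_0(x,x') < d$, there is some concept $C_{x'}$ of $\mLPp$ witnessing $(C_{x'}^\mI(x) \fequiv C_{x'}^\mI(x')) < d$. The idea is to normalize each such witness so that it evaluates to $1$ at $x$, then combine finitely many of them (the domain is finite) into a single concept.

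First I would handle the normalization. Given a witness concept $C_{x'}$ with $(C_{x'}^\mI(x) \fequiv C_{x'}^\mI(x')) < d$, let $e = C_{x'}^\mI(x)$. Using the G\"odel biresiduum formula, $(C_{x'}^\mI(x) \fequiv C_{x'}^\mI(x')) < d$ forces $C_{x'}^\mI(x) \neq C_{x'}^\mI(x')$ and $\min\{C_{x'}^\mI(x), C_{x'}^\mI(x')\} < d$. Replace $C_{x'}$ by the concept $D_{x'} \defeq (C_{x'} \to e) \mand (e \to C_{x'})$, i.e.\ $D_{x'} = (C_{x'} \fequiv e)$ built from the value concept $e \in [0,1]$; note $D_{x'}$ is still a concept of $\mLPp$ since value constants, $\mand$ and $\to$ are all available in $\mLPp$. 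Then $D_{x'}^\mI(z) = (C_{x'}^\mI(z) \fequiv e)$ for all $z$, so $D_{x'}^\mI(x) = (e \fequiv e) = 1$, while $D_{x'}^\mI(x') = (C_{x'}^\mI(x') \fequiv e) = (C_{x'}^\mI(x') \fequiv C_{x'}^\mI(x)) < d$. Thus $D_{x'}$ evaluates to $1$ at $x$ and to something $< d$ at $x'$.

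Next I would take the conjunction over all relevant $x'$. Let $X = \{x' \in \Delta^\mI \mid Z_0(x,x') < d\}$; this is finite. Put $C \defeq \bigsqcap\{D_{x'} \mid x' \in X\}$ (using the finite-conjunction notation from the preliminaries, which includes the trailing $\mand 1$, so $C$ is well-defined even when $X = \emptyset$). Since $\mLPp$ is closed under $\mand$ and contains the constant $1$, $C$ is a concept of $\mLPp$. Because each $D_{x'}^\mI(x) = 1$ and $1^\mI(x) = 1$, we get $C^\mI(x) = \min = 1$. For any $x' \in \Delta^\mI$ with $Z_0(x,x') < d$, i.e.\ $x' \in X$, one of the conjuncts is $D_{x'}$ with $D_{x'}^\mI(x') < d$, hence $C^\mI(x') \leq D_{x'}^\mI(x') < d$. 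That is exactly the required conclusion.

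The only subtlety — and what I'd expect to be the main point to get right rather than a real obstacle — is making sure the auxiliary concepts stay inside $\mLPp$ rather than drifting into the full $\mLP$: in particular one must use the value constants $e$ and $\to$ and $\mand$ (all permitted in $\mLPp$) and avoid $\mor$, $\V R.C$, and the role constructors. Writing $D_{x'}$ as $(C_{x'} \to e) \mand (e \to C_{x'})$ rather than via negation or disjunction keeps everything within $\mLPp$. A secondary point is verifying the biresiduum computation $(a \fequiv a) = 1$ and $(a \fequiv b) < d \Rightarrow \min\{a,b\} < d$, both immediate from the explicit formula for $\fequiv$ given in the preliminaries. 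With those in hand, the construction above delivers the lemma.
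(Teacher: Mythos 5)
Your proof is correct and follows essentially the same route as the paper's: invoke Theorem~\ref{theorem: fG H-M} to obtain, for each $x'$ with $Z_0(x,x')<d$, a witness concept of $\mLPp$, normalize it via implication with the value constant $C_{x'}^\mI(x)$ so that it evaluates to $1$ at $x$ and below $d$ at $x'$, and take the finite conjunction $\bigsqcap$. The only (immaterial) differences are that you use the symmetric two-implication normalization in place of the paper's one-sided case split, and you work with a witness whose biresiduum value is merely $<d$ rather than one attaining the infimum.
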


\begin{proof}
Consider an arbitrary $x' \in \Delta^\mI$ with $Z_0(x,x') < d$. 
As the G\"odel semantics is used, by Theorem~\ref{theorem: fG H-M}, there exists a concept $D_{x'}$ of $\mLPp$ such that $Z_0(x,x') = (D_{x'}^\mI(x) \fequiv D_{x'}^\mI(x'))$. Thus, $D_{x'}^\mI(x) \neq D_{x'}^\mI(x')$ and $\min(D_{x'}^\mI(x), D_{x'}^\mI(x')) = Z_0(x,x') < d$. If $D_{x'}^\mI(x) < D_{x'}^\mI(x')$, then let $C_{x'} = (D_{x'} \to D_{x'}^\mI(x))$, else let $C_{x'} = (D_{x'}^\mI(x) \to D_{x'})$. Thus, $C_{x'}^\mI(x) = 1$ and $C_{x'}^\mI(x') < d$. 
Let $C = \bigsqcap \{C_{x'} \mid x' \in \Delta^\mI, Z_0(x,x') < d\}$. It is a concept of~$\mLPp$. We have $C^\mI(x) = 1$ and, for every $x' \in \Delta^\mI$ with $Z_0(x,x') < d$, $C^\mI(x') < d$.
\myend
\end{proof}

The following lemma states that the fuzzy interpretation $\mIp$ returned by Algorithm~\ref{alg2} is in fact minimal.

\begin{lemma}\label{lemma: JRJSL}
Let $\mI$ be a finite fuzzy interpretation and let $\Phi \subseteq \{I,O\}$ and $\gamma \in (0,1]$. Let $\mIp$ be the fuzzy interpretation returned by Algorithm~\ref{alg2} when executed for $\mI$ using $\Phi$ and $\gamma$. Furthermore, let $\mIdp$ be an arbitrary finite fuzzy interpretation such that, for every concept $C$ of $\mLPp$ and every $a \in \IN$, $\gamma \leq (C^\mI(a^\mI) \fequiv C^\mIdp(a^\mIdp))$. Then, $|\Delta^\mIdp| \geq |\Delta^\mIp|$. 
\end{lemma}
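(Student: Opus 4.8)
The goal is a lower bound on $|\Delta^\mIdp|$ in terms of $|\Delta^\mIp|$. The natural strategy is to show that the elements of $\Delta^\mIp$ correspond to elements of $\Delta^\mIdp$ that are pairwise "distinguishable at some level $\geq$ the relevant degree", so that an injection $\Delta^\mIp \to \Delta^\mIdp$ cannot exist unless $|\Delta^\mIdp| \geq |\Delta^\mIp|$. Concretely, recall from Lemma~\ref{lemma: JHJWS} that each $v \in \Delta^\mIp$ is $B.\repr$ for a unique block $B \in \bB.\blocks()$ having $B.\repr \neq \Null$; write $d_B = B.\degree$ for that block (with $d_B$ replaced by $\gamma$ at the top, or more precisely the degree relevant to when $v$ was created). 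The plan is to assign to each such $v$ a "characteristic concept" and, via a path from an individual name, pull this back to $\mIdp$.

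\textbf{Step 1: trace each $v \in \Delta^\mIp$ back to an individual via a witnessing path.} Every $v$ added to $\Delta^\mIp$ is either some $a^\mI$ (handled at statements~\ref{step: alg2 8}--\ref{step: alg2 10}) or is introduced as the third component $y$ of a triple $\tuple{x,R,y}$ extracted from $Q$ at statement~\ref{step: alg2 18}, where $x$ was already in $\Delta^\mIp$. Unwinding this recursively yields, for each $v \in \Delta^\mIp$, a finite chain $a^\mI = x_0, x_1, \ldots, x_k = v$ in $\Delta^\mI$ together with basic roles $R_1, \ldots, R_k$ and degrees $d_1 \geq d_2 \geq \cdots \geq d_k$ (the degrees $d$ of the outer-loop iterations at which each $x_i$ was created), such that $R_i^\mI(x_{i-1}, x_i) \geq d_i$ and $R_i^\mIp(x_{i-1}, x_i) \geq d_i$; moreover $d_i \geq \gamma$ is impossible past the first step unless $\gamma$ itself is small, so in general $d_i \geq$ the degree of the block of $x_i$. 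The monotonicity $d_1 \geq \cdots \geq d_k$ follows because the outer loop processes $D$ in decreasing order and children are inserted into $Q$ only after the parent is extracted.

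\textbf{Step 2: build distinguishing concepts and transfer to $\mIdp$.} For each $v \in \Delta^\mIp$, with block $B$ and degree $e_v$ (where $e_v$ is $d_k$ from the chain, equivalently the degree at which $v$ became a representative), apply Lemma~\ref{lemma: JHJHS} to $x = v$ and $d = e_v$ to get a concept $C_v$ of $\mLPp$ with $C_v^\mI(v) = 1$ and $C_v^\mI(v') < e_v$ whenever $Z_0(v, v') < e_v$. Prepend the witnessing path: form the concept $\E R_1.(\cdots \E R_k.C_v \cdots)$ of $\mLPp$ (note $\mLPp$ contains $\E R.C$ and role names, and inverse roles when $I \in \Phi$). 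Evaluating in $\mI$ along the chain gives $(\E R_1. \cdots \E R_k.C_v)^\mI(a^\mI) \geq \min(d_1, \ldots, d_k, 1) = d_k = e_v$. The hypothesis on $\mIdp$ forces $(\E R_1.\cdots\E R_k.C_v)^\mIdp(a^\mIdp) \geq \min(e_v, \gamma)$ — here one uses the biresiduum bound $\gamma \leq (P^\mI(a^\mI) \fequiv P^\mIdp(a^\mIdp))$ for $P = \E R_1.\cdots\E R_k.C_v$, so if $P^\mI(a^\mI) = e_v \leq \gamma$ then $P^\mIdp(a^\mIdp) \geq e_v$ (and if $e_v > \gamma$ one gets $\geq \gamma$, but by design of $D$ the degrees arising are $\leq \gamma$). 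Hence there is a witness $w_v \in \Delta^\mIdp$ reachable from $a^\mIdp$ along $R_1, \ldots, R_k$ with $C_v^\mIdp(w_v) \geq e_v$.

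\textbf{Step 3: the map $v \mapsto w_v$ is injective.} Suppose $v \neq v'$ in $\Delta^\mIp$ but $w_v = w_{v'} =: w$. We would then have $C_v^\mIdp(w) \geq e_v$ and $C_{v'}^\mIdp(w) \geq e_{v'}$. On the other hand, since $v$ and $v'$ are representatives of \emph{distinct} blocks of $\bB$, the structure of the compact fuzzy partition gives $Z_0(v, v') < \max(e_v, e_{v'})$ (distinct representatives are separated strictly below the degree of their least common ancestor block, which is at most $\min(e_v, e_{v'})$; in fact $Z_0(v,v') < e_v$ and $< e_{v'}$ both hold when the blocks are incomparable, and the nested case is similar). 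Combined with $C_v^\mI(v)=1$, Lemma~\ref{lemma: JHJHS} and a symmetric use of the hypothesis on $\mIdp$ together with Theorem~\ref{theorem: JHDHJ} (applied to the greatest bisimulation between $\mI$ and $\mIdp$, whose value at $(v, w)$ is bounded by $(C_v^\mI(v) \fequiv C_v^\mIdp(w))$ after transporting along the path), this forces $C_v^\mIdp(w) < e_v$ or $C_{v'}^\mIdp(w) < e_{v'}$, a contradiction. Therefore $v \mapsto w_v$ is an injection $\Delta^\mIp \hookrightarrow \Delta^\mIdp$, giving $|\Delta^\mIdp| \geq |\Delta^\mIp|$.

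\textbf{Main obstacle.} The delicate point is Step~3: making precise the separation statement "distinct block representatives $v, v'$ satisfy $Z_0(v,v') < \max(e_v, e_{v'})$" and threading it correctly through the possibly-different path-prefixes used to reach $w_v$ and $w_{v'}$. One must argue that whichever path exhibits $w_v = w_{v'}$, the value $C_v^\mIdp(w)$ is controlled by the original distinguishing power of $C_v$ — this requires a careful bookkeeping of degrees along the two chains and an appeal to Theorem~\ref{theorem: JHDHJ} for the greatest fuzzy $\Phi$-bisimulation between $\mI$ and $\mIdp$ to convert "$\mIdp$ preserves concept assertions up to $\gamma$" into a per-element bound. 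A cleaner packaging may be to define, for each $v$, a \emph{single} concept $C^*_v$ of $\mLPp$ of the form $\E R_1.\cdots\E R_k.(C_v \mand e_v)$ (or using the constant $e_v$ as a concept, which $\mLPp$ allows) so that $(C^*_v)^\mI(a^\mI) = e_v$ exactly and $(C^*_v)^\mIdp(a^\mIdp) \geq e_v$, then observe that two distinct $v, v'$ cannot share a witness because any common witness would, via the bisimulation argument, collapse their blocks in $\bB$ — contradicting maximality of $Z_0$ and the construction of the compact fuzzy partition.
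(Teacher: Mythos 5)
Your Steps 1 and 2 follow the same route as the paper: trace each $v \in \Delta^\mIp$ back to a named individual along a chain with degrees $d_{v_1} \geq \cdots \geq d_{v_n} = d_v$, invoke Lemma~\ref{lemma: JHJHS} to obtain $C_v$ with $C_v^\mI(v)=1$, prefix existential restrictions, and transfer the resulting concept to $\mIdp$ via the hypothesis. The gap is in Step~3, and it is the crux of the lemma. From $C_v^\mIdp(w) \geq e_v$ and $C_{v'}^\mIdp(w) \geq e_{v'}$ no contradiction follows: $C_v$ and $C_{v'}$ only separate $v$ from those elements of $\Delta^\mI$ that are $Z_0$-far from $v$, whereas $\mIdp$ is an \emph{arbitrary} interpretation constrained only at the named individuals, so a single $w \in \Delta^\mIdp$ may satisfy both $C_v$ and $C_{v'}$ to degree $1$ without violating anything. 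The appeal to Theorem~\ref{theorem: JHDHJ} does not repair this, because the witness $w$ of $\E R_1.\cdots\E R_k.C_v$ in $\mIdp$ need not be related to $v$ by the greatest fuzzy $\Phi$-bisimulation between $\mI$ and $\mIdp$ -- existential witnesses and bisimulation images are different objects, and your ``Main obstacle'' paragraph conflates them. Your fallback $C^*_v = \E R_1.\cdots(C_v \mand e_v)$ has the same defect: it still carries only positive information about~$v$.

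The missing idea, which is exactly what the paper supplies, is to build the mutual exclusion between representatives into the concept \emph{before} transferring it to $\mIdp$. The paper chooses thresholds $d_{v'}^-$ with $\sup\{C_{v'}^\mI(z) \mid z \in \Delta^\mI,\ Z_0(v',z) < d_{v'}\} < d_{v'}^- < d_{v'}$ and replaces $C_v$ by
\[ D_v = C_v \mand \textstyle\bigsqcap\{(C_{v'} \to d_{v'}^-) \mid v' \in \Delta^\mIp \setminus \{v\}\}, \]
verifies $D_v^\mI(v) = 1$ using the separation $Z_0(v,v') < \min(d_v, d_{v'})$ for distinct representatives (which comes from $\bB.\findBlock(v,d_v) \cap \Delta^\mIp = \{v\}$), and only then prefixes the existentials; it also needs a detour through $\mIp$ and the bisimulation $Z$ of Lemma~\ref{lemma: ISJNA} to see that $E_v^\mI(a_v^\mI) \geq d_v$ survives the conjunction with the capped implications. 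A common witness $x$ of $D_v$ to degree $d_v$ and of $D_u$ to degree $d_u$ then forces $C_v^\mIdp(x) \geq d_v > d_v^- \geq d_u$ and symmetrically $C_u^\mIdp(x) \geq d_u > d_u^- \geq d_v$, a genuine numerical contradiction. Without the capped implications $(C_{v'} \to d_{v'}^-)$, the injectivity of $v \mapsto w_v$ simply does not follow, so as written the proof is incomplete at its decisive step.
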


\begin{proof}
During the execution of Algorithm~\ref{alg2}, when an element $v$ is added to $\Delta^\mIp$, we define $d_v$, $\pi_v$ and $R_v$ as follows. If $v$ is $a^\mI$, which is added to $\Delta^\mIp$ at statement~\ref{step: alg2 8}, then let $d_v = \gamma$, $\pi_v = \Null$ and $R_v = \Null$. If $v$ is $y$, which is added to $\Delta^\mIp$ at statement~\ref{step: alg2 18}, then let $d_v = d$, $\pi_v = x$ and $R_v = R$. Roughly speaking, $\pi_v$ denotes the ``predecessor'' of $v$ in $\mIp$, $R_v$ the label of the ``edge'' connecting that ``predecessor'' to $v$, and $d_v$ a lower bound for the degree of that ``edge''. 

In the following, let $v$ denote an arbitrary element of $\Delta^\mIp$ and let $v_0,v_1,\ldots,v_n$ be the involved elements of $\Delta^\mIp$ such that $v_n = v$, $v_i = \pi_{v_{i+1}}$ for each $i$ from $n-1$ down to $0$, and $\pi_{v_0} = \Null$. Note that
\begin{equation}\label{eq: JHJHE}
R_{v_i}^\mIp(v_{i-1},v_i) \geq d_{v_i}\ \textrm{for}\ 1 \leq i \leq n,\ \ \textrm{and}\ \ \gamma \geq d_{v_1} \geq \ldots \geq d_{v_n} = d_v.
\end{equation} 

Let $Z_0$ be the greatest fuzzy $\Phi$-auto-bisimulation of $\mI$. By Lemma~\ref{lemma: JHJHS}, there exists a concept $C_v$ of $\mLPp$ such that $C_v^\mI(v) = 1$ and, for every $v' \in \Delta^\mI$ with $Z_0(v,v') < d_v$, $C_v^\mI(v') < d_v$. Let $d_v^-$ denote the arithmetic average of $d_v$ and $\sup\{C_v^\mI(v') \mid v' \in \Delta^\mI$, $Z_0(v,v') < d_v\}$. Thus, $C_v^\mI(v') < d_v^- < d_v$, for all $v' \in \Delta^\mI$ with $Z_0(v,v') < d_v$.  Let $D_v$ be the concept specified as follows:
\[ D_v = C_v \mand \textstyle\bigsqcap\{(C_{v'} \to d_{v'}^-) \mid v' \in \Delta^\mIp \setminus \{v\}\}. \]
Furthermore, let $E_v = \E R_{v_1}.\cdots \E R_{v_n}.D_v$, which is a concept of $\mLPp$, and let $a_v$ be an individual name from $\IN$ such that $a_v^\mIp = v_0$.

By Lemma~\ref{lemma: JHJWS}\eqref{JHJWS3} and due to the decreasing order in which the values from $D$ are considered by the ``foreach'' loop at statement~\ref{step: alg2 13}, $\bB.\findBlock(v,d_v) \cap \Delta^\mIp = \{v\}$. Consequently, for any $v' \in \Delta^\mIp$ different from $v$, we have $Z_0(v,v') < \min(d_v, d_{v'})$, which implies that $D_v^\mI(v) = 1$. 
Let $Z$ be the fuzzy $\Phi$-bisimulation between~$\mI$ and~$\mIp$ constructed as in the proof of Lemma~\ref{lemma: ISJNA}. Recall the construction of $Z$ to see that $Z(v,v) = d_v$. Since $D_v^\mI(v) = 1$, by Theorem~\ref{theorem: JHDHJ}, it follows that $D_v^\mIp(v) \geq d_v$. Consequently, by~\eqref{eq: JHJHE}, we have 
\[
    E_v^\mIp(a_v^\mIp) = (\E R_{v_1}.\cdots \E R_{v_n}.D_v)^\mIp(v_0) \geq d_v.
\]
By Lemma~\ref{lemma: ISJNA}, $\gamma \leq (E_v^\mI(a_v^\mI) \fequiv E_v^\mIp(a_v^\mIp))$. Consequently, it follows that $E_v^\mI(a_v^\mI) \geq d_v$. Similarly, using the assumption $\gamma \leq (E_v^\mI(a_v^\mI) \fequiv E_v^\mIdp(a_v^\mIdp))$, we can further derive that $E_v^\mIdp(a_v^\mIdp) \geq d_v$. Therefore, there exists $x_v \in \Delta^\mIdp$ such that $D_v^\mIdp(x_v) \geq d_v$. 

For a contradiction, assume that there exist distinct $u,v \in \Delta^\mIp$ such that $x_u = x_v$. For simplicity, denote $x = x_u = x_v$. Thus, $D_v^\mIdp(x) \geq d_v$ and $D_u^\mIdp(x) \geq d_u$. Consequently, 
\[ C_v^\mIdp(x) \mand (C_u^\mIdp(x) \to d_u^-) \geq d_v\ \ \textrm{and}\ \ C_u^\mIdp(x) \mand (C_v^\mIdp(x) \to d_v^-) \geq d_u. \]
This implies that 
\[ C_v^\mIdp(x) \geq d_v > d_v^- \geq d_u \ \ \textrm{and}\ \ C_u^\mIdp(x) \geq d_u > d_u^- \geq d_v, \]
which gives a contradiction. Therefore, the mapping that associates each $v \in \Delta^\mIp$ with $x_v \in \Delta^\mIdp$ is an injection. We conclude that \mbox{$|\Delta^\mIdp| \geq |\Delta^\mIp|$}. 
\myend
\end{proof}

We reach the following theorem, according to which Algorithm~\ref{alg2} produces a finite fuzzy interpretation $\mIp$ such that, for any language $\mL$ between $\mLPp$ and $\mLP$, $\mIp$ is a minimal reduction of $\mI$ that preserves fuzzy concept assertions of~$\mL$ up to degree~$\gamma$. 

\begin{theorem}\label{theorem: JHDJW}
Algorithm~\ref{alg2} is correct. That is, given as inputs a finite fuzzy interpretation $\mI$, $\Phi \subseteq \{I,O\}$ and $\gamma \in (0,1]$, Algorithm~\ref{alg2} returns a finite fuzzy interpretation $\mIp$ that satisfies the following properties:
\begin{itemize}
\item[(a)] for every concept $C$ of $\mLP$ and every $a \in \IN$, $\gamma \leq (C^\mI(a^\mI) \fequiv C^\mIp(a^\mIp))$;
\item[(b)] for every fuzzy interpretation $\mIdp$ that satisfies $\gamma \leq (C^\mI(a^\mI) \fequiv C^\mIdp(a^\mIdp))$ for all concepts $C$ of $\mLPp$ and all $a \in \IN$, it holds that $|\Delta^\mIdp| \geq |\Delta^\mIp|$. 
\end{itemize}
In addition, 
\begin{itemize}
\item[(c)] there exists a fuzzy $\Phi$-bisimulation $Z$ between $\mI$ and $\mIp$ such that $Z(a^\mI,a^\mIp) = \gamma$ for all $a \in \IN$.
\end{itemize}
\end{theorem}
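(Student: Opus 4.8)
The plan is to obtain the three assertions by assembling the lemmas already established, since essentially all of the work has been done there. First I would observe that $\mIp$ is indeed a \emph{finite} fuzzy interpretation and that Algorithm~\ref{alg2} terminates: by Lemma~\ref{lemma: JHJWS}\eqref{JHJWS2} the domain satisfies $\Delta^\mIp \subseteq \Delta^\mI$, and termination holds because the set $D$ computed at statement~\ref{step: alg2 12} is finite while only finitely many triples are ever inserted into $Q$ (a new batch is inserted only when a fresh element of $\Delta^\mI$ is added to $\Delta^\mIp$, and $\Delta^\mIp \subseteq \Delta^\mI$).

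Assertion~(a) is precisely the statement of Lemma~\ref{lemma: ISJNA}, so nothing further is needed for it. For assertion~(c), I would point back to the proof of Lemma~\ref{lemma: ISJNA}: there a fuzzy $\Phi$-bisimulation $Z \in \mF(\Delta^\mI \times \Delta^\mIp)$ between $\mI$ and $\mIp$ is constructed explicitly, and it is shown at the end of that proof that $Z(a^\mI,a^\mIp) = \gamma$ for every $a \in \IN$, which is exactly~(c). Assertion~(b) follows from Lemma~\ref{lemma: JRJSL}, the case of an $\mIdp$ with an infinite domain being trivial since $\Delta^\mIp$ is finite.

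To close the loop with the claim made in the paragraph preceding the theorem --- that for any description language $\mL$ with $\mLPp \subseteq \mL \subseteq \mLP$ the interpretation $\mIp$ is a minimal reduction of $\mI$ preserving fuzzy concept assertions of $\mL$ up to degree~$\gamma$ --- I would run a sandwich argument: preservation follows from~(a) restricted to concepts of $\mL \subseteq \mLP$, and minimality follows from~(b), since any finite fuzzy interpretation that preserves fuzzy concept assertions of $\mL$ up to degree~$\gamma$ in particular preserves those of $\mLPp \subseteq \mL$.

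I do not expect a genuine obstacle in this theorem: each component is an immediate quotation of a prior result, and the only points demanding care are matching the quantifier ranges --- the larger language $\mLP$ in the preservation direction~(a) and the smaller language $\mLPp$ in the minimality direction~(b), which is exactly what makes the sandwich work for arbitrary $\mL$ --- together with the finiteness and termination check above. The substantive difficulty was absorbed earlier into Lemmas~\ref{lemma: ISJNA}, \ref{lemma: JHJHS}, and \ref{lemma: JRJSL}, the last of which (the minimality bound, obtained via the distinguishing concepts $E_v$ and the injectivity of $v \mapsto x_v$) is where the real argument lives.
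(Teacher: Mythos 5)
Your proposal is correct and follows exactly the paper's own argument: the paper derives (a) from Lemma~\ref{lemma: ISJNA}, (b) from Lemma~\ref{lemma: JRJSL}, and (c) from the bisimulation $Z$ constructed in the proof of Lemma~\ref{lemma: ISJNA}, just as you do. Your added remarks on termination, finiteness of $\Delta^\mIp$, and the sandwich argument for intermediate languages $\mL$ are sound elaborations of points the paper leaves implicit.
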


The first two assertions follow immediately from Lemmas~\ref{lemma: ISJNA} and~\ref{lemma: JRJSL}, respectively, while the additional one follows from the proof of Lemma~\ref{lemma: ISJNA}. 

The following corollary concerns a particular case of the above theorem. 

\begin{corollary}
Algorithm~\ref{alg2}, when executed for a finite fuzzy interpretation $\mI$ with $\Phi \subseteq \{I,O\}$ and $\gamma = 1$, returns a minimal reduction of $\mI$ that preserves fuzzy concept assertions of $\mL$, for any $\mLPp \subseteq \mL \subseteq \mLP$.
\end{corollary}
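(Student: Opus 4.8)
The plan is to recognize that this corollary is simply the $\gamma = 1$ instance of Theorem~\ref{theorem: JHDJW}, read through the definition of a ``minimal reduction of $\mI$ that preserves fuzzy concept assertions of $\mL$''. Unwinding that definition (the one given for degree $\gamma$, specialized to $\gamma = 1$), I need to check two claims about the fuzzy interpretation $\mIp$ that Algorithm~\ref{alg2} returns on input $\mI$, $\Phi$, $\gamma = 1$: (i) $C^\mI(a^\mI) = C^\mIp(a^\mIp)$ for every concept $C$ of $\mL$ and every $a \in \IN$; and (ii) every finite fuzzy interpretation $\mIdp$ satisfying $C^\mI(a^\mI) = C^\mIdp(a^\mIdp)$ for all concepts $C$ of $\mL$ and all $a \in \IN$ has $|\Delta^\mIdp| \ge |\Delta^\mIp|$. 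Here I would first record the elementary fact, read off from the explicit formula for the G\"odel biresiduum in Section~\ref{section: prel}, that $1 \le (u \fequiv w)$ holds iff $u = w$, so that the ``up to degree $1$'' conditions of Theorem~\ref{theorem: JHDJW} become genuine equalities.

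For claim~(i), I would apply Theorem~\ref{theorem: JHDJW}(a) with $\gamma = 1$: it gives $1 \le (C^\mI(a^\mI) \fequiv C^\mIp(a^\mIp))$, hence $C^\mI(a^\mI) = C^\mIp(a^\mIp)$, for every concept $C$ of $\mLP$ and every $a \in \IN$; since $\mL \subseteq \mLP$, the same holds for every concept $C$ of $\mL$. For claim~(ii), let $\mIdp$ be a finite fuzzy interpretation with $C^\mI(a^\mI) = C^\mIdp(a^\mIdp)$ for all concepts $C$ of $\mL$ and all $a \in \IN$; rewriting this as $1 \le (C^\mI(a^\mI) \fequiv C^\mIdp(a^\mIdp))$ and restricting from $\mL$ down to $\mLPp$ (using $\mLPp \subseteq \mL$) exactly matches the hypothesis of Theorem~\ref{theorem: JHDJW}(b) with $\gamma = 1$, which then yields $|\Delta^\mIdp| \ge |\Delta^\mIp|$. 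Claims~(i) and~(ii) together say precisely that $\mIp$ is a minimal reduction of $\mI$ preserving fuzzy concept assertions of $\mL$.

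There is essentially no hard step here; the one thing to keep straight -- the closest thing to an obstacle -- is that the two halves of the sandwich $\mLPp \subseteq \mL \subseteq \mLP$ are invoked in opposite directions. The upper inclusion $\mL \subseteq \mLP$ is what lets the (stronger) preservation guarantee of Theorem~\ref{theorem: JHDJW}(a), stated for the full language $\mLP$, specialize down to $\mL$; the lower inclusion $\mLPp \subseteq \mL$ is what lets an arbitrary competitor $\mIdp$ that preserves $\mL$-assertions be fed into the minimality bound of Theorem~\ref{theorem: JHDJW}(b), which only requires preservation of $\mLPp$-assertions. No calculation beyond these bookkeeping observations is needed.
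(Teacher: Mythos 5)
Your proposal is correct and matches the paper's own (essentially one-line) justification: the corollary is exactly the $\gamma=1$ instance of Theorem~\ref{theorem: JHDJW}, with the observation that $1 \leq (u \fequiv w)$ iff $u = w$ turning the approximate-preservation conditions into equalities. Your care with the two directions of the sandwich $\mLPp \subseteq \mL \subseteq \mLP$ is exactly the right bookkeeping and is implicit in the paper's statement of the theorem.
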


We can consider the preservation of other fuzzy assertions when $O \in \Phi$. 
A {\em minimal reduction of $\mI$ that preserves fuzzy ABoxes in~$\mLP$} is a finite fuzzy interpretation $\mIp$ such that:
\begin{itemize}
\item for every ABox $\mA$ in $\mLP$, $\mI \models \mA$ iff $\mIp \models \mA$;
\item if $\mIdp$ is a finite fuzzy interpretation such that, for every ABox $\mA$ in $\mLP$, $\mI \models \mA$ iff $\mIdp \models \mA$, then $|\Delta^\mIdp| \geq |\Delta^\mIp|$. 
\end{itemize}

\begin{corollary}
Algorithm~\ref{alg2}, when executed for a finite fuzzy interpretation $\mI$ with $\{O\} \subseteq \Phi \subseteq \{I,O\}$ and $\gamma = 1$, returns a minimal reduction of $\mI$ that preserves fuzzy ABoxes in~$\mLP$.
\end{corollary}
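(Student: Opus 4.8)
The plan is to obtain the corollary by assembling Theorem~\ref{theorem: JHDJW}, the invariance result Theorem~\ref{theorem: IFDMS}, and the minimality Lemma~\ref{lemma: JRJSL}. Fix a finite fuzzy interpretation $\mI$, a set $\Phi$ with $\{O\}\subseteq\Phi\subseteq\{I,O\}$, and $\gamma=1$, and let $\mIp$ be the interpretation returned by Algorithm~\ref{alg2} on these inputs; by Theorem~\ref{theorem: JHDJW}, $\mIp$ is a finite fuzzy interpretation. First I would establish the preservation property: Theorem~\ref{theorem: JHDJW}(c) supplies a fuzzy $\Phi$-bisimulation $Z$ between $\mI$ and $\mIp$ with $Z(a^\mI,a^\mIp)=\gamma=1$ for all $a\in\IN$, so $\mI$ and $\mIp$ are $\Phi$-bisimilar; since $O\in\Phi$, Theorem~\ref{theorem: IFDMS} tells us that every fuzzy ABox $\mA$ in $\mLP$ is invariant under $\Phi$-bisimilarity between finite fuzzy interpretations, and therefore $\mI\models\mA$ iff $\mIp\models\mA$ for every such $\mA$.

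Next I would handle minimality. Let $\mIdp$ be any finite fuzzy interpretation satisfying $\mI\models\mA$ iff $\mIdp\models\mA$ for every fuzzy ABox $\mA$ in $\mLP$; the goal is to reduce this to the hypothesis of Lemma~\ref{lemma: JRJSL}. For a concept $C$ of $\mLP$ and $a\in\IN$, the singletons $\{C(a)\geq d\}$ and $\{C(a)\leq d\}$ are fuzzy ABoxes in $\mLP$, so taking $d=C^\mI(a^\mI)$ yields $C^\mIdp(a^\mIdp)\geq d$ from the preservation of $\{C(a)\geq d\}$ and $C^\mIdp(a^\mIdp)\leq d$ from the preservation of $\{C(a)\leq d\}$, hence $C^\mIdp(a^\mIdp)=C^\mI(a^\mI)$, i.e.\ $\gamma\leq(C^\mI(a^\mI)\fequiv C^\mIdp(a^\mIdp))$ with $\gamma=1$. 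Applying this to every concept $C$ of $\mLPp$ shows that $\mIdp$ meets the hypothesis of Lemma~\ref{lemma: JRJSL}, which then gives $|\Delta^\mIdp|\geq|\Delta^\mIp|$. Together with the preservation property, this certifies that $\mIp$ is a minimal reduction of $\mI$ that preserves fuzzy ABoxes in $\mLP$.

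The argument is essentially bookkeeping over results already established. The one step that deserves a moment's attention, and which I expect to be the main (minor) obstacle, is the minimality reduction: one must observe that exploiting only the fuzzy-concept-assertion fragment of fuzzy ABoxes -- discarding assertions of the forms $R(a,b)\bowtie d$, $a\doteq b$ and $a\ndoteq b$ -- already suffices to invoke Lemma~\ref{lemma: JRJSL}, which is legitimate because we need only a consequence of the ABox-preservation hypothesis, not its full strength. Beyond this, I anticipate no genuine obstacle.
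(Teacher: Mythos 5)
Your proposal is correct and follows exactly the route the paper intends: the paper's justification is the single sentence that the corollary ``follows from Theorems~\ref{theorem: JHDJW} and~\ref{theorem: IFDMS}'', and your argument fills in precisely those details (preservation via Theorem~\ref{theorem: JHDJW}(c) plus ABox invariance under $\Phi$-bisimilarity, minimality via singleton concept-assertion ABoxes feeding into Lemma~\ref{lemma: JRJSL}, i.e.\ Theorem~\ref{theorem: JHDJW}(b)). No discrepancies to report.
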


This corollary follows from Theorems~\ref{theorem: JHDJW} and~\ref{theorem: IFDMS}.

We now analyze the time complexity of Algorithm~\ref{alg2}. 

\begin{remark}\label{remark: JDKAA}
Before proceeding, we first discuss how to efficiently implement statements~\ref{step: alg2 6} and~\ref{step: alg2 16} of Algorithm~\ref{alg2} in a way that does not interfere with the continuation of its execution. The idea is that, once $B$ has been computed as specified in those statements, its internal structure is no longer important due to the decreasing order in which the values from $D$ are processed by the ``foreach'' loop in statement~\ref{step: alg2 13}. Consequently, such blocks $B$ can be flattened, and a union-find data structure can be employed to reduce the overall cost of all merges, as done in~\cite{min2025-prog}. The expected effect of flattening is to retain $B.\degree$ and $B.\repr$, while replacing $B.\subblocks$ (when $B$ is a fuzzy block) with the crisp set $B.\elements()$. This replacement can be performed by recursively flattening each block in $B.\subblocks$ and then merging the results. Using the union-find technique, the total cost of all merges is $O(n \log n)$, where $n = |\Delta^\mI|$. Since the merges are performed in the background, each call to $\bB.\findBlock(a^\mI, \gamma)$ at statement~\ref{step: alg2 6} or to $\bB.\findBlock(y,d)$ at statement~\ref{step: alg2 16} incurs only constant additional cost.
\myend
\end{remark}

\begin{theorem}\label{theorem: JHRKN}
Algorithm~\ref{alg2} can be implemented to run in \mbox{$O((m\log{l} + n)\log{n})$} time, where $n = |\Delta^\mI|$, $m = |\{\tuple{x,r,y} \mid r \in \RN$, $x,y \in \Delta^\mI$, $r^\mI(x,y) > 0\}|$ and $l = |\{r^\mI(x,y) \mid r \in \RN$, $x,y \in \Delta^\mI\}| + 2$, under the assumption that the sizes of $\RN$, $\CN$, and $\IN$ are constants. 
\end{theorem}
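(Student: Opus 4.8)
The strategy is to account separately for the three phases of Algorithm~\ref{alg2}: (i) the call to Algorithm~\ref{alg1} at statement~\ref{step: alg2 1}; (ii) the initialization block handling individual names (statements~\ref{step: alg2 2}--\ref{step: alg2 12}); and (iii) the main double loop over $D$ and the priority queue $Q$ (statements~\ref{step: alg2 13}--\ref{step: alg2 24}). For (i), Lemma~\ref{lemma: JHJLS} already gives the bound $O((m\log l + n)\log n)$, so the remaining work is to show that phases (ii) and (iii) do not exceed this. The key auxiliary ingredient is Remark~\ref{remark: JDKAA}: the compact fuzzy partition $\bB$ is maintained with flattening plus a union-find structure, so that the total cost of all the implicit merges is $O(n\log n)$ (or $O(n\alpha(n))$, but $O(n\log n)$ suffices), and each individual call $\bB.\findBlock(\cdot,\cdot)$ costs only $O(1)$ amortized extra. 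I would state this as the working assumption on the data structure before analyzing the loops.

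For phase (ii), there are $|\IN| = O(1)$ individual names; each triggers one $\findBlock$ call (statement~\ref{step: alg2 6}), at most one addition to $\Delta^\mIp$, a constant number of concept-name assignments (since $|\CN| = O(1)$), and one walk up the $\repr$-chain via $\parent()$ at statement~\ref{step: alg2 10}. The crucial observation is that the $\parent()$-walks are charged against blocks whose $\repr$ goes from $\Null$ to non-$\Null$; each block is charged at most once over the \emph{entire} run of the algorithm, and $|\bB.\blocks()| = O(n)$, so the total cost of all such walks (across phases (ii) and (iii) together) is $O(n)$ plus the union-find bookkeeping. Then at statement~\ref{step: alg2 12m} we initialize $Q$ with all triples $\tuple{x,R,y}$ outgoing from the $O(1)$ elements currently in $\Delta^\mIp$; since $|\RN| = O(1)$ and $I\in\Phi$ only doubles $\SE$, this is $O(m+n)$, and building the priority queue on these costs $O((m+n)\log(m+n)) = O((m+n)\log n)$. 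Computing $D$ at statement~\ref{step: alg2 12} requires scanning all role instances once, $O(m)$.

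For phase (iii), the outer \texttt{foreach} runs over $|D| \le l$ values in decreasing order (sort once, $O(l\log l) \subseteq O(m\log l)$). The inner \texttt{while} extracts each triple from $Q$ at most once; a triple $\tuple{x,R,y}$ is inserted into $Q$ only when its source element was added to $\Delta^\mIp$ (statement~\ref{step: alg2 12m} or~\ref{step: alg2 insertions into Q}), and since $\Delta^\mIp \subseteq \Delta^\mI$ (Lemma~\ref{lemma: JHJWS}\eqref{JHJWS2}) each source is added once, the total number of triples ever in $Q$ is $O(m+n)$. Each extraction performs: one priority-queue operation ($O(\log(m+n)) = O(\log n)$), one $\findBlock$ call ($O(1)$ amortized by Remark~\ref{remark: JDKAA}), possibly one $\Delta^\mIp$-insertion with $O(1)$ concept assignments, possibly one $\parent()$-walk (globally $O(n)$ as argued above), possibly a batch of insertions into $Q$ (amortized into the global $O(m+n)$ insertion count, each costing $O(\log n)$), and a single $O(1)$ check-and-set of $R^\mIp(x,y')$ at statement~\ref{step: alg2 23} or~\ref{step: alg2 24}. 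Summing: $O((m+n)\log n)$ for all queue operations, $O(n\log n)$ for the union-find merges underlying $\findBlock$, and $O(n)$ for the $\parent()$-walks, plus $O(m\log l)$ dominated terms. Adding phase (i)'s $O((m\log l + n)\log n)$ gives the claimed total $O((m\log l + n)\log n)$, using $\log(m+n) = O(\log n)$ (one may assume $m = O(n^2)$, or more simply that $m$ is polynomial in $n$).

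The main obstacle I anticipate is making the amortized $O(1)$ cost of $\findBlock$ after flattening fully rigorous: one must verify that flattening a block when it is first looked up (via statement~\ref{step: alg2 6} or~\ref{step: alg2 16}) is correct, i.e., that the algorithm never afterwards needs the internal $\subblocks$ structure of an already-flattened block — this is exactly the point Remark~\ref{remark: JDKAA} asserts, relying on the decreasing order of $D$ and on Lemma~\ref{lemma: JHJWS}\eqref{JHJWS3}, so I would cite that remark and spell out that a later $\findBlock(y,d')$ with $d' < d$ always returns an ancestor of (or the same as) the previously flattened block, whose $\repr$ and $\degree$ are retained. The other mild subtlety is bounding $\log(m+n)$ by $O(\log n)$; I will note the harmless convention that $m$ is at most quadratic in $n$, so $\log(m+n) = \Theta(\log n)$.
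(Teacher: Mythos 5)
Your proposal is correct and follows essentially the same decomposition as the paper's proof: Lemma~\ref{lemma: JHJLS} for statement~\ref{step: alg2 1}, the amortized treatment of $\findBlock$ via the flattening/union-find scheme of Remark~\ref{remark: JDKAA}, the global $O(n)$ charging of the $\parent()$-walks against blocks whose $\repr$ becomes non-$\Null$, and the $O((m+n)\log n)$ bound on all priority-queue operations. Your extra observations (sorting $D$ in $O(l\log l)$ and noting $m=O(n^2)$ so $\log(m+n)=O(\log n)$) are harmless refinements of details the paper leaves implicit.
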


\begin{proof}
By Lemma~\ref{lemma: JHJLS}, statement~\ref{step: alg2 1} runs in \mbox{$O((m\log{l} + n)\log{n})$} time.
Since $|\bB.\blocks()| < 2n$, statements~\ref{step: alg2 2}-\ref{step: alg2 11} execute in $O(n)$ time.
Statements~\ref{step: alg2 12m} and~\ref{step: alg2 12} run in $O(n + m\log{n})$ time.

The ``while'' loop at statement~\ref{step: alg2 14} performs at most $2m$ iterations. By Remark~\ref{remark: JDKAA}, the total time spent on all executions of statement~\ref{step: alg2 16} is $O(m + n\log{n})$. The ``repeat'' loop at statement~\ref{step: alg2 20} executes fewer than $2n$ times in total, with a cumulative cost of $O(n)$. The total number of triples inserted into the priority queue $Q$ is bounded by $m$, so all executions of statement~\ref{step: alg2 insertions into Q} take $O(n + m\log{n})$ time. Apart from statements~\ref{step: alg2 16}, \ref{step: alg2 20}, and~\ref{step: alg2 insertions into Q}, all other operations inside the ``while'' loop execute in $O(\log{n})$ time per iteration.

Summing up, the total time required to execute the ``foreach'' loop at statement~\ref{step: alg2 13} is \mbox{$O((m + n)\log{n})$}, and the overall running time of Algorithm~\ref{alg2} is \mbox{$O((m\log{l} + n)\log{n})$}.
\myend
\end{proof}

\section{Experimental results}
\label{section: experiments}

As mentioned earlier, our implementation of Algorithm~\ref{alg2} in Python is publicly available~\cite{min2025-prog}. The main module of the implementation~\cite{min2025-prog} is named {\em fuzzyMinG.py}, and it can be executed as follows:
\begin{verbatim}
    python3 fuzzyMinG.py [options] < input_file > output_file
\end{verbatim}
where {\em input\_file} specifies the input fuzzy interpretation, using the format described in~\cite{min2025-prog}. The available options are:
\begin{itemize}
\item {\em gamma=value} : specifies the value of $\gamma \in (0,1]$ (default: 1),
\item {\em withI} : includes $I$ in~$\Phi$,
\item {\em withO} : includes $O$ in~$\Phi$,
\item {\em verbose} : enables verbose mode.
\end{itemize}
Our implementation can be used to check all the examples given in Section~\ref{section: main} and~\ref{appendix A}. Here are examples:
\begin{verbatim}
    python3 fuzzyMinG.py verbose withI < in2.txt > out2-I.txt
    python3 fuzzyMinG.py verbose gamma=0.8 withO < in3.txt > out3-0.8-O.txt
\end{verbatim}

\begin{table}[t!]
\begin{center}
\footnotesize
\begin{tabular}{|r|l||r|r||r|r||r|r|r|}
\hline
\# & Parameters for {\em runExperiment} & \multicolumn{2}{|c||}{$\mLPU$} & \multicolumn{2}{|c||}{$\mLPD$} & \multicolumn{3}{|c|}{$\mLP$} \\
\cline{3-4} 
\cline{5-6}
\cline{7-9}
   &                              & $n_3$ & $m_3$ & $n_2$ & $m_2$ & $n_1$ & $m_1$ & Red. \\
\hline\hline
$\!$1$\!$&$\!$100 500 $10^3$ 10 20 3 3 3 1 0 0$\!$&$\!$25812$\!$&$\!$80918$\!$&$\!$7867$\!$&$\!$22028$\!$&$\!$6808$\!$&$\!$20099$\!$&$\!$86\%$\!$\\
$\!$2$\!$&$\!$100 500 $10^3$ 10 20 3 3 3 1 0 1$\!$&$\!$25907$\!$&$\!$81190$\!$&$\!$7952$\!$&$\!$22049$\!$&$\!$6888$\!$&$\!$20115$\!$&$\!$86\%$\!$\\
$\!$3$\!$&$\!$100 500 2000 20 40 3 3 3 1 0 1$\!$&$\!$38245$\!$&$\!$188193$\!$&$\!$30472$\!$&$\!$137579$\!$&$\!$29685$\!$&$\!$136104$\!$&$\!$41\%$\!$\\
$\!$4$\!$&$\!$100 500 2000 20 40 3 3 3 1 0 0$\!$&$\!$38373$\!$&$\!$188454$\!$&$\!$30768$\!$&$\!$138813$\!$&$\!$29989$\!$&$\!$137339$\!$&$\!$40\%$\!$\\
$\!$5$\!$&$\!$100 500 $10^3$ 10 20 3 3 3 0 0 0$\!$&$\!$40643$\!$&$\!$96760$\!$&$\!$32629$\!$&$\!$77410$\!$&$\!$32223$\!$&$\!$76923$\!$&$\!$36\%$\!$\\
$\!$6$\!$&$\!$100 500 $10^3$ 10 20 3 3 3 0 0 1$\!$&$\!$40923$\!$&$\!$96912$\!$&$\!$33104$\!$&$\!$78010$\!$&$\!$32723$\!$&$\!$77559$\!$&$\!$35\%$\!$\\
$\!$7$\!$&$\!$100 500 4000 40 80 3 3 3 1 0 1$\!$&$\!$44762$\!$&$\!$392675$\!$&$\!$43866$\!$&$\!$379156$\!$&$\!$43733$\!$&$\!$378907$\!$&$\!$13\%$\!$\\
$\!$8$\!$&$\!$100 500 4000 40 80 3 3 3 1 0 0$\!$&$\!$44756$\!$&$\!$392769$\!$&$\!$43884$\!$&$\!$379408$\!$&$\!$43752$\!$&$\!$379154$\!$&$\!$12\%$\!$\\
$\!$9$\!$&$\!$100 500 2000 20 40 3 3 3 0 0 0$\!$&$\!$49044$\!$&$\!$199907$\!$&$\!$48099$\!$&$\!$196061$\!$&$\!$48092$\!$&$\!$196054$\!$&$\!$4\%$\!$\\
$\!$10$\!$&$\!$100 500 2000 20 40 3 3 3 0 0 1$\!$&$\!$49075$\!$&$\!$199910$\!$&$\!$48157$\!$&$\!$196105$\!$&$\!$48150$\!$&$\!$196098$\!$&$\!$4\%$\!$\\
$\!$11$\!$&$\!$100 500 $10^3$ 10 20 3 3 3 1 1 0$\!$&$\!$49023$\!$&$\!$99965$\!$&$\!$48978$\!$&$\!$99943$\!$&$\!$48958$\!$&$\!$99922$\!$&$\!$2\%$\!$\\
$\!$12$\!$&$\!$100 500 $10^3$ 10 20 3 3 3 0 1 0$\!$&$\!$49036$\!$&$\!$99967$\!$&$\!$48994$\!$&$\!$99945$\!$&$\!$48980$\!$&$\!$99932$\!$&$\!$2\%$\!$\\
$\!$13$\!$&$\!$100 500 $10^3$ 10 20 3 3 3 1 1 1$\!$&$\!$49062$\!$&$\!$99967$\!$&$\!$49021$\!$&$\!$99947$\!$&$\!$49004$\!$&$\!$99930$\!$&$\!$2\%$\!$\\
$\!$14$\!$&$\!$100 500 $10^3$ 10 20 3 3 3 0 1 1$\!$&$\!$49098$\!$&$\!$99976$\!$&$\!$49060$\!$&$\!$99959$\!$&$\!$49050$\!$&$\!$99948$\!$&$\!$2\%$\!$\\
\hline
$\!$15$\!$&$\!$100 $10^3$ 2000 10 20 10 10 10 1 0 0$\!$&$\!$61016$\!$&$\!$182386$\!$&$\!$12192$\!$&$\!$30767$\!$&$\!$10483$\!$&$\!$28188$\!$&$\!$90\%$\!$\\
$\!$16$\!$&$\!$100 $10^3$ 2000 10 20 10 10 10 1 0 1$\!$&$\!$61043$\!$&$\!$182423$\!$&$\!$12447$\!$&$\!$31368$\!$&$\!$10693$\!$&$\!$28727$\!$&$\!$89\%$\!$\\
$\!$17$\!$&$\!$100 $10^3$ 4000 10 40 10 10 10 1 0 0$\!$&$\!$81028$\!$&$\!$389853$\!$&$\!$49049$\!$&$\!$191367$\!$&$\!$47061$\!$&$\!$188333$\!$&$\!$53\%$\!$\\
$\!$18$\!$&$\!$100 $10^3$ 4000 10 40 10 10 10 1 0 1$\!$&$\!$80992$\!$&$\!$389813$\!$&$\!$49308$\!$&$\!$192791$\!$&$\!$47314$\!$&$\!$189763$\!$&$\!$53\%$\!$\\
$\!$19$\!$&$\!$100 $10^3$ 2000 10 20 10 10 10 0 0 0$\!$&$\!$82985$\!$&$\!$196071$\!$&$\!$66269$\!$&$\!$156223$\!$&$\!$65565$\!$&$\!$155403$\!$&$\!$34\%$\!$\\
$\!$20$\!$&$\!$100 $10^3$ 2000 10 20 10 10 10 0 0 1$\!$&$\!$83105$\!$&$\!$196118$\!$&$\!$66650$\!$&$\!$157018$\!$&$\!$65949$\!$&$\!$156202$\!$&$\!$34\%$\!$\\
$\!$21$\!$&$\!$100 $10^3$ 8000 10 80 10 10 10 1 0 0$\!$&$\!$90946$\!$&$\!$793341$\!$&$\!$78745$\!$&$\!$618652$\!$&$\!$78450$\!$&$\!$618231$\!$&$\!$22\%$\!$\\
$\!$22$\!$&$\!$100 $10^3$ 8000 10 80 10 10 10 1 0 1$\!$&$\!$91132$\!$&$\!$793544$\!$&$\!$79091$\!$&$\!$621119$\!$&$\!$78806$\!$&$\!$620707$\!$&$\!$21\%$\!$\\
$\!$23$\!$&$\!$100 $10^3$ 4000 10 40 10 10 10 0 0 0$\!$&$\!$98172$\!$&$\!$399930$\!$&$\!$96254$\!$&$\!$392068$\!$&$\!$96216$\!$&$\!$392044$\!$&$\!$4\%$\!$\\
$\!$24$\!$&$\!$100 $10^3$ 4000 10 40 10 10 10 0 0 1$\!$&$\!$98204$\!$&$\!$399926$\!$&$\!$96297$\!$&$\!$392099$\!$&$\!$96261$\!$&$\!$392074$\!$&$\!$4\%$\!$\\
$\!$25$\!$&$\!$100 $10^3$ 2000 10 20 10 10 10 0 1 1$\!$&$\!$98142$\!$&$\!$199981$\!$&$\!$97996$\!$&$\!$199916$\!$&$\!$97985$\!$&$\!$199905$\!$&$\!$2\%$\!$\\
$\!$26$\!$&$\!$100 $10^3$ 2000 10 20 10 10 10 0 1 0$\!$&$\!$98172$\!$&$\!$199974$\!$&$\!$98015$\!$&$\!$199904$\!$&$\!$98002$\!$&$\!$199891$\!$&$\!$2\%$\!$\\
$\!$27$\!$&$\!$100 $10^3$ 2000 10 20 10 10 10 1 1 0$\!$&$\!$98178$\!$&$\!$199979$\!$&$\!$98033$\!$&$\!$199917$\!$&$\!$98020$\!$&$\!$199904$\!$&$\!$2\%$\!$\\
$\!$28$\!$&$\!$100 $10^3$ 2000 10 20 10 10 10 1 1 1$\!$&$\!$98205$\!$&$\!$199982$\!$&$\!$98053$\!$&$\!$199918$\!$&$\!$98035$\!$&$\!$199900$\!$&$\!$2\%$\!$\\
\hline
$\!$29$\!$&$\!$$10^4$ 10 20 2 2 100 1 1 1 0 0$\!$&$\!$66561$\!$&$\!$179656$\!$&$\!$59876$\!$&$\!$154571$\!$&$\!$48389$\!$&$\!$135097$\!$&$\!$52\%$\!$\\
$\!$30$\!$&$\!$$10^4$ 10 20 2 2 100 1 1 1 0 1$\!$&$\!$67037$\!$&$\!$179907$\!$&$\!$60402$\!$&$\!$155212$\!$&$\!$49169$\!$&$\!$136221$\!$&$\!$51\%$\!$\\
$\!$31$\!$&$\!$$10^4$ 10 20 2 2 100 1 1 0 0 0$\!$&$\!$86107$\!$&$\!$195614$\!$&$\!$72014$\!$&$\!$158565$\!$&$\!$68495$\!$&$\!$153152$\!$&$\!$32\%$\!$\\
$\!$32$\!$&$\!$$10^4$ 10 20 2 2 100 1 1 0 0 1$\!$&$\!$89537$\!$&$\!$196972$\!$&$\!$74760$\!$&$\!$159751$\!$&$\!$72758$\!$&$\!$156755$\!$&$\!$27\%$\!$\\
$\!$33$\!$&$\!$$10^4$ 10 30 3 3 100 1 1 1 0 1$\!$&$\!$80362$\!$&$\!$288491$\!$&$\!$81252$\!$&$\!$285041$\!$&$\!$75079$\!$&$\!$272676$\!$&$\!$25\%$\!$\\
$\!$34$\!$&$\!$$10^4$ 10 30 3 3 100 1 1 1 0 0$\!$&$\!$80533$\!$&$\!$288538$\!$&$\!$81431$\!$&$\!$285198$\!$&$\!$75191$\!$&$\!$272794$\!$&$\!$25\%$\!$\\
$\!$35$\!$&$\!$$10^4$ 10 30 3 3 100 1 1 0 0 0$\!$&$\!$96942$\!$&$\!$299151$\!$&$\!$94335$\!$&$\!$289965$\!$&$\!$93929$\!$&$\!$289046$\!$&$\!$6\%$\!$\\
$\!$36$\!$&$\!$$10^4$ 10 30 3 3 100 1 1 0 0 1$\!$&$\!$98026$\!$&$\!$299487$\!$&$\!$95367$\!$&$\!$290390$\!$&$\!$95119$\!$&$\!$289908$\!$&$\!$5\%$\!$\\
$\!$37$\!$&$\!$$10^4$ 10 20 2 2 100 1 1 0 1 0$\!$&$\!$99076$\!$&$\!$199861$\!$&$\!$98865$\!$&$\!$199684$\!$&$\!$98832$\!$&$\!$199695$\!$&$\!$1\%$\!$\\
$\!$38$\!$&$\!$$10^4$ 10 20 2 2 100 1 1 0 1 1$\!$&$\!$99271$\!$&$\!$199914$\!$&$\!$99015$\!$&$\!$199710$\!$&$\!$98994$\!$&$\!$199688$\!$&$\!$1\%$\!$\\
$\!$39$\!$&$\!$$10^4$ 10 20 2 2 100 1 1 1 1 0$\!$&$\!$99556$\!$&$\!$199938$\!$&$\!$99424$\!$&$\!$199843$\!$&$\!$99400$\!$&$\!$199843$\!$&$\!$1\%$\!$\\
$\!$40$\!$&$\!$$10^4$ 10 20 2 2 100 1 1 1 1 1$\!$&$\!$99635$\!$&$\!$199944$\!$&$\!$99463$\!$&$\!$199820$\!$&$\!$99446$\!$&$\!$199800$\!$&$\!$1\%$\!$\\
\hline
\end{tabular}
\caption{Results of performance tests -- Part~I.\label{table: wyniki1}}
\end{center}
\end{table}

We have also conducted performance tests. For this purpose, we implemented the module {\em experiments.py}~\cite{min2025-prog}, which contains, among other things, the function
\[ \mathit{runExperiment(k, n', m', o, p, l, sCN, sRN, acyclic, withI, withO)}. \]
This function first generates a random fuzzy interpretation~$\mI$ using the given parameters. It then computes: 
\begin{itemize}
\item a minimization of~$\mI$ preserving fuzzy concept assertions of $\mLP$, denoted by $\mI_1$, using Algorithm~\ref{alg2}, 
\item a minimization of~$\mI$ preserving fuzzy concept assertions of $\mLPD$, denoted by $\mI_2$, using an algorithm from~\cite{DBLP:journals/fss/Nguyen24} (see Algorithm~2 and Corollary~3.8 of~\cite{DBLP:journals/fss/Nguyen24}),
\item a minimization of~$\mI$ preserving fuzzy concept assertions of $\mLPU$, denoted by $\mI_3$, using an algorithm from~\cite{minimization-by-fBS} (see Definition~3 and Theorem~5 of~\cite{minimization-by-fBS}),
\end{itemize}
where $\Phi$ is determined by the parameters {\em withI} and {\em withO}, and where $\mLPD$ and $\mLPU$ are extensions of $\mLP$ with the Baaz projection operator or the universal role, respectively.  
The function returns $(n_3, m_3, n_2, m_2, n_1, m_1, t_1)$, where $n_1$, $n_2$, and $n_3$ denote the domain sizes of $\mI_1$, $\mI_2$, and $\mI_3$, respectively, $m_1$, $m_2$, and $m_3$ denote the numbers of nonzero instances of atomic roles in the respective fuzzy interpretations, and $t_1$ is the execution time of Algorithm~\ref{alg2}, measured in seconds. In the case where $\Phi = \emptyset$, the FDLs $\mLP$, $\mLPD$, and $\mLPU$ are \fALCreg, \fALCregD and \fALCregU, respectively, as discussed in Section~\ref{sec: motivation}. 

Such a fuzzy interpretation~$\mI$ is randomly generated based on the following criteria:
\begin{itemize}
\item The domain $\Delta^\mI$ consists of $k$ pairwise disconnected components, where disconnectedness means that for any $r \in \RN$, there is no pair $\tuple{x, y}$ of individuals from different components such that $r^\mI(x, y) > 0$.

\item Each component of $\Delta^\mI$ contains exactly $n'$ individuals, with $m'$ nonzero instances of atomic roles between individuals within that component. Each component includes $o$ named individuals ($o \leq n'$). Individuals are consecutively numbered, and named individuals appear first within each component -- this ordering is required when {\em acyclic = True}.

\item Each component contains exactly $p$ instances of atomic concepts. That is, for each component~$X$, the cardinality of the set $\{\tuple{A, x} \mid A \in \CN, x \in X, A^\mI(x) > 0\}$ is equal to~$p$.

\item The number of distinct nonzero fuzzy values used for concept and role instances in~$\mI$ is~$l$.

\item The parameters $\mathit{sCN}$ and $\mathit{sRN}$ mean $|\CN|$ and $|\RN|$, respectively.

\item If {\em acyclic = True}, then all atomic role instances in $\mI$ form an acyclic graph. This is ensured by the condition that for any $r \in \RN$, $r^\mI(x, y) > 0$ only when the identification number of $x$ is less than that of~$y$.
\end{itemize}

\begin{table}[t!]
\begin{center}
\footnotesize
\begin{tabular}{|r|l||r|r||r|r||r|r|r|}
\hline
\# & Parameters for {\em runExperiment} & \multicolumn{2}{|c||}{$\mLPU$} & \multicolumn{2}{|c||}{$\mLPD$} & \multicolumn{3}{|c|}{$\mLP$} \\
\cline{3-4} 
\cline{5-6}
\cline{7-9}
   &                              & $n_3$ & $m_3$ & $n_2$ & $m_2$ & $n_1$ & $m_1$ & Red. \\
\hline\hline
$\!$41$\!$&$\!$2000 100 200 10 10 10 1 2 1 0 0$\!$&$\!$107559$\!$&$\!$330328$\!$&$\!$75053$\!$&$\!$219598$\!$&$\!$64743$\!$&$\!$200504$\!$&$\!$68\%$\!$\\
$\!$42$\!$&$\!$2000 100 200 10 10 10 1 2 1 0 1$\!$&$\!$108984$\!$&$\!$331482$\!$&$\!$76313$\!$&$\!$220796$\!$&$\!$66281$\!$&$\!$202184$\!$&$\!$67\%$\!$\\
$\!$43$\!$&$\!$2000 100 200 10 10 10 1 2 0 0 0$\!$&$\!$163636$\!$&$\!$387704$\!$&$\!$135503$\!$&$\!$318889$\!$&$\!$133179$\!$&$\!$316044$\!$&$\!$33\%$\!$\\
$\!$44$\!$&$\!$2000 100 400 10 10 10 1 2 1 0 1$\!$&$\!$153551$\!$&$\!$749158$\!$&$\!$141038$\!$&$\!$661715$\!$&$\!$136405$\!$&$\!$652677$\!$&$\!$32\%$\!$\\
$\!$45$\!$&$\!$2000 100 400 10 10 10 1 2 1 0 0$\!$&$\!$153578$\!$&$\!$749409$\!$&$\!$141199$\!$&$\!$662494$\!$&$\!$136490$\!$&$\!$653488$\!$&$\!$32\%$\!$\\
$\!$46$\!$&$\!$2000 100 200 10 10 10 1 2 0 0 1$\!$&$\!$168908$\!$&$\!$390646$\!$&$\!$140414$\!$&$\!$321535$\!$&$\!$138464$\!$&$\!$319216$\!$&$\!$31\%$\!$\\
$\!$47$\!$&$\!$2000 100 800 10 10 10 1 2 1 0 0$\!$&$\!$177714$\!$&$\!$1560757$\!$&$\!$175969$\!$&$\!$1525522$\!$&$\!$174703$\!$&$\!$1522458$\!$&$\!$13\%$\!$\\
$\!$48$\!$&$\!$2000 100 800 10 10 10 1 2 1 0 1$\!$&$\!$177744$\!$&$\!$1561200$\!$&$\!$176012$\!$&$\!$1525152$\!$&$\!$174714$\!$&$\!$1522235$\!$&$\!$13\%$\!$\\
$\!$49$\!$&$\!$2000 100 400 10 10 10 1 2 0 0 0$\!$&$\!$196248$\!$&$\!$799530$\!$&$\!$192838$\!$&$\!$785526$\!$&$\!$192783$\!$&$\!$785461$\!$&$\!$4\%$\!$\\
$\!$50$\!$&$\!$2000 100 400 10 10 10 1 2 0 0 1$\!$&$\!$196670$\!$&$\!$799598$\!$&$\!$193246$\!$&$\!$785381$\!$&$\!$193194$\!$&$\!$785321$\!$&$\!$3\%$\!$\\
$\!$51$\!$&$\!$2000 100 200 10 10 10 1 2 0 1 0$\!$&$\!$196261$\!$&$\!$399852$\!$&$\!$196152$\!$&$\!$399777$\!$&$\!$196048$\!$&$\!$399686$\!$&$\!$2\%$\!$\\
$\!$52$\!$&$\!$2000 100 200 10 10 10 1 2 1 1 0$\!$&$\!$196338$\!$&$\!$399864$\!$&$\!$196238$\!$&$\!$399790$\!$&$\!$196112$\!$&$\!$399679$\!$&$\!$2\%$\!$\\
$\!$53$\!$&$\!$2000 100 200 10 10 10 1 2 0 1 1$\!$&$\!$196635$\!$&$\!$399878$\!$&$\!$196490$\!$&$\!$399785$\!$&$\!$196416$\!$&$\!$399711$\!$&$\!$2\%$\!$\\
$\!$54$\!$&$\!$2000 100 200 10 10 10 1 2 1 1 1$\!$&$\!$196749$\!$&$\!$399894$\!$&$\!$196606$\!$&$\!$399803$\!$&$\!$196507$\!$&$\!$399703$\!$&$\!$2\%$\!$\\
\hline
$\!$55$\!$&$\!$$5 \!\cdot\! 10^4$ 5 6 1 2 10 1 1 0 0 0$\!$&$\!$117289$\!$&$\!$223800$\!$&$\!$67394$\!$&$\!$107751$\!$&$\!$44931$\!$&$\!$75999$\!$&$\!$82\%$\!$\\
$\!$56$\!$&$\!$$5 \!\cdot\! 10^4$ 5 6 1 2 10 1 1 1 0 0$\!$&$\!$76584$\!$&$\!$181259$\!$&$\!$66691$\!$&$\!$139477$\!$&$\!$48830$\!$&$\!$109067$\!$&$\!$80\%$\!$\\
$\!$57$\!$&$\!$$5 \!\cdot\! 10^4$ 5 6 1 2 10 1 1 1 0 1$\!$&$\!$90733$\!$&$\!$195909$\!$&$\!$75566$\!$&$\!$149324$\!$&$\!$61340$\!$&$\!$125244$\!$&$\!$75\%$\!$\\
$\!$58$\!$&$\!$$5 \!\cdot\! 10^4$ 5 6 1 2 10 1 1 0 0 1$\!$&$\!$162194$\!$&$\!$255861$\!$&$\!$90876$\!$&$\!$121438$\!$&$\!$81607$\!$&$\!$108408$\!$&$\!$67\%$\!$\\
$\!$59$\!$&$\!$$5 \!\cdot\! 10^4$ 5 10 1 2 10 1 1 1 0 0$\!$&$\!$130503$\!$&$\!$394310$\!$&$\!$130376$\!$&$\!$363135$\!$&$\!$90129$\!$&$\!$273048$\!$&$\!$64\%$\!$\\
$\!$60$\!$&$\!$$5 \!\cdot\! 10^4$ 5 10 1 2 10 1 1 1 0 1$\!$&$\!$131998$\!$&$\!$395982$\!$&$\!$130616$\!$&$\!$363632$\!$&$\!$92251$\!$&$\!$278014$\!$&$\!$63\%$\!$\\
$\!$61$\!$&$\!$$5 \!\cdot\! 10^4$ 5 15 1 2 10 1 1 1 0 0$\!$&$\!$143582$\!$&$\!$590409$\!$&$\!$169828$\!$&$\!$634230$\!$&$\!$106095$\!$&$\!$428117$\!$&$\!$58\%$\!$\\
$\!$62$\!$&$\!$$5 \!\cdot\! 10^4$ 5 15 1 2 10 1 1 1 0 1$\!$&$\!$144251$\!$&$\!$591556$\!$&$\!$169718$\!$&$\!$633891$\!$&$\!$107420$\!$&$\!$434702$\!$&$\!$57\%$\!$\\
$\!$63$\!$&$\!$$5 \!\cdot\! 10^4$ 5 10 1 2 10 1 1 0 0 0$\!$&$\!$210619$\!$&$\!$475110$\!$&$\!$171067$\!$&$\!$363035$\!$&$\!$149414$\!$&$\!$321684$\!$&$\!$40\%$\!$\\
$\!$64$\!$&$\!$$5 \!\cdot\! 10^4$ 5 10 1 2 10 1 1 0 0 1$\!$&$\!$223061$\!$&$\!$484146$\!$&$\!$177929$\!$&$\!$367383$\!$&$\!$169306$\!$&$\!$351189$\!$&$\!$32\%$\!$\\
$\!$65$\!$&$\!$$5 \!\cdot\! 10^4$ 5 6 1 2 10 1 1 0 1 0$\!$&$\!$228911$\!$&$\!$291944$\!$&$\!$211182$\!$&$\!$265136$\!$&$\!$185088$\!$&$\!$262951$\!$&$\!$26\%$\!$\\
$\!$66$\!$&$\!$$5 \!\cdot\! 10^4$ 5 6 1 2 10 1 1 1 1 0$\!$&$\!$234320$\!$&$\!$293642$\!$&$\!$223602$\!$&$\!$275970$\!$&$\!$189542$\!$&$\!$272972$\!$&$\!$24\%$\!$\\
$\!$67$\!$&$\!$$5 \!\cdot\! 10^4$ 5 6 1 2 10 1 1 0 1 1$\!$&$\!$234807$\!$&$\!$294246$\!$&$\!$215268$\!$&$\!$266929$\!$&$\!$214727$\!$&$\!$266297$\!$&$\!$14\%$\!$\\
$\!$68$\!$&$\!$$5 \!\cdot\! 10^4$ 5 15 1 2 10 1 1 0 0 0$\!$&$\!$243099$\!$&$\!$744521$\!$&$\!$238320$\!$&$\!$719598$\!$&$\!$224099$\!$&$\!$672244$\!$&$\!$10\%$\!$\\
$\!$69$\!$&$\!$$5 \!\cdot\! 10^4$ 5 6 1 2 10 1 1 1 1 1$\!$&$\!$239496$\!$&$\!$295404$\!$&$\!$226233$\!$&$\!$277221$\!$&$\!$225191$\!$&$\!$275788$\!$&$\!$10\%$\!$\\
$\!$70$\!$&$\!$$5 \!\cdot\! 10^4$ 5 15 1 2 10 1 1 0 0 1$\!$&$\!$246532$\!$&$\!$747320$\!$&$\!$239137$\!$&$\!$720133$\!$&$\!$236137$\!$&$\!$705277$\!$&$\!$6\%$\!$\\
\hline
$\!$71$\!$&$\!$$6 \!\cdot\! 10^4$ 5 6 1 2 100 10 10 1 0 0$\!$&$\!$172759$\!$&$\!$329557$\!$&$\!$109014$\!$&$\!$208070$\!$&$\!$98565$\!$&$\!$196032$\!$&$\!$67\%$\!$\\
$\!$72$\!$&$\!$$7 \!\cdot\! 10^4$ 5 6 1 2 100 10 10 1 0 0$\!$&$\!$200892$\!$&$\!$384075$\!$&$\!$126752$\!$&$\!$242600$\!$&$\!$114299$\!$&$\!$228217$\!$&$\!$67\%$\!$\\
$\!$73$\!$&$\!$$8 \!\cdot\! 10^4$ 5 6 1 2 100 10 10 1 0 0$\!$&$\!$229118$\!$&$\!$438958$\!$&$\!$144992$\!$&$\!$278003$\!$&$\!$130111$\!$&$\!$260747$\!$&$\!$67\%$\!$\\
$\!$74$\!$&$\!$$9 \!\cdot\! 10^4$ 5 6 1 2 100 10 10 1 0 0$\!$&$\!$256798$\!$&$\!$493347$\!$&$\!$161921$\!$&$\!$311295$\!$&$\!$145035$\!$&$\!$291601$\!$&$\!$68\%$\!$\\
$\!$75$\!$&$\!$$10^5$ 5 6 1 2 100 10 10 1 0 0$\!$&$\!$284873$\!$&$\!$548022$\!$&$\!$179661$\!$&$\!$345894$\!$&$\!$160386$\!$&$\!$323435$\!$&$\!$68\%$\!$\\
$\!$76$\!$&$\!$$6 \!\cdot\! 10^4$ 5 10 1 2 100 10 10 1 0 0$\!$&$\!$215332$\!$&$\!$575199$\!$&$\!$169853$\!$&$\!$448029$\!$&$\!$158062$\!$&$\!$432159$\!$&$\!$47\%$\!$\\
$\!$77$\!$&$\!$$7 \!\cdot\! 10^4$ 5 10 1 2 100 10 10 1 0 0$\!$&$\!$250916$\!$&$\!$670726$\!$&$\!$198055$\!$&$\!$522775$\!$&$\!$183865$\!$&$\!$503612$\!$&$\!$47\%$\!$\\
$\!$78$\!$&$\!$$8 \!\cdot\! 10^4$ 5 10 1 2 100 10 10 1 0 0$\!$&$\!$286090$\!$&$\!$766130$\!$&$\!$225608$\!$&$\!$596347$\!$&$\!$208982$\!$&$\!$573736$\!$&$\!$48\%$\!$\\
$\!$79$\!$&$\!$$9 \!\cdot\! 10^4$ 5 10 1 2 100 10 10 1 0 0$\!$&$\!$321463$\!$&$\!$861594$\!$&$\!$253583$\!$&$\!$671245$\!$&$\!$234513$\!$&$\!$645265$\!$&$\!$48\%$\!$\\
$\!$80$\!$&$\!$$10^5$ 5 10 1 2 100 10 10 1 0 0$\!$&$\!$356668$\!$&$\!$957381$\!$&$\!$281428$\!$&$\!$745838$\!$&$\!$259517$\!$&$\!$715904$\!$&$\!$48\%$\!$\\
\hline
\end{tabular}
\caption{Results of performance tests -- Part~II.\label{table: wyniki2}}
\end{center}
\end{table}

In Tables~\ref{table: wyniki1} and~\ref{table: wyniki2}, we present the results of our performance tests. The values for the parameters {\em acyclic}, {\em withI}, and {\em withO} are indicated as 1 ({\em True}) or 0 ({\em False}). The first column lists test identifiers, while the last column reports the domain reduction percentage achieved by Algorithm~\ref{alg2}, calculated as \mbox{$1 - n_1/(k \cdot n')$}. The results shown in the last seven columns are rounded averages computed over three repetitions of each test. 

Denote $n = k \cdot n'$ (the domain size of $\mI$) and $m = k \cdot m'$ (the number of nonzero instances of atomic roles in $\mI$). 
In the tests, these values are as follows:
\begin{itemize}
\item Tests~1--14: $n = 50\,000$ and $100\,000 \leq m \leq 400\,000$, 
\item Tests~15--40: $n = 100\,000$ and $200\,000 \leq m \leq 800\,000$,
\item Tests~41--54: $n = 200\,000$ and $400\,000 \leq m \leq 1\,600\,000$, 
\item Tests~55--70: $n = 250\,000$ and $300\,000 \leq m \leq 750\,000$, 
\item Tests~71--80: $300\,000 \leq n \leq 500\,000$ and $360\,000 \leq m \leq 10^6$.
\end{itemize}

Observe that:
\begin{itemize}
\item According to the nature of the minimizations, $\mI_1$ is consistently smaller than both $\mI_2$ and $\mI_3$ across all tests, with respect to both domain size and the number of nonzero atomic role instances. On average, the domain of $\mI_1$ is smaller than that of $\mI_2$ and $\mI_3$ by $7\%$ and $20\%$, respectively.
\item The reductions achieved by all three kinds of minimization generally increase when the parameter {\em withI} is switched from 1 to 0 and when {\em acyclic} is changed from 0 to 1.
\end{itemize}

Note that across the tests, the product $mn$ ranges from $5 \cdot 10^9$ to $5 \cdot 10^{11}$. 
We conducted our experiments on a laptop equipped with an Intel(R) Core(TM) i5-6200U CPU @~2.30GHz and 8.00~GB of RAM. 
Algorithm~\ref{alg2} completed any of the tests in less than 90 seconds, with an average runtime of less than 33 seconds. 
This empirical performance is consistent with the theoretical time complexity of Algorithm~\ref{alg2}, namely \mbox{$O((m\log{l} + n)\log{n})$}.


\begin{table}[t!]
\begin{center}
\footnotesize
\begin{tabular}{|r|l|r|r|r|r|r|}
\hline
\# & Parameters for {\em runExperiment2} & \multicolumn{5}{|c|}{Domain reduction} \\
   &                                     & \multicolumn{5}{|c|}{for a given threshold $\gamma$} \\
\cline{3-7} 
   &                                     & 1 & 0.8 & 0.4 & 0.2 & 0.1 \\
\hline\hline
$\!$1$\!$&$\!$100 500 $10^3$ 10 20 10 3 3 1 0 1$\!$&$\!$85\%$\!$&$\!$85\%$\!$&$\!$86\%$\!$&$\!$87\%$\!$&$\!$87\%$\!$\\
$\!$2$\!$&$\!$100 500 $10^3$ 10 20 10 3 3 1 0 0$\!$&$\!$85\%$\!$&$\!$85\%$\!$&$\!$86\%$\!$&$\!$87\%$\!$&$\!$87\%$\!$\\
$\!$3$\!$&$\!$100 500 2000 20 40 10 3 3 1 0 0$\!$&$\!$38\%$\!$&$\!$38\%$\!$&$\!$40\%$\!$&$\!$42\%$\!$&$\!$43\%$\!$\\
$\!$4$\!$&$\!$100 500 2000 20 40 10 3 3 1 0 1$\!$&$\!$38\%$\!$&$\!$38\%$\!$&$\!$40\%$\!$&$\!$42\%$\!$&$\!$43\%$\!$\\
$\!$5$\!$&$\!$100 500 $10^3$ 10 20 10 3 3 0 0 0$\!$&$\!$35\%$\!$&$\!$35\%$\!$&$\!$35\%$\!$&$\!$35\%$\!$&$\!$35\%$\!$\\
$\!$6$\!$&$\!$100 500 $10^3$ 10 20 10 3 3 0 0 1$\!$&$\!$34\%$\!$&$\!$34\%$\!$&$\!$35\%$\!$&$\!$35\%$\!$&$\!$35\%$\!$\\
$\!$7$\!$&$\!$100 500 4000 40 80 10 3 3 1 0 0$\!$&$\!$11\%$\!$&$\!$12\%$\!$&$\!$13\%$\!$&$\!$14\%$\!$&$\!$14\%$\!$\\
$\!$8$\!$&$\!$100 500 4000 40 80 10 3 3 1 0 1$\!$&$\!$11\%$\!$&$\!$12\%$\!$&$\!$13\%$\!$&$\!$14\%$\!$&$\!$14\%$\!$\\
$\!$9$\!$&$\!$100 500 2000 20 40 10 3 3 0 0 0$\!$&$\!$4\%$\!$&$\!$4\%$\!$&$\!$4\%$\!$&$\!$4\%$\!$&$\!$4\%$\!$\\
$\!$10$\!$&$\!$100 500 2000 20 40 10 3 3 0 0 1$\!$&$\!$4\%$\!$&$\!$4\%$\!$&$\!$4\%$\!$&$\!$4\%$\!$&$\!$4\%$\!$\\
$\!$11$\!$&$\!$100 500 $10^3$ 10 20 10 3 3 0 1 0$\!$&$\!$2\%$\!$&$\!$2\%$\!$&$\!$2\%$\!$&$\!$2\%$\!$&$\!$2\%$\!$\\
$\!$12$\!$&$\!$100 500 $10^3$ 10 20 10 3 3 1 1 0$\!$&$\!$2\%$\!$&$\!$2\%$\!$&$\!$2\%$\!$&$\!$2\%$\!$&$\!$2\%$\!$\\
$\!$13$\!$&$\!$100 500 $10^3$ 10 20 10 3 3 1 1 1$\!$&$\!$2\%$\!$&$\!$2\%$\!$&$\!$2\%$\!$&$\!$2\%$\!$&$\!$2\%$\!$\\
$\!$14$\!$&$\!$100 500 $10^3$ 10 20 10 3 3 0 1 1$\!$&$\!$2\%$\!$&$\!$2\%$\!$&$\!$2\%$\!$&$\!$2\%$\!$&$\!$2\%$\!$\\
\hline
$\!$15$\!$&$\!$100 $10^3$ 2000 10 20 10 10 10 1 0 0$\!$&$\!$90\%$\!$&$\!$90\%$\!$&$\!$90\%$\!$&$\!$90\%$\!$&$\!$90\%$\!$\\
$\!$16$\!$&$\!$100 $10^3$ 2000 10 20 10 10 10 1 0 1$\!$&$\!$89\%$\!$&$\!$89\%$\!$&$\!$90\%$\!$&$\!$90\%$\!$&$\!$90\%$\!$\\
$\!$17$\!$&$\!$100 $10^3$ 4000 10 40 10 10 10 1 0 1$\!$&$\!$53\%$\!$&$\!$53\%$\!$&$\!$54\%$\!$&$\!$55\%$\!$&$\!$56\%$\!$\\
$\!$18$\!$&$\!$100 $10^3$ 4000 10 40 10 10 10 1 0 0$\!$&$\!$53\%$\!$&$\!$53\%$\!$&$\!$54\%$\!$&$\!$55\%$\!$&$\!$55\%$\!$\\
$\!$19$\!$&$\!$100 $10^3$ 2000 10 20 10 10 10 0 0 0$\!$&$\!$34\%$\!$&$\!$34\%$\!$&$\!$35\%$\!$&$\!$35\%$\!$&$\!$35\%$\!$\\
$\!$20$\!$&$\!$100 $10^3$ 2000 10 20 10 10 10 0 0 1$\!$&$\!$34\%$\!$&$\!$34\%$\!$&$\!$34\%$\!$&$\!$35\%$\!$&$\!$35\%$\!$\\
$\!$21$\!$&$\!$100 $10^3$ 8000 10 80 10 10 10 1 0 1$\!$&$\!$22\%$\!$&$\!$22\%$\!$&$\!$22\%$\!$&$\!$23\%$\!$&$\!$23\%$\!$\\
$\!$22$\!$&$\!$100 $10^3$ 8000 10 80 10 10 10 1 0 0$\!$&$\!$21\%$\!$&$\!$21\%$\!$&$\!$22\%$\!$&$\!$23\%$\!$&$\!$23\%$\!$\\
$\!$23$\!$&$\!$100 $10^3$ 4000 10 40 10 10 10 0 0 1$\!$&$\!$4\%$\!$&$\!$4\%$\!$&$\!$4\%$\!$&$\!$4\%$\!$&$\!$4\%$\!$\\
$\!$24$\!$&$\!$100 $10^3$ 4000 10 40 10 10 10 0 0 0$\!$&$\!$4\%$\!$&$\!$4\%$\!$&$\!$4\%$\!$&$\!$4\%$\!$&$\!$4\%$\!$\\
$\!$25$\!$&$\!$100 $10^3$ 2000 10 20 10 10 10 1 1 0$\!$&$\!$2\%$\!$&$\!$2\%$\!$&$\!$2\%$\!$&$\!$2\%$\!$&$\!$2\%$\!$\\
$\!$26$\!$&$\!$100 $10^3$ 2000 10 20 10 10 10 1 1 1$\!$&$\!$2\%$\!$&$\!$2\%$\!$&$\!$2\%$\!$&$\!$2\%$\!$&$\!$2\%$\!$\\
$\!$27$\!$&$\!$100 $10^3$ 2000 10 20 10 10 10 0 1 1$\!$&$\!$2\%$\!$&$\!$2\%$\!$&$\!$2\%$\!$&$\!$2\%$\!$&$\!$2\%$\!$\\
$\!$28$\!$&$\!$100 $10^3$ 2000 10 20 10 10 10 0 1 0$\!$&$\!$2\%$\!$&$\!$2\%$\!$&$\!$2\%$\!$&$\!$2\%$\!$&$\!$2\%$\!$\\
\hline
$\!$29$\!$&$\!$$10^4$ 10 20 2 2 10 1 1 1 0 0$\!$&$\!$60\%$\!$&$\!$60\%$\!$&$\!$65\%$\!$&$\!$72\%$\!$&$\!$78\%$\!$\\
$\!$30$\!$&$\!$$10^4$ 10 20 2 2 10 1 1 1 0 1$\!$&$\!$59\%$\!$&$\!$59\%$\!$&$\!$63\%$\!$&$\!$68\%$\!$&$\!$72\%$\!$\\
$\!$31$\!$&$\!$$10^4$ 10 30 3 3 10 1 1 1 0 0$\!$&$\!$34\%$\!$&$\!$34\%$\!$&$\!$41\%$\!$&$\!$50\%$\!$&$\!$59\%$\!$\\
$\!$32$\!$&$\!$$10^4$ 10 20 2 2 10 1 1 0 0 0$\!$&$\!$33\%$\!$&$\!$34\%$\!$&$\!$35\%$\!$&$\!$36\%$\!$&$\!$38\%$\!$\\
$\!$33$\!$&$\!$$10^4$ 10 30 3 3 10 1 1 1 0 1$\!$&$\!$33\%$\!$&$\!$34\%$\!$&$\!$40\%$\!$&$\!$48\%$\!$&$\!$55\%$\!$\\
$\!$34$\!$&$\!$$10^4$ 10 20 2 2 10 1 1 0 0 1$\!$&$\!$28\%$\!$&$\!$28\%$\!$&$\!$29\%$\!$&$\!$29\%$\!$&$\!$30\%$\!$\\
$\!$35$\!$&$\!$$10^4$ 10 30 3 3 10 1 1 0 0 0$\!$&$\!$6\%$\!$&$\!$6\%$\!$&$\!$7\%$\!$&$\!$7\%$\!$&$\!$8\%$\!$\\
$\!$36$\!$&$\!$$10^4$ 10 30 3 3 10 1 1 0 0 1$\!$&$\!$5\%$\!$&$\!$5\%$\!$&$\!$6\%$\!$&$\!$6\%$\!$&$\!$6\%$\!$\\
$\!$37$\!$&$\!$$10^4$ 10 20 2 2 10 1 1 0 1 0$\!$&$\!$1\%$\!$&$\!$1\%$\!$&$\!$1\%$\!$&$\!$1\%$\!$&$\!$1\%$\!$\\
$\!$38$\!$&$\!$$10^4$ 10 20 2 2 10 1 1 0 1 1$\!$&$\!$1\%$\!$&$\!$1\%$\!$&$\!$1\%$\!$&$\!$1\%$\!$&$\!$1\%$\!$\\
$\!$39$\!$&$\!$$10^4$ 10 20 2 2 10 1 1 1 1 0$\!$&$\!$1\%$\!$&$\!$1\%$\!$&$\!$1\%$\!$&$\!$1\%$\!$&$\!$1\%$\!$\\
$\!$40$\!$&$\!$$10^4$ 10 20 2 2 10 1 1 1 1 1$\!$&$\!$1\%$\!$&$\!$1\%$\!$&$\!$1\%$\!$&$\!$1\%$\!$&$\!$1\%$\!$\\
\hline
\end{tabular}
\caption{Results of experiments for different values of~$\gamma$ -- Part~I.\label{table: results2a}}
\end{center}
\end{table}

\begin{table}[t!]
\begin{center}
\footnotesize
\begin{tabular}{|r|l|r|r|r|r|r|}
\hline
\# & Parameters for {\em runExperiment2} & \multicolumn{5}{|c|}{Domain reduction} \\
   &                                     & \multicolumn{5}{|c|}{for a given threshold $\gamma$} \\
\cline{3-7} 
   &                                     & 1 & 0.8 & 0.4 & 0.2 & 0.1 \\
\hline\hline
$\!$41$\!$&$\!$2000 100 200 10 10 10 1 2 1 0 0$\!$&$\!$68\%$\!$&$\!$68\%$\!$&$\!$70\%$\!$&$\!$72\%$\!$&$\!$74\%$\!$\\
$\!$42$\!$&$\!$2000 100 200 10 10 10 1 2 1 0 1$\!$&$\!$67\%$\!$&$\!$67\%$\!$&$\!$69\%$\!$&$\!$71\%$\!$&$\!$73\%$\!$\\
$\!$43$\!$&$\!$2000 100 200 10 10 10 1 2 0 0 0$\!$&$\!$33\%$\!$&$\!$33\%$\!$&$\!$34\%$\!$&$\!$34\%$\!$&$\!$34\%$\!$\\
$\!$44$\!$&$\!$2000 100 400 10 10 10 1 2 1 0 0$\!$&$\!$32\%$\!$&$\!$32\%$\!$&$\!$34\%$\!$&$\!$37\%$\!$&$\!$38\%$\!$\\
$\!$45$\!$&$\!$2000 100 400 10 10 10 1 2 1 0 1$\!$&$\!$32\%$\!$&$\!$32\%$\!$&$\!$34\%$\!$&$\!$37\%$\!$&$\!$39\%$\!$\\
$\!$46$\!$&$\!$2000 100 200 10 10 10 1 2 0 0 1$\!$&$\!$31\%$\!$&$\!$31\%$\!$&$\!$31\%$\!$&$\!$31\%$\!$&$\!$32\%$\!$\\
$\!$47$\!$&$\!$2000 100 800 10 10 10 1 2 1 0 0$\!$&$\!$13\%$\!$&$\!$13\%$\!$&$\!$14\%$\!$&$\!$16\%$\!$&$\!$17\%$\!$\\
$\!$48$\!$&$\!$2000 100 800 10 10 10 1 2 1 0 1$\!$&$\!$13\%$\!$&$\!$13\%$\!$&$\!$14\%$\!$&$\!$16\%$\!$&$\!$17\%$\!$\\
$\!$49$\!$&$\!$2000 100 400 10 10 10 1 2 0 0 0$\!$&$\!$4\%$\!$&$\!$4\%$\!$&$\!$4\%$\!$&$\!$4\%$\!$&$\!$4\%$\!$\\
$\!$50$\!$&$\!$2000 100 400 10 10 10 1 2 0 0 1$\!$&$\!$3\%$\!$&$\!$3\%$\!$&$\!$3\%$\!$&$\!$3\%$\!$&$\!$4\%$\!$\\
$\!$51$\!$&$\!$2000 100 200 10 10 10 1 2 0 1 0$\!$&$\!$2\%$\!$&$\!$2\%$\!$&$\!$2\%$\!$&$\!$2\%$\!$&$\!$2\%$\!$\\
$\!$52$\!$&$\!$2000 100 200 10 10 10 1 2 1 1 0$\!$&$\!$2\%$\!$&$\!$2\%$\!$&$\!$2\%$\!$&$\!$2\%$\!$&$\!$2\%$\!$\\
$\!$53$\!$&$\!$2000 100 200 10 10 10 1 2 0 1 1$\!$&$\!$2\%$\!$&$\!$2\%$\!$&$\!$2\%$\!$&$\!$2\%$\!$&$\!$2\%$\!$\\
$\!$54$\!$&$\!$2000 100 200 10 10 10 1 2 1 1 1$\!$&$\!$2\%$\!$&$\!$2\%$\!$&$\!$2\%$\!$&$\!$2\%$\!$&$\!$2\%$\!$\\
\hline
$\!$55$\!$&$\!$$5 \!\cdot\! 10^4$ 5 6 1 2 10 1 1 0 0 0$\!$&$\!$82\%$\!$&$\!$83\%$\!$&$\!$86\%$\!$&$\!$92\%$\!$&$\!$97\%$\!$\\
$\!$56$\!$&$\!$$5 \!\cdot\! 10^4$ 5 6 1 2 10 1 1 1 0 0$\!$&$\!$80\%$\!$&$\!$81\%$\!$&$\!$88\%$\!$&$\!$95\%$\!$&$\!$99\%$\!$\\
$\!$57$\!$&$\!$$5 \!\cdot\! 10^4$ 5 6 1 2 10 1 1 1 0 1$\!$&$\!$75\%$\!$&$\!$76\%$\!$&$\!$78\%$\!$&$\!$79\%$\!$&$\!$80\%$\!$\\
$\!$58$\!$&$\!$$5 \!\cdot\! 10^4$ 5 6 1 2 10 1 1 0 0 1$\!$&$\!$67\%$\!$&$\!$67\%$\!$&$\!$68\%$\!$&$\!$69\%$\!$&$\!$70\%$\!$\\
$\!$59$\!$&$\!$$5 \!\cdot\! 10^4$ 5 10 1 2 10 1 1 1 0 0$\!$&$\!$64\%$\!$&$\!$64\%$\!$&$\!$72\%$\!$&$\!$86\%$\!$&$\!$98\%$\!$\\
$\!$60$\!$&$\!$$5 \!\cdot\! 10^4$ 5 10 1 2 10 1 1 1 0 1$\!$&$\!$63\%$\!$&$\!$63\%$\!$&$\!$69\%$\!$&$\!$76\%$\!$&$\!$79\%$\!$\\
$\!$61$\!$&$\!$$5 \!\cdot\! 10^4$ 5 15 1 2 10 1 1 1 0 0$\!$&$\!$58\%$\!$&$\!$59\%$\!$&$\!$76\%$\!$&$\!$97\%$\!$&$\!$99\%$\!$\\
$\!$62$\!$&$\!$$5 \!\cdot\! 10^4$ 5 15 1 2 10 1 1 1 0 1$\!$&$\!$57\%$\!$&$\!$58\%$\!$&$\!$70\%$\!$&$\!$79\%$\!$&$\!$80\%$\!$\\
$\!$63$\!$&$\!$$5 \!\cdot\! 10^4$ 5 10 1 2 10 1 1 0 0 0$\!$&$\!$40\%$\!$&$\!$41\%$\!$&$\!$43\%$\!$&$\!$50\%$\!$&$\!$62\%$\!$\\
$\!$64$\!$&$\!$$5 \!\cdot\! 10^4$ 5 10 1 2 10 1 1 0 0 1$\!$&$\!$32\%$\!$&$\!$32\%$\!$&$\!$33\%$\!$&$\!$35\%$\!$&$\!$37\%$\!$\\
$\!$65$\!$&$\!$$5 \!\cdot\! 10^4$ 5 6 1 2 10 1 1 0 1 0$\!$&$\!$26\%$\!$&$\!$26\%$\!$&$\!$27\%$\!$&$\!$43\%$\!$&$\!$76\%$\!$\\
$\!$66$\!$&$\!$$5 \!\cdot\! 10^4$ 5 6 1 2 10 1 1 1 1 0$\!$&$\!$24\%$\!$&$\!$24\%$\!$&$\!$27\%$\!$&$\!$54\%$\!$&$\!$89\%$\!$\\
$\!$67$\!$&$\!$$5 \!\cdot\! 10^4$ 5 6 1 2 10 1 1 0 1 1$\!$&$\!$14\%$\!$&$\!$14\%$\!$&$\!$14\%$\!$&$\!$14\%$\!$&$\!$14\%$\!$\\
$\!$68$\!$&$\!$$5 \!\cdot\! 10^4$ 5 15 1 2 10 1 1 0 0 0$\!$&$\!$10\%$\!$&$\!$10\%$\!$&$\!$13\%$\!$&$\!$28\%$\!$&$\!$54\%$\!$\\
$\!$69$\!$&$\!$$5 \!\cdot\! 10^4$ 5 6 1 2 10 1 1 1 1 1$\!$&$\!$10\%$\!$&$\!$10\%$\!$&$\!$10\%$\!$&$\!$10\%$\!$&$\!$10\%$\!$\\
$\!$70$\!$&$\!$$5 \!\cdot\! 10^4$ 5 15 1 2 10 1 1 0 0 1$\!$&$\!$6\%$\!$&$\!$6\%$\!$&$\!$6\%$\!$&$\!$8\%$\!$&$\!$10\%$\!$\\
\hline
$\!$71$\!$&$\!$$10^5$ 5 6 1 2 10 10 10 1 0 0$\!$&$\!$71\%$\!$&$\!$71\%$\!$&$\!$74\%$\!$&$\!$76\%$\!$&$\!$77\%$\!$\\
$\!$72$\!$&$\!$$9 \!\cdot\! 10^4$ 5 6 1 2 10 10 10 1 0 0$\!$&$\!$71\%$\!$&$\!$71\%$\!$&$\!$73\%$\!$&$\!$76\%$\!$&$\!$77\%$\!$\\
$\!$73$\!$&$\!$$8 \!\cdot\! 10^4$ 5 6 1 2 10 10 10 1 0 0$\!$&$\!$71\%$\!$&$\!$71\%$\!$&$\!$73\%$\!$&$\!$75\%$\!$&$\!$77\%$\!$\\
$\!$74$\!$&$\!$$7 \!\cdot\! 10^4$ 5 6 1 2 10 10 10 1 0 0$\!$&$\!$71\%$\!$&$\!$71\%$\!$&$\!$73\%$\!$&$\!$75\%$\!$&$\!$77\%$\!$\\
$\!$75$\!$&$\!$$6 \!\cdot\! 10^4$ 5 6 1 2 10 10 10 1 0 0$\!$&$\!$70\%$\!$&$\!$71\%$\!$&$\!$73\%$\!$&$\!$75\%$\!$&$\!$76\%$\!$\\
$\!$76$\!$&$\!$$10^5$ 5 10 1 2 10 10 10 1 0 0$\!$&$\!$52\%$\!$&$\!$52\%$\!$&$\!$55\%$\!$&$\!$57\%$\!$&$\!$59\%$\!$\\
$\!$77$\!$&$\!$$9 \!\cdot\! 10^4$ 5 10 1 2 10 10 10 1 0 0$\!$&$\!$52\%$\!$&$\!$52\%$\!$&$\!$54\%$\!$&$\!$57\%$\!$&$\!$59\%$\!$\\
$\!$78$\!$&$\!$$8 \!\cdot\! 10^4$ 5 10 1 2 10 10 10 1 0 0$\!$&$\!$52\%$\!$&$\!$52\%$\!$&$\!$54\%$\!$&$\!$57\%$\!$&$\!$59\%$\!$\\
$\!$79$\!$&$\!$$7 \!\cdot\! 10^4$ 5 10 1 2 10 10 10 1 0 0$\!$&$\!$51\%$\!$&$\!$51\%$\!$&$\!$54\%$\!$&$\!$57\%$\!$&$\!$59\%$\!$\\
$\!$80$\!$&$\!$$6 \!\cdot\! 10^4$ 5 10 1 2 10 10 10 1 0 0$\!$&$\!$51\%$\!$&$\!$51\%$\!$&$\!$54\%$\!$&$\!$56\%$\!$&$\!$58\%$\!$\\
\hline
\end{tabular}
\caption{Results of experiments for different values of~$\gamma$ -- Part~II.\label{table: results2b}}
\end{center}
\end{table}

{\markRed
We have also conducted experiments to study the effect of domain reduction using different values of the parameter~$\gamma$. Before reporting the results, note that, in general, the smaller the threshold~$\gamma$ is, the stronger the domain reduction. However, this does not mean that a dramatic decrease in~$\gamma$ necessarily leads to a stronger reduction. For example, reducing~$\gamma$ from~1 to~0.1 does not result in a stronger reduction for the fuzzy interpretations considered in Examples~\ref{example: JHRHS} and~\ref{example: HGRJS}. Recall also that, for the fuzzy interpretation considered in Example~\ref{example: YNSJA}, reducing~$\gamma$ from~1 to~0.8 enables a reduction (from six to three individuals). However, further reducing~$\gamma$ to~0.1 does not yield a stronger reduction.

The module {\em experiments.py}~\cite{min2025-prog} also contains the function
\[ \mathit{runExperiment2(k, n', m', o, p, l, sCN, sRN, acyclic, withI, withO)}, \]
which has the same parameters as the function {\em runExperiment}.
This function first generates a random fuzzy interpretation $\mI$ using the given parameters, in the same way as {\em runExperiment}. It then computes, for each $\gamma \in \{1, 0.8, 0.4, 0.2, 0.1\}$, a minimal reduction $\mIp$ of $\mI$ that preserves fuzzy concept assertions of $\mLP$ up to degree~$\gamma$.
The function returns the tuple $(n_1$, $m_1$, $n_{0.8}$, $m_{0.8}$, $n_{0.4}$, $m_{0.4}$, $n_{0.2}$, $m_{0.2}$, $n_{0.1}$, $m_{0.1})$, where, for each $\gamma \in \{1, 0.8, 0.4, 0.2, 0.1\}$, $n_\gamma$ denotes the domain size of $\mIp$ and $m_\gamma$ denotes the number of nonzero instances of atomic roles in $\mIp$.

In Tables~\ref{table: results2a} and~\ref{table: results2b}, we present the results of experiments studying the effect of domain reduction for different values of~$\gamma$, using the {\em runExperiment2} function. The experiments use the same combinations of parameter values as those used for the performance tests reported in Tables~\ref{table: wyniki1} and~\ref{table: wyniki2}, except that the parameter~$l$ (the number of distinct nonzero fuzzy values used for concept and role instances in~$\mI$) is now set to~10. The first column lists experiment identifiers, while the last five columns report the domain reduction percentage achieved by Algorithm~\ref{alg2}, calculated as \mbox{$1 - n_\gamma/n$}, where $\gamma \in \{1, 0.8, 0.4, 0.2, 0.1\}$ and $n = k \cdot n'$ (the domain size of~$\mI$). The results shown in the last five columns are rounded averages computed over three repetitions of each experiment.
}

{\markRed
\section{Related work}
\label{section: related work}

As mentioned in the introduction, the problem of minimizing or reducing a given finite fuzzy automaton (\FfA) while preserving its accepted fuzzy language has been extensively studied. Most existing work focuses either on minimizing deterministic \FfAs \cite{LI20071423,Halamish2015,YANG202172,DEMENDIVILGRAU2024109108,ShamsizadehZG24} or on reducing nondeterministic \FfAs (without necessarily achieving minimality) \cite{BASAK2002223,PEEVA20084152,CSIP.10,SCI.14,SCB.18}, with an exception of \cite{LITFS2015}, which addresses the minimization of nondeterministic \FfAs over the G\"odel structure. It is worth noting that minimizing a nondeterministic finite (crisp) automaton is PSPACE-complete~\cite{Jiang1993}; consequently, the minimization of nondeterministic \FfAs is PSPACE-hard.
Several works \cite{Belohlvek2009OnAM,YangL.19,EPTCS386.6,StanimirovicMC.22,FSS-D-25-00029} investigate the problem of constructing a minimal or reduced \FfA that approximates a given one. While the works \cite{BASAK2002223,LI20071423,Belohlvek2009OnAM,Halamish2015,YangL.19,YANG202172,StanimirovicMC.22,ShamsizadehZG24} focus exclusively on \FfAs over the G\"odel structure or Heyting algebras, the studies in \cite{PEEVA20084152,CSIP.10,SCI.14,SCB.18,EPTCS386.6,DEMENDIVILGRAU2024109108,FSS-D-25-00029} consider \FfAs defined over arbitrary residuated lattices.
Quotient constructions, which rely on merging equivalent states, are among the most commonly used techniques for minimizing or reducing \FfAs.

It is well-known that the largest auto-bisimulation can be used to minimize a crisp Kripke model. For example, in the book~\cite{BRV2001}, Blackburn et al.\ wrote ``the quotient of a [Kripke] model with respect to its largest auto-bisimulation can be regarded as a minimal representation of this model'', without specifying what the minimal model preserves. 
Minimization of crisp interpretations in description logics is a related topic. It was studied by Divroodi and Nguyen in~\cite{BSDL-INS} for various description logics that extend \ALCreg with features among inverse roles, nominals, the universal role, qualified number restrictions, and the $\Self$ concept constructor, under the criteria of preserving bisimilarity, terminological axioms, or concept assertions. 
As usual, the minimization method of~\cite{BSDL-INS} is also based on bisimulation and the quotient construction. 

Extending the study on minimization for fuzzy Kripke models and fuzzy interpretations in FDLs, a natural approach is to start with fuzzy bisimulations. This term was first coined by Fan in her work~\cite{Fan15} on fuzzy bisimulation for modal logics under the G\"odel semantics, where she provided logical characterizations of both fuzzy and crisp bisimulations for such logics. Inspired by that work, in~\cite{FSS2020} together with colleagues we provided logical characterizations of fuzzy/crisp bisimulations in FDLs under the G\"odel semantics, and in~\cite{FBSML} we provided logical characterizations of fuzzy bisimulations in modal logics over residuated lattices. 
Related work includes the work~\cite{DBLP:journals/tfs/NguyenN23} on crisp bisimulations in FDLs under a general semantics, the works~\cite{EleftheriouKN12,aml/MartiM18,fuin/Diaconescu20} on crisp bisimulations in fuzzy/many-valued modal logics, and the works~\cite{DBLP:journals/fss/00010SS25,DBLP:journals/ijar/StankovicSC25} on approximate bisimulations for fuzzy modal logics over Heyting algebras. 
It is also worth mentioning the works~\cite{fss/WuD16,DBLP:journals/fss/WuCBD18,DBLP:journals/ijar/WuCHC18,DBLP:journals/tfs/QiaoZF23,fss/QiaoZP23} on crisp/fuzzy bisimulations of fuzzy transition systems and the works~\cite{ai/FanL14,IgnjatovicCS15} on fuzzy/crisp bisimulations of weighted/fuzzy social networks.

As stated in the introduction, the study of minimization for fuzzy Kripke models and fuzzy interpretations in FDLs has received relatively little attention. Apart from the works \cite{minimization-by-fBS,DBLP:journals/fss/Nguyen24}, which were discussed in Section~\ref{sec: motivation}, we are only aware of the additional works \cite{DBLP:journals/fss/00010SS25,10.1007/978-981-99-7743-7_7} on this topic. In \cite{DBLP:journals/fss/00010SS25}, Stanković et al.\ showed that approximate bisimulations can be used to minimize fuzzy Kripke models in fuzzy modal logics over Heyting algebras. In that work, the quotient construction is called a factor fuzzy Kripke model. It preserves, up to a given approximation threshold, the set of states that are equivalent to those in the original model. We do not have access to \cite{10.1007/978-981-99-7743-7_7}; however, according to its abstract, the paper studies the property of finite paths in nondeterministic fuzzy Kripke structures (NFKSs) and the behavior of quotient NFKSs, treating NFKSs as fuzzy labeled transition systems.

In a more relaxed context, one may also consider blockmodeling of valued (social) networks~\cite{Doreian2009,DBLP:journals/socnet/Ziberna07} or fuzzy graph clustering~\cite{DONG20061760,PENG202138,10339895}. However, none of these works employ logic-based criteria for reduction.
}

\section{Discussion and conclusions}
\label{section: conc}

The motivating example demonstrates that minimizing interpretations in FDLs without the Baaz projection operator and the universal role generally yields smaller interpretations than in FDLs with these constructors. While the Baaz projection operator and the universal role increase the expressive power to distinguish between individuals, they can, in many cases, be too strong -- analogous to overfitting in machine learning. In particular, in the presence of the Baaz projection operator, the similarity between two individuals $x$ and $x'$, defined as the infimum of \mbox{$C^\mI(x) \fequiv C^\mI(x')$} over all concepts $C$ of a considered FDL in a given fuzzy interpretation~$\mI$, is always crisp (either~0 or~1)~\cite{DBLP:journals/tfs/NguyenN23}. As for the universal role, it is often undesirable when locality is important. Thus, focusing on FDLs without the Baaz projection operator and the universal role is both natural and well justified.

We have presented the {\em first} algorithm (Algorithm~\ref{alg2}) that minimizes a finite fuzzy interpretation while preserving fuzzy concept assertions in FDLs without the Baaz projection operator and the universal role, under the G\"odel semantics. The algorithm is provided in an extended form that supports approximate preservation: it minimizes a finite fuzzy interpretation $\mI$ while preserving fuzzy concept assertions up to a degree~$\gamma \in (0,1]$. 
The algorithm is efficient, with time complexity \mbox{$O((m \log l + n)\log n)$}, where $n$ is the size of the domain of~$\mI$, $m$ is the number of nonzero instances of atomic roles in~$\mI$, and $l$ is the number of distinct fuzzy values occurring in such instances, plus~2. \red{This complexity is low, as it is comparable to that of related problems involving bisimulations. (Recall that the Paige--Tarjan algorithm~\cite{PaigeT87} computes the coarsest partition of a finite crisp graph in time $O((m+n)\log n)$.) When the number of fuzzy values occurring in the input fuzzy interpretation is bounded by a constant, the complexity $O((m\log l + n)\log n)$ coincides with $O((m+n)\log n)$.} When $m \geq n$ and $l$ is comparable to~$m$, the complexity bound can be simplified to \mbox{$O(m \log^{2} n)$}. We have implemented the algorithm and made the source code publicly available~\cite{min2025-prog}.

\red{At present, a comparison of Algorithm~\ref{alg2} with other algorithms 
on ``standard benchmarks'' is not applicable. First, no such benchmarks exist. Second, Algorithm~\ref{alg2} is, to the best of our knowledge, the first and currently the only algorithm addressing this specific minimization problem, and it achieves a low complexity. Since the task is minimization, the algorithm produces an optimal reduction of a given finite fuzzy interpretation with respect to the considered minimization criterion.}

It is well known that DLs are variants of modal logics. For instance, the DL \ALCreg corresponds to Propositional Dynamic Logic (PDL). Kripke models in modal logics correspond to interpretations in DLs: states (possible worlds) and actions (accessibility relations) in a Kripke model correspond to individuals and roles in a DL interpretation, respectively. In a pointed Kripke model, the distinguished actual world corresponds to a unique named individual in a DL interpretation.
By treating pointed fuzzy Kripke models as fuzzy interpretations in FDLs, Algorithm~\ref{alg2} can be used to minimize a finite pointed fuzzy Kripke model while preserving the truth degrees of formulas of a fuzzy modal logic $\mL$ at the actual world under the G\"odel semantics, up to a degree $\gamma \in (0,1]$, where $\mL$ can range from the sublogic without disjunction and universal modalities of the basic fuzzy multimodal logic $\mathit{fK}_n$ to the fuzzy version of PDL with converse.

\red{This work focuses on minimizing interpretations only under the G\"odel semantics. First, this semantics for FDLs differs significantly from other t-norm-based semantics. It is shown in \cite[Theorem 4.1 and Remark 4.5]{FBSML} that, in the presence of the \(\to\) (implication) operator, concepts are invariant under fuzzy bisimulations with respect to the G\"odel semantics, but not with respect to the product or \L{}ukasiewicz semantics. Furthermore, the greatest fuzzy auto-bisimulation of a finite fuzzy interpretation under the G\"odel semantics can be computed in time \(O(m \log n \log l + n^2)\)~\cite{DBLP:journals/isci/Nguyen23}, whereas no polynomial-time algorithms are currently known for computing such a fuzzy relation under the product or \L{}ukasiewicz semantics \cite{DBLP:journals/tfs/NguyenMS23}. Second, minimizing interpretations under these latter semantics is of a different nature, beyond the scope of this article, and is left for future work.}


\biboptions{sort&compress}
\bibliography{BSfDL}
\bibliographystyle{plain}

\appendix

\renewcommand{\thetheorem}{A.\arabic{theorem}}
\renewcommand{\thesection}{A}
\section{Additional examples}
\label{appendix A}

In Section~\ref{section: main}, we provided examples illustrating the execution of Algorithm~\ref{alg2} in the case $\Phi = \emptyset$. In this appendix, we present additional examples to illustrate how the algorithm behaves when $\emptyset \subset \Phi \subseteq \{I, O\}$ -- that is, in cases with inverse roles and/or nominals. These examples are described in detail to support a better understanding of the algorithm's behavior, but the reader may choose to skip them if they are not needed.

\begin{example}\label{example: JHRHS 2}
Reconsider the fuzzy interpretation $\mI$ from the motivating example in Section~\ref{sec: motivation}, now for the case $\Phi = \{O\}$ (i.e., with nominals). The compact fuzzy partition corresponding to the greatest fuzzy $\Phi$-auto-bisimulation of~$\mI$ is $\bB = \{\{u\}_1$, $\{u'\}_1$, $\{\{v_1,v'_1\}_1$, $\{\{v_2,v'_2\}_1$, $\{v_3\}_1\}_{0.8}\}_{0.7}\}_0$. Denote $B_v = \{\{v_1,v'_1\}_1$, $\{\{v_2,v'_2\}_1$, $\{v_3\}_1\}_{0.8}\}_{0.7}$. Consider the execution of Algorithm~\ref{alg2} for $\mI$ using $\Phi$ and $\gamma = 1$. The resulting fuzzy interpretation, denoted as $\mI_4$ in Figure~\ref{fig: JHDKJ 2}, is illustrated there. Details are provided below.

After executing statements \ref{step: alg2 1}-\ref{step: alg2 12}, we have:
\begin{itemize}
\item $\Delta^\mIp = \{u, u'\}$, $a^\mIp = u$, $b^\mIp = u'$;
\item for $B \in \bB.\blocks()$, 
\[ 
B.\repr = 
\begin{cases}
u & \textrm{if } B = \{u\}_1 \textrm{ or } B = \bB,\\ 
u' & \textrm{if } B = \{u'\}_1,\\
\Null & \textrm{otherwise};
\end{cases}
\]
\item $Q$ contains $\tuple{u,r,v_1}$, $\tuple{u,r,v_2}$, $\tuple{u,r,v_3}$, $\tuple{u',r,v'_1}$ and $\tuple{u',r,v'_2}$;
\item $D = \{1, 0.7, 0.6, 0.5, 0.4\}$. 
\end{itemize}

The first iteration of the ``foreach'' loop at statement~\ref{step: alg2 13} is executed with $d = 1$, in which the inner ``while'' loop terminates immediately.

Consider the second iteration of the ``foreach'' loop at statement~\ref{step: alg2 13}, with $d = 0.7$:
\begin{itemize}
\item The first iteration of the inner ``while'' loop is executed with $\tuple{x,R,y} = \tuple{u,r,v_3}$, in which:
    \begin{itemize}
    \item since $\bB.\findBlock(v_3, 0.7) = B_v$ and $B_v.\repr = \Null$, $v_3$ is added to $\Delta^\mIp$ and $B_v.\repr$ is set to $v_3$;
    \item $r^\mIp(u,v_3)$ is set to $0.7$. 
    \end{itemize} 
\item The second iteration of the inner ``while'' loop is executed with $\tuple{x,R,y} = \tuple{u',r,v'_2}$. In this iteration, since $\bB.\findBlock(v'_2, 0.7) = B_v$ and $B_v.\repr = v_3$, $r^\mIp(u',v_3)$ is set to $0.7$. 
\end{itemize}

The third iteration of the ``foreach'' loop at statement~\ref{step: alg2 13} is executed with $d = 0.6$. During this iteration, the inner ``while'' loop executes only one iteration, with $\tuple{x,R,y} = \tuple{u',r,v'_1}$, in which we have $\bB.\findBlock(v'_1, 0.6) = B_v$, with $B_v.\repr = v_3$, and $\mI'$ remains unchanged.

The fourth (resp.\ fifth) iteration of the ``foreach'' loop at statement~\ref{step: alg2 13} is executed with $d = 0.5$ (resp.\ $d = 0.4$). During this iteration, the inner ``while'' loop executes only one iteration, with $\tuple{x,R,y}$ equal to $\tuple{u,r,v_1}$ (resp. $\tuple{u,r,v_2}$), in which $\mIp$ remains unchanged. The reasons are similar to those given in the above paragraph.

No more iterations of the ``foreach'' loop at statement~\ref{step: alg2 13} are executed. The algorithm terminates and returns $\mIp$ with $|\Delta^\mIp| = 3$.
\myend
\end{example}

\begin{figure}[t]
\begin{center}
\begin{tabular}{|c|c|}
\hline
\begin{tikzpicture}
\node (po) {$\mI_4:\qquad\quad$};
\node (u) [node distance=0.7cm, below of=po] {$u: a$};
\node (up) [node distance=2.4cm, right of=u] {$u':b\quad$};
\node (ub) [node distance=2.0cm, below of=u] {};
\node (v3) [node distance=1.2cm, right of=ub] {$v_3:A_{\,0.9}$};
\draw[->] (u) to node [left]{\footnotesize{0.7}} (v3);
\draw[->] (up) to node [right]{\footnotesize{0.7}} (v3);
\end{tikzpicture}
&
\begin{tikzpicture}
\node (po) {$\mI_5:\qquad\quad$};
\node (u) [node distance=0.7cm, below of=po] {$u: a$};
\node (up) [node distance=2.4cm, right of=u] {$u':b$};
\node (v3) [node distance=2.0cm, below of=u] {$v_3:A_{\,0.9}$};
\node (vp2) [node distance=2.0cm, below of=up] {$v'_2:A_{\,0.8}$};
\draw[->] (u) to node [left]{\footnotesize{0.7}} (v3);
\draw[->] (up) to node [right]{\footnotesize{0.7}} (vp2);
\end{tikzpicture}
\\
\hline
\end{tabular}
\caption{Fuzzy interpretations mentioned in Examples~\ref{example: JHRHS 2} and~\ref{example: JHRHS 3}.\label{fig: JHDKJ 2}}
\end{center}
\end{figure}

\begin{example}\label{example: JHRHS 3}
Let us continue Examples~\ref{example: JHRHS} and~\ref{example: JHRHS 2}. We now reconsider the fuzzy interpretation $\mI$ from the motivating example in Section~\ref{sec: motivation}, this time for the case $\Phi = \{I\}$ (i.e., with inverse roles). The compact fuzzy partition corresponding to the greatest fuzzy $\Phi$-auto-bisimulation of~$\mI$ is $\bB = \{\{\{u\}_1, \{u'\}_1\}_{0.4}$, $\{\{\{v_1\}_1, \{v_3\}_1\}_{0.5}$, $\{v_2\}_1$, $\{\{v'_1\}_1, \{v'_2\}_1\}_{0.6}\}_{0.4}\}_0$. Consider the execution of Algorithm~\ref{alg2} for $\mI$ using $\Phi$ and $\gamma = 1$. The resulting fuzzy interpretation, denoted as $\mI_5$ in Figure~\ref{fig: JHDKJ 2}, is illustrated there. Details are provided below.

After executing statements \ref{step: alg2 1}-\ref{step: alg2 12}, we have:
\begin{itemize}
\item $\Delta^\mIp = \{u, u'\}$, $a^\mIp = u$, $b^\mIp = u'$;
\item for $B \in \bB.\blocks()$, 
\[ 
B.\repr = 
\begin{cases}
u & \textrm{if } B = \{u\}_1 \textrm{ or } B = \{\{u\}_1, \{u'\}_1\}_{0.4} \textrm{ or } B = \bB,\\ 
u' & \textrm{if } B = \{u'\}_1,\\
\Null & \textrm{otherwise};
\end{cases}
\]
\item $Q$ contains $\tuple{u,r,v_1}$, $\tuple{u,r,v_2}$, $\tuple{u,r,v_3}$, $\tuple{u',r,v'_1}$ and $\tuple{u',r,v'_2}$;
\item $D = \{1, 0.7, 0.6, 0.5, 0.4\}$. 
\end{itemize}

The first iteration of the ``foreach'' loop at statement~\ref{step: alg2 13} is executed with $d = 1$, in which the inner ``while'' loop terminates immediately.

Consider the second iteration of the ``foreach'' loop at statement~\ref{step: alg2 13}, with $d = 0.7$:
\begin{itemize}
\item The first iteration of the inner ``while'' loop is executed with $\tuple{x,R,y} = \tuple{u,r,v_3}$, in which:
    \begin{itemize}
    \item for $B = \bB.\findBlock(v_3, 0.7) = \{v_3\}_1$, since $B.\repr = \Null$, $v_3$ is added to $\Delta^\mIp$, and $B.\repr$, $B'.\repr$ and $B''.\repr$ are set to $v_3$, where $B' = B.\parent = \{\{v_1\}_1, \{v_3\}_1\}_{0.5}$ and $B'' = B'.\parent = \{\{\{v_1\}_1, \{v_3\}_1\}_{0.5}$, $\{v_2\}_1$, $\{\{v'_1\}_1, \{v'_2\}_1\}_{0.6}\}_{0.4}$; 
    \item the tuple $\tuple{v_3, \cnv{r}, u}$ is inserted into $Q$ and $r^\mIp(u,v_3)$ is set to $0.7$. 
    \end{itemize} 
\item The second iteration of the inner ``while'' loop is executed with $\tuple{x,R,y} = \tuple{u',r,v'_2}$, in which:
    \begin{itemize}
    \item for $B = \bB.\findBlock(v'_2, 0.7) = \{v'_2\}_1$, since $B.\repr = \Null$, $v'_2$ is added to $\Delta^\mIp$, and $B.\repr$ and $B'.\repr$ are set to $v'_2$, where $B' = B.\parent = \{\{v'_1\}_1, \{v'_2\}_1\}_{0.6}$;
    \item the tuple $\tuple{v'_2, \cnv{r}, u'}$ is inserted into $Q$ and $r^\mIp(u',v'_2)$ is set to $0.7$. 
    \end{itemize} 
\item The third iteration of the inner ``while'' loop is executed with $\tuple{x,R,y} = \tuple{v_3,\cnv{r},u}$, in which we have $\bB.\findBlock(u, 0.7) = \{u\}_1$ and $\mI'$ remains unchanged.
\item The fourth iteration of the inner ``while'' loop is executed with $\tuple{x,R,y} = \tuple{v'_2,\cnv{r},u'}$, in which we have $\bB.\findBlock(u', 0.7) = \{u'\}_1$ and $\mI'$ remains unchanged.
\end{itemize}

The third iteration of the ``foreach'' loop at statement~\ref{step: alg2 13} is executed with $d = 0.6$. During this iteration, the inner ``while'' loop executes only one iteration, with $\tuple{x,R,y} = \tuple{u',r,v'_1}$, in which, for $B = \bB.\findBlock(v'_1, 0.6) = \{\{v'_1\}_1, \{v'_2\}_1\}_{0.6}$, we have $B.\repr = v'_2$ and $\mI'$ remains unchanged.

The fourth (resp.\ fifth) iteration of the ``foreach'' loop at statement~\ref{step: alg2 13} is executed with $d = 0.5$ (resp.\ $d = 0.4$). During this iteration, the inner ``while'' loop executes only one iteration, with $\tuple{x,R,y}$ equal to $\tuple{u,r,v_1}$ (resp. $\tuple{u,r,v_2}$), in which $\mIp$ remains unchanged. The reasons are similar to those given in the above paragraph.

No more iterations of the ``foreach'' loop at statement~\ref{step: alg2 13} are executed. 
The algorithm terminates and returns $\mIp$ with $|\Delta^\mIp| = 4$. 
\myend
\end{example}

\begin{example}\label{example: JHRHS 4}
Executing Algorithm~\ref{alg2} for the fuzzy interpretation $\mI$ from the motivating example in Section~\ref{sec: motivation} using $\Phi = \{I,O\}$ (i.e., with inverse roles and nominals) and $\gamma = 1$ results in the same fuzzy interpretation $\mI_5$ discussed in Example~\ref{example: JHRHS 3} and illustrated in Figure~\ref{fig: JHDKJ 2}. The steps are very similar to the ones discussed in Example~\ref{example: JHRHS 3}, with the following differences: 
\begin{itemize}
\item The compact fuzzy partition corresponding to the greatest fuzzy $\Phi$-auto-bisimulation of~$\mI$ is $\bB = \{\{u\}_1$, $\{u'\}_1$, $\{\{\{v_1\}_1, \{v_3\}_1\}_{0.5}$, $\{v_2\}_1\}_{0.4}$, $\{\{v'_1\}_1, \{v'_2\}_1\}_{0.6}\}_0$. 
\item After executing statements \ref{step: alg2 1}-\ref{step: alg2 12}, we have $B.\repr = u$ only if $B = \{u\}_1$ or $B = \bB$.
\item After the second iteration of the ``foreach'' loop at statement~\ref{step: alg2 13} (with $d = 0.7$), we have $B.\repr = v_3$ for each $B$ among $\{v_3\}_1$, $\{\{v_1\}_1, \{v_3\}_1\}_{0.5}$, and $\{\{\{v_1\}_1, \{v_3\}_1\}_{0.5}, \{v_2\}_1\}_{0.4}$.
\myend
\end{itemize}
\end{example}


\begin{example}\label{example: HGRJS 2}
Reconsider the fuzzy interpretation $\mI$ specified in Example~\ref{example: HGRJS} and illustrated in Figure~\ref{fig: HGRJS}.
Consider the execution of Algorithm~\ref{alg2} for $\mI$ using $\Phi = \{O\}$ (i.e., with nominals) and $\gamma = 1$. The resulting fuzzy interpretation, denoted as $\mI_2$ in Figure~\ref{fig: HGRJS 2}, is illustrated there. Details are provided below.

The compact fuzzy partition $\bB$ corresponding to the greatest fuzzy $\Phi$-auto-bisimulation of~$\mI$ is 
\[ \bB = 
\{
  \{u_1\}_1, \{u_2\}_1, \{v_3\}_1, \{w_1\}_1, \{w_2\}_1, 
  \{\{v_1\}_1, \{v_2\}_1\}_{0.5}
\}_0.
\]
After executing statements \ref{step: alg2 1}-\ref{step: alg2 12}, we have:
\begin{itemize}
\item $\Delta^\mIp = \{u_1, u_2\}$, $a^\mIp = u_1$, $b^\mIp = u_2$;
\item for $B \in \bB.\blocks()$, 
\[ B.\repr = 
\begin{cases}
u_1 & \textrm{if } B = \{u_1\}_1 \textrm{ or } B = \bB,\\
u_2 & \textrm{if } B = \{u_2\}_1,\\
\Null & \textrm{otherwise;}
\end{cases}
\]
\item $Q$ contains $\tuple{u_1,r,v_1}$, $\tuple{u_1,r,v_2}$ and $\tuple{u_2,r,v_3}$;
\item $D = \{1, 0.9, 0.8, 0.7, 0.4, 0.3, 0.2\}$. 
\end{itemize}

In the first four iterations of the ``foreach'' loop at statement~\ref{step: alg2 13}, which are executed with $d \in \{1, 0.9, 0.8, 0.7\}$, the inner ``while'' loop terminates immediately.

Consider the fifth iteration of the ``foreach'' loop at statement~\ref{step: alg2 13}, with $d = 0.4$:
\begin{itemize}
\item The first iteration of the inner ``while'' loop is executed with $\tuple{x,R,y} = \tuple{u_1, r, v_2}$, in which:
    \begin{itemize}
    \item for $B = \bB.\findBlock(v_2, 0.4) = \{\{v_1\}_1, \{v_2\}_1\}_{0.5}$, since $B.\repr = \Null$, $v_2$ is added to $\Delta^\mIp$ and $B.\repr$ is set to $v_2$; 
    \item the triples $\tuple{v_2, r, v_1}$ and $\tuple{v_2, r, w_1}$ are inserted into $Q$ and $r^\mIp(u_1,v_2)$ is set to~$0.4$. 
    \end{itemize} 
\item The second iteration of the inner ``while'' loop is executed with $\tuple{x,R,y} = \tuple{v_2, r, v_1}$. In this iteration, for $B = \bB.\findBlock(v_1, 0.4) = \{\{v_1\}_1, \{v_2\}_1\}_{0.5}$, we have $B.\repr = v_2$ and $r^\mIp(v_2,v_2)$ is set to~$0.4$.
\item The third iteration of the inner ``while'' loop is executed with $\tuple{x,R,y} = \tuple{v_2, r, w_1}$, in which:
    \begin{itemize}
    \item for $B = \bB.\findBlock(w_1, 0.4) = \{w_1\}_1$, since $B.\repr = \Null$, $w_1$ is added to $\Delta^\mIp$ and $B.\repr$ is set to $w_1$;
    \item the triple $\tuple{w_1,s,u_2}$ is inserted into $Q$ and $r^\mIp(v_2,w_1)$ is set to $0.4$.
    \end{itemize}
\item The fourth iteration of the inner ``while'' loop is executed with $\tuple{x,R,y} = \tuple{u_2, r, v_3}$, in which:
    \begin{itemize}
    \item for $B = \bB.\findBlock(v_3, 0.4) = \{v_3\}_1$, since $B.\repr = \Null$, $v_3$ is added to $\Delta^\mIp$ and $B.\repr$ is set to $v_3$; 
    \item the triples $\tuple{v_3, r, w_2}$ and $\tuple{v_3, r, v_3}$ are inserted into $Q$ and $r^\mIp(u_2,v_3)$ is set to~$0.4$. 
    \end{itemize} 
\item The fifth iteration of the inner ``while'' loop is executed with $\tuple{x,R,y} = \tuple{v_3, r, w_2}$, in which:
    \begin{itemize}
    \item for $B = \bB.\findBlock(w_2, 0.4) = \{w_2\}_1$, since $B.\repr = \Null$, $w_2$ is added to $\Delta^\mIp$ and $B.\repr$ is set to $w_2$; 
    \item the triple $\tuple{w_2, s, u_1}$ is inserted into $Q$ and $r^\mIp(v_3,w_2)$ is set to~$0.4$. 
    \end{itemize} 
\item The sixth iteration of the inner ``while'' loop is executed with $\tuple{x,R,y} = \tuple{v_3, r, v_3}$. In this iteration, for $B = \bB.\findBlock(v_3, 0.4) = \{v_3\}_1$, we have $B.\repr = v_3$ and $r^\mIp(v_3,v_3)$ is set to~$0.4$.
\end{itemize}

Consider the sixth iteration of the ``foreach'' loop at statement~\ref{step: alg2 13}, with $d = 0.3$. During this iteration, the inner ``while'' loop executes only one iteration, with $\tuple{x,R,y} = \tuple{u_1,r,v_1}$, in which, for $B = \bB.\findBlock(v_1, 0.3) = \{\{v_1\}_1, \{v_2\}_1\}_{0.5}$, we have $B.\repr = v_2$ and $\mIp$ remains unchanged. 

Consider the seventh iteration of the ``foreach'' loop at statement~\ref{step: alg2 13}, with $d = 0.2$: 
\begin{itemize}
\item The first iteration of the inner ``while'' loop is executed with $\tuple{x,R,y} = \tuple{w_1,s,u_2}$, in which, for $B = \bB.\findBlock(u_2, 0.2) = \{u_2\}_1$, we have $B.\repr = u_2$, and $s^\mIp(w_1,u_2)$ is set to $0.2$.
\item The second iteration of the inner ``while'' loop is executed with $\tuple{x,R,y} = \tuple{w_2,s,u_1}$, in which, for $B = \bB.\findBlock(u_1, 0.2) = \{u_1\}_1$, we have $B.\repr = u_1$, and $s^\mIp(w_2,u_1)$ is set to~$0.2$.
\end{itemize}

After the seventh iteration, the ``foreach'' loop at statement~\ref{step: alg2 13} terminates and the algorithm returns $\mIp$ with $|\Delta^\mIp| = 6$.
\myend
\end{example}    

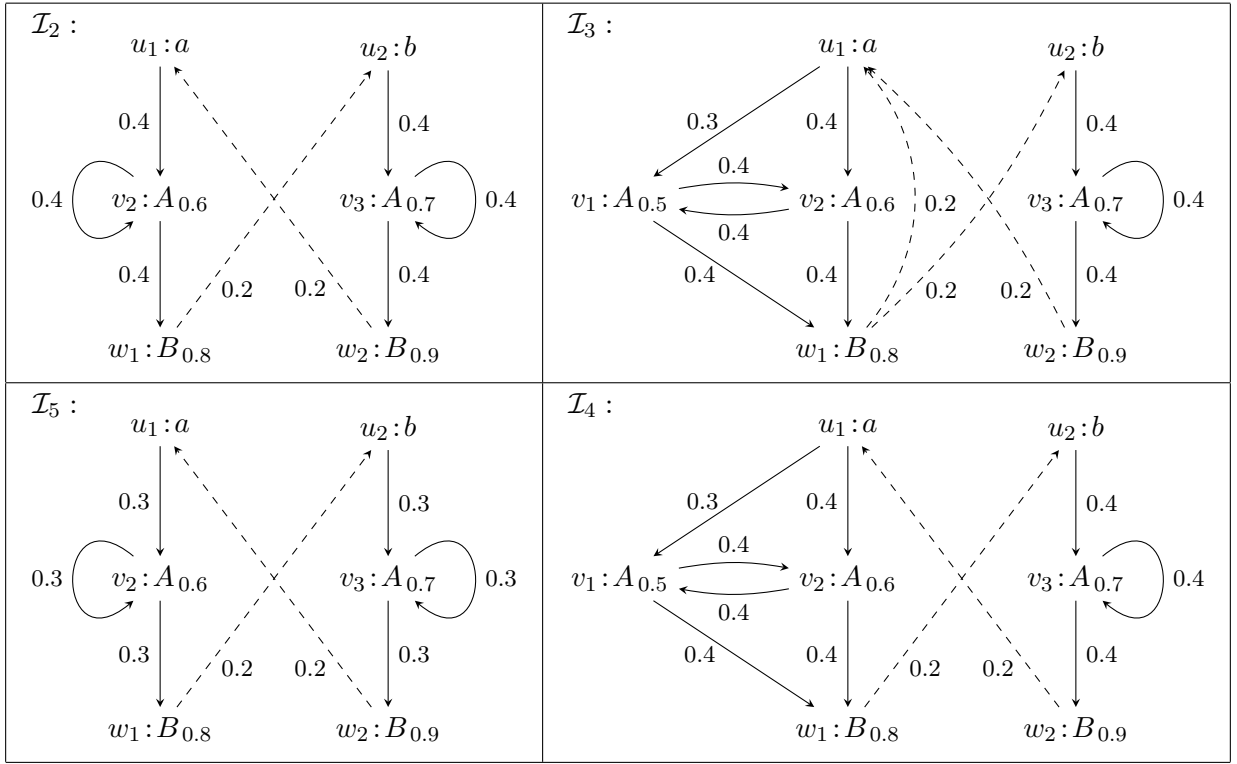
\begin{figure}[t]
\begin{center}
\begin{tabular}{|c|c|}
\hline
\begin{tikzpicture}[->,>=stealth,auto,black]
	\node (I) {$\mI_2:\qquad$};
	\node (bI) [node distance=0.3cm, below of=I] {};
	\node (u1) [node distance=1.0cm, right of=bI] {$u_1\!:\!a$};
	\node (u2) [node distance=3.0cm, right of=u1] {$u_2\!:\!b$};
	\node (v2) [node distance=2.0cm, below of=u1] {$v_2\!:\!A_{\,0.6}$};
	\node (v3) [node distance=2.0cm, below of=u2] {$v_3\!:\!A_{\,0.7}$};
	\node (w1) [node distance=2.0cm, below of=v2] {$w_1\!:\!B_{\,0.8}$};
	\node (w2) [node distance=2.0cm, below of=v3] {$w_2\!:\!B_{\,0.9}$};
	\draw (u1) to node [left]{\footnotesize{0.4}} (v2);	
	\draw (u2) to node [right]{\footnotesize{0.4}} (v3);	
	\draw (v2) to node [left]{\footnotesize{0.4}} (w1);	
	\draw (v3) to node [right]{\footnotesize{0.4}} (w2);
    \draw (v2) edge[out=140,in=-140,looseness=6] node[left]{\footnotesize{0.4}} (v2);
    \draw (v3) edge[out=40,in=-40,looseness=6] node[right]{\footnotesize{0.4}} (v3);
	\draw[dashed] (w1) to node[right,pos=0.15,xshift=2pt]{\footnotesize{0.2}} (u2);	
	\draw[dashed] (w2) to node[left,pos=0.15,xshift=-2pt]{\footnotesize{0.2}} (u1);
\end{tikzpicture}
&
\begin{tikzpicture}[->,>=stealth,auto,black]
	\node (I) {$\mI_3:\qquad$};
	\node (bI) [node distance=0.3cm, below of=I] {};
	\node (u1) [node distance=3.0cm, right of=bI] {$u_1\!:\!a$};
	\node (u2) [node distance=3.0cm, right of=u1] {$u_2\!:\!b$};
	\node (v1) [node distance=2.0cm, below of=bI] {$v_1\!:\!A_{\,0.5}$};
	\node (v2) [node distance=2.0cm, below of=u1] {$v_2\!:\!A_{\,0.6}$};
	\node (v3) [node distance=2.0cm, below of=u2] {$v_3\!:\!A_{\,0.7}$};
	\node (w1) [node distance=2.0cm, below of=v2] {$w_1\!:\!B_{\,0.8}$};
	\node (w2) [node distance=2.0cm, below of=v3] {$w_2\!:\!B_{\,0.9}$};
	\draw (u1) to node [left,xshift=-3pt]{\footnotesize{0.3}} (v1);	
	\draw (u1) to node [left]{\footnotesize{0.4}} (v2);	
	\draw (u2) to node [right]{\footnotesize{0.4}} (v3);	
	\draw (v1) to node [left,xshift=-3pt]{\footnotesize{0.4}} (w1);	
	\draw (v2) to node [left]{\footnotesize{0.4}} (w1);	
	\draw (v3) to node [right]{\footnotesize{0.4}} (w2);
    \draw (v1) edge[bend left=10] node[above]{\footnotesize{0.4}} (v2);
    \draw (v2) edge[bend left=10] node[below]{\footnotesize{0.4}} (v1);
    \draw (v3) edge[out=40,in=-40,looseness=6] node[right]{\footnotesize{0.4}} (v3);
    \draw[dashed] (w1) edge[bend right=40] node[right,pos=0.475]{\footnotesize{0.2}} (u1);
	\draw[dashed] (w1) edge[bend right=10] node[right,pos=0.15,xshift=2pt]{\footnotesize{0.2}} (u2);	
	\draw[dashed] (w2) edge[bend right=10]  node[left,pos=0.115,xshift=-1pt]{\footnotesize{0.2}} (u1);
\end{tikzpicture}
\\
\hline
\begin{tikzpicture}[->,>=stealth,auto,black]
	\node (I) {$\mI_5:\qquad$};
	\node (bI) [node distance=0.3cm, below of=I] {};
	\node (u1) [node distance=1.0cm, right of=bI] {$u_1\!:\!a$};
	\node (u2) [node distance=3.0cm, right of=u1] {$u_2\!:\!b$};
	\node (v2) [node distance=2.0cm, below of=u1] {$v_2\!:\!A_{\,0.6}$};
	\node (v3) [node distance=2.0cm, below of=u2] {$v_3\!:\!A_{\,0.7}$};
	\node (w1) [node distance=2.0cm, below of=v2] {$w_1\!:\!B_{\,0.8}$};
	\node (w2) [node distance=2.0cm, below of=v3] {$w_2\!:\!B_{\,0.9}$};
	\draw (u1) to node [left]{\footnotesize{0.3}} (v2);	
	\draw (u2) to node [right]{\footnotesize{0.3}} (v3);	
	\draw (v2) to node [left]{\footnotesize{0.3}} (w1);	
	\draw (v3) to node [right]{\footnotesize{0.3}} (w2);
    \draw (v2) edge[out=140,in=-140,looseness=6] node[left]{\footnotesize{0.3}} (v2);
    \draw (v3) edge[out=40,in=-40,looseness=6] node[right]{\footnotesize{0.3}} (v3);
	\draw[dashed] (w1) to node[right,pos=0.15,xshift=2pt]{\footnotesize{0.2}} (u2);	
	\draw[dashed] (w2) to node[left,pos=0.15,xshift=-2pt]{\footnotesize{0.2}} (u1);
\end{tikzpicture}
&
\begin{tikzpicture}[->,>=stealth,auto,black]
	\node (I) {$\mI_4:\qquad$};
	\node (bI) [node distance=0.3cm, below of=I] {};
	\node (u1) [node distance=3.0cm, right of=bI] {$u_1\!:\!a$};
	\node (u2) [node distance=3.0cm, right of=u1] {$u_2\!:\!b$};
	\node (v1) [node distance=2.0cm, below of=bI] {$v_1\!:\!A_{\,0.5}$};
	\node (v2) [node distance=2.0cm, below of=u1] {$v_2\!:\!A_{\,0.6}$};
	\node (v3) [node distance=2.0cm, below of=u2] {$v_3\!:\!A_{\,0.7}$};
	\node (w1) [node distance=2.0cm, below of=v2] {$w_1\!:\!B_{\,0.8}$};
	\node (w2) [node distance=2.0cm, below of=v3] {$w_2\!:\!B_{\,0.9}$};
	\draw (u1) to node [left,xshift=-3pt]{\footnotesize{0.3}} (v1);	
	\draw (u1) to node [left]{\footnotesize{0.4}} (v2);	
	\draw (u2) to node [right]{\footnotesize{0.4}} (v3);	
	\draw (v1) to node [left,xshift=-3pt]{\footnotesize{0.4}} (w1);	
	\draw (v2) to node [left]{\footnotesize{0.4}} (w1);	
	\draw (v3) to node [right]{\footnotesize{0.4}} (w2);
    \draw (v1) edge[bend left=10] node[above]{\footnotesize{0.4}} (v2);
    \draw (v2) edge[bend left=10] node[below]{\footnotesize{0.4}} (v1);
    \draw (v3) edge[out=40,in=-40,looseness=6] node[right]{\footnotesize{0.4}} (v3);
	\draw[dashed] (w1) to node[right,pos=0.15,xshift=2pt]{\footnotesize{0.2}} (u2);	
	\draw[dashed] (w2) to node[left,pos=0.15,xshift=-2pt]{\footnotesize{0.2}} (u1);
\end{tikzpicture}
\\
\hline
\end{tabular}
\caption{Fuzzy interpretation mentioned in Examples~\ref{example: HGRJS 2}--\ref{example: HGRJS 4}.\label{fig: HGRJS 2}}
\end{center}
\end{figure}

\begin{example}\label{example: HGRJS 3}
Let us continue Examples~\ref{example: HGRJS} and~\ref{example: HGRJS 2}. We now reconsider the fuzzy interpretation $\mI$ specified in Example~\ref{example: HGRJS} and illustrated in Figure~\ref{fig: HGRJS}, this time for the case $\Phi = \{I\}$ (i.e., with inverse roles). Executing Algorithm~\ref{alg2} for $\mI$ using $\Phi$ and $\gamma = 1$ results in the fuzzy interpretation denoted as $\mI_3$ and illustrated in Figure~\ref{fig: HGRJS 2}. Details are provided below.

The compact fuzzy partition $\bB$ corresponding to the greatest fuzzy $\Phi$-auto-bisimulation of~$\mI$ is 
\[ \bB = 
\{
  \{\{u_1\}_1, \{u_2\}_1\}_{0.3}, 
  \{\{v_1\}_1, \{v_2\}_1, \{v_3\}_1\}_{0.3}, 
  \{\{w_1\}_1, \{w_2\}_1\}_{0.3}
\}_0.
\]
Denote $B_u = \{\{u_1\}_1, \{u_2\}_1\}_{0.3}$, $B_v = \{\{v_1\}_1, \{v_2\}_1, \{v_3\}_1\}_{0.3}$, and $B_w = \{\{w_1\}_1, \{w_2\}_1\}_{0.3}$. Thus, $\bB = \{B_u, B_v, B_w\}_0$. 
After executing statements \ref{step: alg2 1}-\ref{step: alg2 12}, we have:
\begin{itemize}
\item $\Delta^\mIp = \{u_1, u_2\}$, $a^\mIp = u_1$, $b^\mIp = u_2$;
\item for $B \in \bB.\blocks()$, 
\[ B.\repr = 
\begin{cases}
u_1 & \textrm{if } B = \{u_1\}_1 \textrm{ or } B = B_u \textrm{ or } B = \bB,\\
u_2 & \textrm{if } B = \{u_2\}_1,\\
\Null & \textrm{otherwise;}
\end{cases}
\]
\item $Q$ contains $\tuple{u_1,r,v_1}$, $\tuple{u_1,r,v_2}$, $\tuple{u_1,\cnv{s},w_2}$, $\tuple{u_2,r,v_3}$ and $\tuple{u_2,\cnv{s},w_1}$;
\item $D = \{1, 0.9, 0.8, 0.7, 0.4, 0.3, 0.2\}$. 
\end{itemize}

In the first four iterations of the ``foreach'' loop at statement~\ref{step: alg2 13}, which are executed with $d \in \{1, 0.9, 0.8, 0.7\}$, the inner ``while'' loop terminates immediately.

Consider the fifth iteration of the ``foreach'' loop at statement~\ref{step: alg2 13}, with $d = 0.4$:
\begin{itemize}
\item The first iteration of the inner ``while'' loop is executed with $\tuple{x,R,y} = \tuple{u_1, r, v_2}$, in which:
    \begin{itemize}
    \item for $B = \bB.\findBlock(v_2, 0.4) = \{v_2\}_1$, since $B.\repr = \Null$, $v_2$ is added to $\Delta^\mIp$, and $B.\repr$ and $B_v.\repr$ are set to $v_2$; 
    \item the triples $\tuple{v_2, r, w_1}$, $\tuple{v_2, r, v_1}$, $\tuple{v_2, \cnv{r}, v_1}$ and $\tuple{v_2, \cnv{r}, u_1}$ are inserted into $Q$ and $r^\mIp(u_1,v_2)$ is set to~$0.4$. 
    \end{itemize} 

\item The second iteration of the inner ``while'' loop is executed with $\tuple{x,R,y} = \tuple{v_2, \cnv{r}, v_1}$, in which:
    \begin{itemize}
    \item for $B = \bB.\findBlock(v_1, 0.4) = \{v_1\}_1$, since $B.\repr = \Null$, $v_1$ is added to $\Delta^\mIp$ and $B.\repr$ is set to $v_1$; 
    \item the triples $\tuple{v_1, r, w_1}$, $\tuple{v_1, r, v_2}$, $\tuple{v_1, \cnv{r}, v_2}$ and $\tuple{v_1, \cnv{r}, u_1}$ are inserted into $Q$ and $r^\mIp(v_1,v_2)$ is set to~$0.4$. 
    \end{itemize} 

\item The third iteration of the inner ``while'' loop is executed with $\tuple{x,R,y} = \tuple{v_1, r, v_2}$. In this iteration, for $B = \bB.\findBlock(v_2, 0.4) = \{v_2\}_1$, we have $B.\repr = v_2$ and $\mIp$ remains unchanged.

\item The fourth iteration of the inner ``while'' loop is executed with $\tuple{x,R,y} = \tuple{v_1, r, w_1}$, in which:
    \begin{itemize}
    \item for $B = \bB.\findBlock(w_1, 0.4) = \{w_1\}_1$, since $B.\repr = \Null$, $w_1$ is added to $\Delta^\mIp$, and $B.\repr$ and $B_w.\repr$ are set to $w_1$; 
    \item the triples $\tuple{w_1, s, u_2}$, $\tuple{w_1, \cnv{r}, v_1}$ and $\tuple{w_1, \cnv{r}, v_2}$ are inserted into $Q$ and $r^\mIp(v_1,w_1)$ is set to~$0.4$. 
    \end{itemize} 

\item The fifth iteration of the inner ``while'' loop is executed with $\tuple{x,R,y} = \tuple{w_1, \cnv{r}, v_1}$. In this iteration, for $B = \bB.\findBlock(v_1, 0.4) = \{v_1\}_1$, we have $B.\repr = v_1$ and $\mIp$ remains unchanged.

\item The sixth iteration of the inner ``while'' loop is executed with $\tuple{x,R,y} = \tuple{v_2, r, w_1}$. In this iteration, for $B = \bB.\findBlock(w_1, 0.4) = \{w_1\}_1$, we have $B.\repr = w_1$ and $r^\mIp(v_2,w_1)$ is set to~$0.4$.

\item The seventh iteration of the inner ``while'' loop is executed with $\tuple{x,R,y} = \tuple{v_2, r, v_1}$, in which $r^\mIp(v_2,v_1)$ is set to~$0.4$.

\item The eighth and ninth iterations of the inner ``while'' loop are executed with $\tuple{x,R,y}$ equal to $\tuple{v_1, \cnv{r}, v_2}$ and $\tuple{w_1, \cnv{r}, v_2}$, respectively, and do not make any changes.

\item The 10\textsuperscript{th} iteration of the inner ``while'' loop is executed with $\tuple{x,R,y} = \tuple{u_2, r, v_3}$, in which:
    \begin{itemize}
    \item for $B = \bB.\findBlock(v_3, 0.4) = \{v_3\}_1$, since $B.\repr = \Null$, $v_3$ is added to $\Delta^\mIp$ and $B.\repr$ is set to $v_3$; 
    \item the triples $\tuple{v_3, r, w_2}$, $\tuple{v_3, r, v_3}$, $\tuple{v_3, \cnv{r}, v_3}$ and $\tuple{v_3, \cnv{r}, u_2}$ are inserted into $Q$ and $r^\mIp(u_2,v_3)$ is set to~$0.4$. 
    \end{itemize} 
    
\item The 11\textsuperscript{th} iteration of the inner ``while'' loop is executed with $\tuple{x,R,y} = \tuple{v_3, r, w_2}$, in which:
    \begin{itemize}
    \item for $B = \bB.\findBlock(w_2, 0.4) = \{w_2\}_1$, since $B.\repr = \Null$, $w_2$ is added to $\Delta^\mIp$ and $B.\repr$ is set to $w_2$; 
    \item the triples $\tuple{w_2, s, u_1}$ and $\tuple{w_2, \cnv{r}, v_3}$ are inserted into $Q$ and $r^\mIp(v_3,w_2)$ is set to~$0.4$. 
    \end{itemize} 
    
\item The 12\textsuperscript{th} iteration of the inner ``while'' loop is executed with $\tuple{x,R,y} = \tuple{v_3, r, v_3}$. In this iteration, for $B = \bB.\findBlock(v_3, 0.4) = \{v_3\}_1$, we have $B.\repr = v_3$ and $r^\mIp(v_3,v_3)$ is set to~$0.4$.

\item The next four iterations of the inner ``while'' loop, executed with $\tuple{x,R,y}$ equal to $\tuple{v_3, \cnv{r}, v_3}$, $\tuple{w_2, \cnv{r}, v_3}$, $\tuple{v_2, \cnv{r}, u_1}$, and $\tuple{v_3, \cnv{r}, u_2}$, respectively, do not make any changes.
\end{itemize}

Consider the sixth iteration of the ``foreach'' loop at statement~\ref{step: alg2 13}, with $d = 0.3$:
\begin{itemize}
\item The first iteration of the inner ``while'' loop is executed with $\tuple{x,R,y} = \tuple{u_1,r,v_1}$, in which we have $\bB.\findBlock(v_1, 0.3) = B_v$, with $B_v.\repr = v_2$, and $\mIp$ remains unchanged.

\item The second iteration of the inner ``while'' loop is executed with $\tuple{x,R,y} = \tuple{v_1, \cnv{r}, u_1}$, in which $r^\mIp(u_1,v_1)$ is set to~$0.3$.
\end{itemize}
 
Consider the seventh iteration of the ``foreach'' loop at statement~\ref{step: alg2 13}, with $d = 0.2$: 
\begin{itemize}
\item The first iteration of the inner ``while'' loop is executed with $\tuple{x,R,y} = \tuple{u_1,\cnv{s},w_2}$, in which we have $\bB.\findBlock(w_2, 0.2) = B_w$, $B_w.\repr = w_1$, and $s^\mIp(w_1,u_1)$ is set to $0.2$.
\item The second iteration of the inner ``while'' loop is executed with $\tuple{x,R,y} = \tuple{u_2,\cnv{s},w_1}$, in which we have $B = \bB.\findBlock(w_1, 0.2) = B_w$ and $s^\mIp(w_1,u_2)$ is set to~$0.2$.
\item The third iteration of the inner ``while'' loop, executed with $\tuple{x,R,y} = \tuple{w_1, s, u_2}$, does not make any changes.
\item The fourth iteration of the inner ``while'' loop is executed with $\tuple{x,R,y} = \tuple{w_2, s, u_1}$, in which we have $\bB.\findBlock(u_1, 0.2) = B_u$, $B_u.\repr = u_1$, and $s^\mIp(w_2,u_1)$ is set to $0.2$.
\end{itemize}

After the seventh iteration, the ``foreach'' loop at statement~\ref{step: alg2 13} terminates, and the algorithm returns $\mIp$, which has the same domain size as $\mI$. 
Note also that, compared to $\mI$, $\mIp$ contains one additional non-zero role instance: $s^\mIp(w_1, s, u_1) = 0.2$.
\myend
\end{example}    

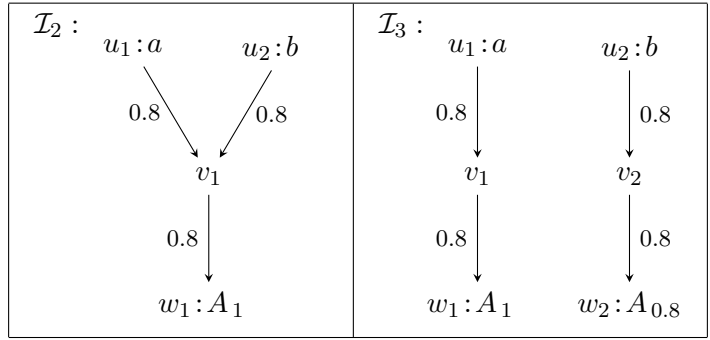
\begin{figure}[t]
\begin{center}
\begin{tabular}{|c|c|}
\hline
\begin{tikzpicture}[->,>=stealth,auto,black]
	\node (I) {$\mI_2:$};
	\node (bI) [node distance=0.3cm, below of=I] {};
	\node (u1) [node distance=1cm, right of=bI] {$u_1\!:\!a$};
	\node (u1b) [node distance=1.7cm, below of=u1] {};
	\node (v1) [node distance=1cm, right of=u1b] {$v_1$};
	\node (w1) [node distance=1.7cm, below of=v1] {$w_1\!:\!A_{\,1}\ \,$};
	\node (u2) [node distance=2cm, right of=u1] {$u_2\!:\!b\quad$};
	\draw (u1) to node [left]{\footnotesize{0.8}} (v1);	
	\draw (v1) to node [left]{\footnotesize{0.8}} (w1);	
	\draw (u2) to node [right]{\footnotesize{0.8}} (v1);	
\end{tikzpicture}
&
\begin{tikzpicture}[->,>=stealth,auto,black]
	\node (I) {$\mI_3:$};
	\node (bI) [node distance=0.3cm, below of=I] {};
	\node (u1) [node distance=1cm, right of=bI] {$u_1\!:\!a$};
	\node (v1) [node distance=1.7cm, below of=u1] {$v_1$};
	\node (w1) [node distance=1.7cm, below of=v1] {$w_1\!:\!A_{\,1}\ \,$};
	\node (u2) [node distance=2cm, right of=u1] {$u_2\!:\!b$};
	\node (v2) [node distance=2cm, right of=v1] {$v_2$};
	\node (w2) [node distance=2cm, right of=w1] {$w_2\!:\!A_{\,0.8}$};
	\draw (u1) to node [left]{\footnotesize{0.8}} (v1);	
	\draw (v1) to node [left]{\footnotesize{0.8}} (w1);	
	\draw (u2) to node [right]{\footnotesize{0.8}} (v2);	
	\draw (v2) to node [right]{\footnotesize{0.8}} (w2);	
\end{tikzpicture}
\\
\hline
\end{tabular}
\caption{Fuzzy interpretation mentioned in Example~\ref{example: YNSJA 2}.\label{fig: YNSJA 2}}
\end{center}
\end{figure}

\begin{example}\label{example: HGRJS 4}
Executing Algorithm~\ref{alg2} for the fuzzy interpretation $\mI$ specified in Example~\ref{example: HGRJS} and illustrated in Figure~\ref{fig: HGRJS}, using $\Phi = \{I,O\}$ (i.e., with inverse roles and nominals) and $\gamma = 1$ (resp.~$\gamma = 0.3$), results in the fuzzy interpretation denoted as $\mI_4$ (resp.~$\mI_5$) and illustrated in Figure~\ref{fig: HGRJS 2}. In both cases, the compact fuzzy partition corresponding to the greatest fuzzy $\Phi$-auto-bisimulation of~$\mI$ is 
\(
\{
  \{u_1\}_1, \{u_2\}_1, \{v_3\}_1, \{w_1\}_1, \{w_2\}_1, 
  \{\{v_1\}_1, \{v_2\}_1\}_{0.3}
\}_0.
\)
\myend
\end{example}

\begin{example}\label{example: YNSJA 2}
Consider the executions of Algorithm~\ref{alg2} for the fuzzy interpretation $\mI$ specified in Example~\ref{example: YNSJA} and illustrated in Figure~\ref{fig: YNSJA}, using $\gamma = 0.8$ and each of the values of $\Phi$ considered below.

In the case $\Phi = \{I\}$ (i.e., with inverse roles), the resulting fuzzy interpretation $\mIp$ is the same as in the case $\Phi = \emptyset$, and is illustrated in Figure~\ref{fig: YNSJA}. The compact fuzzy partition corresponding to the greatest fuzzy $\Phi$-auto-bisimulation of~$\mI$ is also the same as in the case $\Phi = \emptyset$ (see Example~\ref{example: YNSJA}).

In the case $\Phi = \{O\}$ (i.e., with nominals), the resulting fuzzy interpretation is denoted as $\mI_2$ and illustrated in Figure~\ref{fig: YNSJA 2}. The compact fuzzy partition corresponding to the greatest fuzzy $\Phi$-auto-bisimulation of~$\mI$ is
\(
\{
  \{u_1\}_1, \{u_2\}_1, 
  \{\{v_1\}_1, \{v_2\}_1\}_{0.8},
  \{\{w_1\}_1, \{w_2\}_1\}_{0.8}
\}_0.
\)

In the case $\Phi = \{I, O\}$ (i.e., with inverse roles and nominals), the resulting fuzzy interpretation is denoted as $\mI_3$ and is also illustrated in Figure~\ref{fig: YNSJA 2}. The compact fuzzy partition corresponding to the greatest fuzzy $\Phi$-auto-bisimulation of~$\mI$ is
\(
\{
  \{u_1\}_1, \{u_2\}_1, 
  \{v_1\}_1, \{v_2\}_1,
  \{w_1\}_1, \{w_2\}_1
\}_0.
\)
\myend
\end{example}

It is worth noting that the richer the considered FDL is, the weaker the domain reduction becomes; conversely, the lower the threshold $\gamma$ is, the stronger the domain reduction becomes.

\begin{example}
Let $\RN$, $\CN$, $\IN$ and $\mI$ be as in Example~\ref{example: JHKSM}. 
Applying Algorithm~\ref{alg2} to $\mI$ using $\Phi = \emptyset$ and $\gamma = 1$ results in a fuzzy interpretation $\mIp$, which can be obtained from $\mI$ by deleting the individuals $\simulation$ and $\minimization$, and setting $\isRelatedTo^\mIp(\bisimulation,\bisimulation) = 0.9$, $\isRelatedTo^\mIp(\fAutomata,\fAutomata) = 0.9$, and $\isRelatedTo^\mIp(\fDescLogic,\fDescLogic) = 0.8$. Further details can be found using our implementation~\cite{min2025-prog} together with the input file {\em in4.txt}.
\myend
\end{example}


\end{document}